\title{Parameterized Max Min Feedback Vertex Set\footnote{An extended abstract of this work was presented at the
48th International Symposium on Mathematical Foundations of Computer Science (MFCS 2023)~\cite{mfcs/LampisMV23}.}}
\author{Michael Lampis}
{Universit\'{e} Paris-Dauphine, PSL University, CNRS UMR7243, LAMSADE, Paris, France}
{michail.lampis@dauphine.fr}
{https://orcid.org/0000-0002-5791-0887}{}
\author{Nikolaos Melissinos}
{Department of Theoretical Computer Science, Faculty of Information Technology, Czech Technical University in Prague, Czech Republic}
{nikolaos.melissinos@fit.cvut.cz}
{https://orcid.org/0000-0002-0864-9803}
{Supported by the CTU Global postdoc fellowship program.}
\author{Manolis Vasilakis}
{Universit\'{e} Paris-Dauphine, PSL University, CNRS UMR7243, LAMSADE, Paris, France}
{emmanouil.vasilakis@dauphine.eu}
{https://orcid.org/0000-0001-6505-2977}{}
\authorrunning{M. Lampis, N. Melissinos, and M. Vasilakis}
\keywords{ETH, Feedback vertex set, Parameterized algorithms, Treewidth}
\begin{document}

\maketitle

\begin{abstract}
Given a graph $G$ and an integer $k$, \textsc{Max Min FVS} asks whether there exists a
minimal set of vertices of size at least $k$ whose deletion destroys all cycles.
We present several results that improve upon the state of the art of
the parameterized complexity of this problem with respect to both structural
and natural parameters.

Using standard DP techniques, we first present an algorithm of time $\textrm{tw}^{O(\textrm{tw})}n^{O(1)}$,
significantly generalizing a recent algorithm of Gaikwad et al.~of time $\textrm{vc}^{O(\textrm{vc})}n^{O(1)}$,
where $\textrm{tw}, \textrm{vc}$ denote the input graph's treewidth and vertex cover respectively.
Subsequently, we show that both of these algorithms are essentially optimal,
since a $\textrm{vc}^{o(\textrm{vc})}n^{O(1)}$ algorithm would refute the ETH.

With respect to the natural parameter $k$, the aforementioned recent work
by Gaikwad et al. claimed an FPT branching algorithm with complexity
$10^k n^{O(1)}$.
We point out that this algorithm is incorrect and present a
branching algorithm of complexity $9.34^k n^{O(1)}$.
\end{abstract}

\section{Introduction}

We consider a MaxMin version of the well-studied \textsc{Feedback Vertex Set} problem where,
given a graph $G=(V,E)$ and a target size $k$, we are asked to find a
set of vertices $S$ with the following properties: (i) every cycle of $G$
contains a vertex of $S$, that is, $S$ is a feedback vertex set,
(ii) no proper subset of $S$ is a feedback vertex set, that is, $S$ is minimal,
and (iii) $|S| \geq k$.
Although much less studied than its minimization cousin, {\mmFVS} has
recently attracted attention in the literature as part of a broader study of
MaxMin versions of standard problems, such as \textsc{Maximum
Minimal Vertex Cover} and \textsc{Upper Dominating Set}.
The main motivation of this line of research is the search
for a deeper understanding of the performance of simple greedy algorithms:
given an input, we would like to compute what is the worst possible solution
that would still not be improvable by a simple heuristic, such as removing
redundant vertices.
Nevertheless, over recent years MaxMin problems have been
found to possess an interesting combinatorial structure of their own and have
now become an object of more widespread study (we survey some such results below).

It is not surprising that {\mmFVS} is known to be NP-complete and is in fact
significantly harder than \textsc{Minimum FVS} in most respects, such as its
approximability or its amenability to algorithms solving special cases. Given
the problem's hardness, in this paper we focus on the parameterized complexity
of \mmFVS, since parameterized complexity is one of the main tools for dealing
with computational intractability.%
\footnote{Throughout the paper we assume that
the reader is familiar with the basics of parameterized complexity, as given in
standard textbooks~\cite{CyganFKLMPPS15}.}
We consider two types of parameterizations: the natural parameter $k$ and
the parameterization by structural width measures, such as treewidth.
In order to place our results into perspective, we first recall the current state of the art.

\subparagraph*{Previous work.} {\mmFVS} was first shown to be NP-complete even on
graphs of maximum degree $9$ by Mishra and Sikdar~\cite{MishraS01}. This was
subsequently improved to NP-completeness for graphs of maximum degree $6$ by Dublois et
al.~\cite{jcss/DubloisHGLM22}, who also present an approximation algorithm with
ratio $n^{2/3}$ and proved that this is optimal unless P=NP. A consequence of
the polynomial-time approximation algorithm of~\cite{jcss/DubloisHGLM22} was
the existence of a kernel of order $\bO(k^3)$, which implied that the problem is
fixed-parameter tractable with respect to the natural parameter $k$. Some
evidence that this kernel size may be optimal was later given by~\cite{AraujoBCS22}.
We note also that the problem can easily be seen to be FPT
parameterized by treewidth (indeed even by clique-width) as the property that a
set is a minimal feedback vertex set is MSO$_1$-expressible, so standard
algorithmic meta-theorems apply.

Given the above, the state of the art until recently was that this problem was
known to be FPT for the two most well-studied parameterizations (by $k$ and by
treewidth), but concrete FPT algorithms were missing. An attempt to advance
this state of the art and systematically study the parameterized complexity of
the problem was recently undertaken by Gaikwad et
al.~\cite{arxiv/GaikwadKMST22}, who presented exact algorithms for this problem
running in time $10^k n^{\bO(1)}$ and $\vc^{\bO(\vc)} n^{\bO(1)}$, where $\vc$ is the
input graph's vertex cover, which is known to be a (much) more restrictive
parameter than treewidth. Leveraging the latter algorithm,
\cite{arxiv/GaikwadKMST22} also presents an FPT approximation scheme which can
$(1-\varepsilon)$-approximate the solution in time $2^{\bO(\vc / \varepsilon)} n^{\bO(1)}$,
that is, single-exponential time with respect to $\vc$.

\subparagraph*{Our contribution.} We begin our work by considering {\mmFVS}
parameterized by the most standard structural parameter, treewidth.
We observe that, using standard DP techniques,
we can obtain an algorithm running in time $\tw^{\bO(\tw)} n^{\bO(1)}$, that is,
slightly super-exponential with respect to treewidth.
Note that this slightly super-exponential running time is already present in the
$\vc^{\bO(\vc)} n^{\bO(1)}$ algorithm of~\cite{arxiv/GaikwadKMST22}, despite the
fact that vertex cover is a much more severely restricted parameter.
Hence, our algorithm generalizes the algorithm of~\cite{arxiv/GaikwadKMST22} without a
significant sacrifice in the running time.

Despite the above, our main contribution with respect to structural parameters
is not our algorithm for the parameter treewidth, but an answer to a question that
is naturally posed given the above: can the super-exponential dependence
present in both our algorithm and the algorithm of~\cite{arxiv/GaikwadKMST22}
be avoided, that is, can we obtain a $2^{\bO(\tw)} n^{\bO(1)}$ algorithm?  We show
that this is likely impossible, as the existence of an algorithm running in
time $\vc^{o(\vc)} n^{\bO(1)}$ is ruled out by the ETH (and hence also the
existence of a $\tw^{o(\tw)} n^{\bO(1)}$ algorithm). This result is likely to be
of wider interest to the parameterized complexity community, where one of the
most exciting developments of the last fifteen years has arguably been the
development of the Cut\&Count technique (and its variations). One of the
crowning achievements of this technique is the design of single-exponential
algorithms for connectivity problems -- indeed an algorithm running in time
$3^{\tw} n$ for \textsc{Minimum FVS} is given in~\cite{CyganNPPRW22}. It has
therefore been of much interest to understand which connectivity problems admit
single-exponential algorithms using such techniques (see e.g.~\cite{BergougnouxBBK20}
and the references within). Curiously, even though
several cousins of \textsc{Minimum Feedback Vertex Set} have been considered in
this context (such as \textsc{Subset Feedback Vertex Set} and
\textsc{Restricted Edge-Subset Feedback Edge Set}~\cite{BergougnouxBBK20}),
for \mmFVS, which is
arguably a very natural variant, it was not known whether a single-exponential
algorithm for the parameter treewidth is possible. Our work thus adds to the
literature a natural connectivity problem where Cut\&Count can provably not be
applied (under standard assumptions).  Interestingly, our lower bound even
applies to the case of vertex cover, which is rare, as most problems tend to
become rather easy under this very restrictive parameter.
 
We then move on to consider the parameterization of the problem by $k$, the
size of the sought solution. Observe that a $k^{\bO(k)} n^{\bO(1)}$ algorithm can
easily be obtained by the results sketched above and a simple win/win argument:
starting with any minimal feedback vertex set $S$ of the given graph $G$, if
$|S| \ge k$ we are done; if not, then $\tw(G) \le k$ and we can solve the problem
using the algorithm for treewidth. It is therefore only interesting to consider
algorithms with a single-exponential dependence on $k$. Such an algorithm, with
complexity $10^k n^{\bO(1)}$, was claimed by~\cite{arxiv/GaikwadKMST22}.
Unfortunately, as we explain in detail in \cref{sec:natural}, this algorithm
contains a significant flaw.%
\footnote{Saket Saurabh, one of the authors of~\cite{arxiv/GaikwadKMST22},
confirmed so via private communication with Michael Lampis.}

Our contribution is to present a corrected version of the algorithm of~\cite{arxiv/GaikwadKMST22},
which also achieves a slightly better running time
of $9.34^k n^{\bO(1)}$, compared to the $10^k n^{\bO(1)}$ of the (flawed) algorithm
of~\cite{arxiv/GaikwadKMST22}. Our algorithm follows the same general strategy
of~\cite{arxiv/GaikwadKMST22}, branching and placing vertices in the forest or
the feedback vertex set. However, we have to rely on a more sophisticated
measure of progress, because simply counting the size of the selected set is
not sufficient. We therefore measure our progress towards a restricted special
case we identify, namely the case where the undecided part of the graph induces
a linear forest. Though this special case sounds tantalizingly simple, we show
that the problem is still NP-complete under this restriction, but obtaining an
FPT algorithm is much easier. We then plug in our algorithm to a more involved
branching procedure which aims to either reduce instances into this special
case, or output a certifiable minimal feedback vertex set of the desired size.

Finally, motivated by the above we note that a blocking point in the design of
algorithms for {\mmFVS} seems to be the difficulty of the extension problem:
given a set $S$, decide if a minimal fvs $S^*$ that extends $S$ exists.
Casel et al.~\cite{CaselFGMS22} showed that this problem is W[1]-hard parameterized by $|S|$.
Intriguingly, however, it is not even known if this problem is in XP, that is,
whether it is solvable in polynomial time for fixed $k$. We show that this is
perhaps not surprising, as obtaining a polynomial-time algorithm in this case would imply the existence of a
polynomial-time algorithm for the notorious $k$-\textsc{in-a-Tree} problem:
given $k$ terminals in a graph, find an induced tree that contains them. Since
this problem was solved for $k=3$ in a breakthrough by Chudnovsky and
Seymour~\cite{ChudnovskyS10}, the complexity for fixed $k\ge 4$ has remained a
big open problem (for example~\cite{LaiLT20} states that ``Solving it in
polynomial time for constant $k$ would be a huge result''). It is therefore
perhaps not surprising that obtaining an XP algorithm for the extension problem
for minimal feedback vertex sets of fixed size is challenging, since such an
algorithm would settle another long-standing problem.

\subparagraph*{Other relevant work.} As mentioned, {\mmFVS} is an example of a wider
class of MaxMin problems which have recently attracted much attention in the
literature, among the most well-studied of which are \textsc{Maximum
Minimal Vertex Cover}~\cite{AraujoBCS22,BonnetLP18,dam/BoriaCP15,siamdm/Zehavi17} and \textsc{Upper
Dominating Set} (which is the standard name for \textsc{Maximum Minimal
Dominating Set})~\cite{AbouEishaHLMRZ18,AraujoBCS23,BazganBCFJKLLMP18,DubloisLP22}.
Besides these problems, MaxMin or MinMax versions of cut and separation problems~\cite{DuarteEHKKLPSS21,HanakaKKY21,Lampis21},
knapsack problems~\cite{FuriniLS17,GourvesMP13},
matching problems~\cite{ChaudharyMP23}, and
coloring problems~\cite{BelmonteKLMO22} have also been studied.

The question of which connectivity problems admit single-exponential algorithms
parameterized by treewidth has been well-studied over the last decade. As
mentioned, the main breakthrough was the discovery of the Cut\&Count technique~\cite{CyganFKLMPPS15},
which gave randomized $2^{\bO(\tw)} n^{\bO(1)}$ algorithms
for many such problems, such as \textsc{Steiner Tree}, \textsc{Hamiltonicity},
\textsc{Connected Dominating Set}, and others. Follow-up work also provided
deterministic algorithms with complexity $2^{\bO(\tw)} n^{\bO(1)}$~\cite{BodlaenderCKN15}.
It is important to note that the discovery of these
techniques was considered a surprise at the time, as the conventional wisdom
was that connectivity problems probably require $\tw^{\bO(\tw)}$ time to be
solved~\cite{LokshtanovMS18}.
Naturally, the topic was taken up with much
excitement, in an attempt to discover the limits of such techniques, including
problems for which they cannot work. In this vein, \cite{Pilipczuk11} gave a
meta-theorem capturing many tractable problems, and also an example problem
that cannot be solved in time $2^{o(\tw^2)} n^{\bO(1)}$ under the ETH. Several
other examples of connectivity problems which require slightly
super-exponential time parameterized by treewidth are now known~\cite{BasteST20,HarutyunyanLM21},
with the most relevant to our work being
the feedback vertex set variants studied in~\cite{BergougnouxBBK20,BonnetBKM19},
as well as the digraph version of the
minimum feedback vertex set problem (parameterized by the treewidth of the underlying graph)~\cite{BonamyKNPSW18}.
The results of our paper seem to
confirm the intuition that the Cut\&Count technique is rather fragile when
applied to feedback vertex set problems, since in many variations or
generalizations of this problem, a super-exponential dependence on treewidth is
inevitable (assuming the ETH).

\section{Preliminaries} \label{sec:prelim}
Throughout the paper we use standard graph notation~\cite{Diestel17}.
A \emph{multigraph} $G$ is a graph which is permitted to have multiple edges with the same end nodes,
thus, two vertices may be connected by more than one edge.
Given a (multi)graph $G$, where $e = \{u,v\} \in E(G)$ is an edge connecting distinct vertices $u$ and $v$,
the \emph{contraction} of $e$ results in a new graph $G' = G / uv$ such that $V(G') = (V(G) \setminus \braces{u,v}) \cup \braces{w}$,
while for each edge $\braces{u,x}$ or $\braces{v,x}$ in $E(G)$ with $x \in V(G) \setminus \{u,v\}$,
there exists an edge $\braces{w,x}$ in $E(G')$.
Any edge $e \in E(G)$ not incident to $u,v$ also belongs to $E(G')$.
If $u$ and $v$ were additionally connected by an edge apart from $e$, then $w$ has a self loop.
Moreover, for vertex $u \in V(G)$,
let $\deg_{X} (u)$ denote its degree in $G[X \cup \braces{u}]$, where $X \subseteq V(G)$.
Additionally, given a subgraph $G'$ of $G$, $N_{G'}(u)$ denotes the \emph{neighborhood} of $u$ in $G'$,
where we omit the subscript when the graph is clear from the context.
A feedback vertex set $S$ of $G$ is \emph{minimal} if and only if $\forall s \in S$,
$G[(V(G) \setminus S) \cup \braces{s}]$ contains a cycle, namely a \emph{private cycle} of $s$~\cite{DubloisLP22}.

Let $\Z$ denote the set of all integers, while $\N$ denotes the set of positive integers.
For $x, y \in \Z$, let $[x, y] = \setdef{z \in \Z}{x \leq z \leq y}$,
while $[x] = [1,x]$.
We make use of a weaker version of the ETH,
which states that $3$-SAT cannot be determined in time $2^{o(n)}$,
where $n$ denotes the number of variables~\cite{jcss/ImpagliazzoPZ01}.
Lastly, we make use of the following theorem.

\begin{theorem}[{\cite[Theorem 3.10]{books/aw/GKP1994}}]\label{thm:ceilings}
    Let $f(x)$ be any continuous, monotonically increasing function with the
    property that if $f(x) \in \Z$, then $x \in \Z$.
    Then, $\ceil{f(\ceil{x})} = \ceil{f(x)}$ whenever $f(\ceil{x})$ and $f(x)$ are defined.
\end{theorem}

\section{Treewidth Algorithm}\label{sec:tw_algo}

We start by presenting an algorithm for {\mmFVS} parameterized by the treewidth of the input graph,
arguably the most well-studied structural parameter.
As a corollary of the lower bound established in \cref{sec:eth_vc_lb},
it follows that the running time of the algorithm is essentially optimal under the ETH.

\begin{theoremrep} \label{thm:fpt-tw-D}
    Given an instance $\mathcal{I} = (G, k)$ of \mmFVS, as well as a nice tree decomposition of $G$ of width \tw,
    there exists an algorithm that decides $\mathcal{I}$ in time $\tw^{\bO(\tw)} n^{\bO(1)}$.
\end{theoremrep}

\begin{proof}
The main idea lies in performing standard dynamic programming on the nodes of the nice tree decomposition.
For a node $t$ of the tree decomposition, let $B_t$ denote its bag,
and $B_t^{\downarrow} \supseteq B_t$ denote the union of the bags in the subtree rooted at $t$.

Let $S^* \subseteq V$ be a minimal feedback vertex set of $G$,
where $F^* = V \setminus S^*$ and $G[F^*]$ is a forest.
For each $u \in S^*$, it holds that there exists a set of vertices $T_u \subseteq F^*$ such that
$G[T_u]$ is a tree and $G[T_u \cup \{u\}]$ is not acyclic, as $u$ has a private cycle
containing at least two of its neighbors in $T_u$.
Our goal is, for each node $t$, to build all partial solutions $S$,
where $S \subseteq B_t^{\downarrow}$ is a (not necessarily minimal) feedback vertex set of $G[B_t^{\downarrow}]$ and
for each $u \in S \setminus B_t$, its neighboring vertices in its private cycle belong to $B_t^{\downarrow} \setminus S$. 
By considering all the partial solutions of the root node of the tree decomposition
we can determine a maximum minimal feedback vertex set of the input graph $G$.

We will store this information using a coloring on the vertices of the graph.
In particular, notice that for any minimal feedback vertex set $S^*$ of $G$ there exists a coloring
$C^* \colon V \to [\tw+1]$ of the vertices of $G$ such that the following hold.
\begin{itemize}
    \item For any pair $u,v \in V \setminus S^*$,
    if $u,v$ are in the same connected component of $G[V \setminus S^*]$, then $C^*[u] = C^*[v]$.
    
    \item For any pair $u,v \in V \setminus S^*$,
    if $u,v$ are in different connected components of $G[V \setminus S^*]$ and $C^*[u] = C^*[v]$,
    then for all nodes $t$ of the tree decomposition it holds that $\{u,v\} \not\subseteq B_t$
    (i.e., there is no bag of the tree decomposition that intersects two distinct connected components that share the same color).
    
    \item For every vertex $s \in S^*$, $s$ has a private cycle
    (i.e., a cycle that contains only vertices of $(V \setminus S^*) \cup \{ s \}$)
    where all vertices $u$ in this cycle have color $C^*[u]=C^*[s]$.
\end{itemize}
\noindent
Any such coloring $C^*$ will be called \emph{valid} for $S^*$.

Now consider a node $t$ of the tree decomposition and a (not necessarily minimal)
feedback vertex set $S$ of $G[B_t^\downarrow]$.
Given $S$ and $t$,
we will store only the colorings $C$ of $B_t^\downarrow$ that are \emph{extendable with respect to $S$ and $t$},
that is, those that could potentially be extended to
a valid coloring for some minimal feedback vertex set $S^*$ of $G$,
such that $S^* \supseteq S$ and $V \setminus S^* \supseteq B_t^\downarrow \setminus S$.

We proceed by identifying two properties of the colorings that are extendable with respect to a given set $S$ and $t$. 

\proofsubparagraph*{Property 1.}
Assume that we are restricted to the subtree $B_t^\downarrow$ of the tree decomposition rooted at node $t$,
and let $S = S^* \cap B_t^\downarrow$ denote the
restriction of a minimal feedback vertex set $S^*$ of $G$ to said subtree.
Now, let $C^*$ be a coloring of $G$ which is valid with respect to $S^*$ and let%
\footnote{In what follows,
we slightly abuse notation and use the vertex set of a connected component to refer to the component itself as well.}
$V_1, V_2$ be two connected components of $G[B_t^\downarrow \setminus S^*]$
(or equivalently, of $G[B_t^\downarrow \setminus S]$)
such that $C^*[v] = C^*[u]$ for all $v \in V_1$ and $u \in V_2$.
Then, one of the following holds:
\begin{itemize}
    \item either there is no node $t'$ in the subtree rooted at node $t$ such that $B_{t'} \cap V_1 \neq \varnothing$ and $B_{t'} \cap V_2 \neq \varnothing$,
    that is, $t'$ does not intersect both $V_1$ and $V_2$,
    \item or $B_t$ intersects both $V_1$ and $V_2$.
\end{itemize}
Notice that the second case can only be when the vertices of $V_1$ and $V_2$ belong to the same connected component of $G[V \setminus S^*]$
without that being the case for the subgraph $G[B_t^\downarrow \setminus S^*]$.

\proofsubparagraph*{Property 2.}
Now consider a minimal feedback vertex set $S^*$ of $G$ and 
a coloring $C$ of $B_t^\downarrow$ that is extendable,
with respect to $S = S^* \cap B^\downarrow_t$ and $t$, to a coloring $C^*$ which is in turn valid for $S^*$.
Property 1 implies that extendable colorings $C$
may color distinct connected components of $G[B_t^\downarrow]$ with the same color,
only if said components both intersect $B_t$.
Additionally, for any vertex $v \in S \setminus B_t$,
it holds that its two neighbors in its private cycle must be in $B_t^\downarrow$;
if that is not the case, $v$ cannot have a private cycle,
since it is not incident to any vertex of $V \setminus B_t^\downarrow$.
We will refer to these two neighbors of $v$ as \emph{interesting}.
Notice that both interesting neighbors of $v$ must have color $C[v]$ and
either be in the same connected component of $G[B_t^\downarrow \setminus S]$,
or in two distinct connected components of $G[B_t^\downarrow \setminus S]$ that both intersect $B_t$.

A triplet $(t,S,C)$ will be called a \emph{valid partial solution} if
(i) $S$ is a feedback vertex set of $G[B_t^\downarrow]$,
and
(ii) $C$ is an extendable coloring of $B_t^\downarrow$ with respect to $S$ and $t$. 
In that case,
by definition it holds that $S$ and $C$ have the potential to be restrictions
of a minimal feedback vertex set $S^*$ of $G$ and of a coloring $C^* \colon V \to [\tw+1]$ in $B^\downarrow_t$,
where $C^*$ is valid for $S^*$. 
Whenever we refer to a \emph{potential extension} of $S$ and $C$, or to the \emph{potential final forest},
we assume the existence of such $S^*$ and $C^*$ and refer to those, or to the forest $G[V \setminus S^*]$ respectively.

Notice that the discussion so far implies that the total number of valid partial solutions $(t,S,C)$
is not bounded by any function of $\tw$.
To cope with this issue,
we will define an FPT number of types of solutions and we will keep one pair $(t,S,C)$ per type.
In particular, we will keep the one of the largest feedback vertex set.

To this end, given a valid partial solution $(t,S,C)$ 
we define a function $D \colon S \cap B_t \to \{0,1,2\}$ and
a partition $\mathcal{F}$ of $B_t \setminus S$.
We note that given the partial solution $(t,S,C)$, both $D$ and $\mathcal{F}$ will be computable in polynomial time.

Regarding the function $D$, it indicates for each vertex $v \in S\cap B_t$ how many vertices
of $B_t^\downarrow \setminus S$ could be its neighbors in a potential private cycle of $v$.
In particular, we consider the following two cases.
In the first case, assume that there exist two vertices $v_1,v_2 \in N(v)$
belonging to the same connected component $U$ of $G[B_t^{\downarrow} \setminus S]$
with $C[v_1] = C[v_2] = C[v]$.
In that case, assuming there is a minimal feedback vertex set $S^*$ of
$G$ such that $S^* \supseteq S$ and $V \setminus S^* \supseteq B_t^\downarrow \setminus S$,
it holds that $v_1$ and $v_2$ are neighbors of $v \in S^*$ in its private cycle,
thus we set $D[v] = 2$.
In the second case, no such two vertices exist, that is,
every same-colored neighbor of $v$ in $B_t^{\downarrow} \setminus S$
belongs to a distinct connected component of $G[B_t^{\downarrow} \setminus S]$.
Then, due to Property 2, it suffices to set $D[v] = \min \{2, \ell\}$,
where $\ell$ is equal to the number of same-colored neighbors of $v$ in $B_t^{\downarrow} \setminus S$
that belong to a connected component $U$ of $G[B_t^\downarrow \setminus S]$
such that $U \cap B_t \neq \varnothing$.

As for the partition $\mathcal{F}$ of $B_t \setminus S$, it dictates the connectivity of the vertices
of $B_t \setminus S$ in the graph $G[B_t^\downarrow \setminus S]$.
In particular, we define $\mathcal{F} = \setdef{F_i}{i \in [q]}$,
with $q \le \tw+1$, to be a partition of
$B_t \setminus S$ such that
vertices $u$ and $v$ belong to the same set $F \in \mathcal{F}$ if and only if
they belong to the same connected component of $G[B_t^\downarrow \setminus S]$.

\proofsubparagraph*{Types.}
Consider two valid partial solutions $(t,S_1,C_1)$ and $(t,S_2,C_2)$.
Let $D_i$ and $\mathcal{F}_i$ be the table and partition defined from $(t,S_i,C_i)$, where $i \in [2]$.
We will say that $(t,S_1,C_1)$ and $(t,S_2,C_2)$ have the same type if:
\begin{itemize}
    \item $S_1 \cap B_t = S_2 \cap B_t$,
    \item $C_1[u] = C_2[u]$ for all $u \in B_t$,
    \item $D_1[u] = D_2[u]$ for all $u \in S_1 \cap B_t$,
    \item $\mathcal{F}_1$ and $\mathcal{F}_2$ define the same partition on the vertices of
    $B_t \setminus S_1$.
\end{itemize}
For any valid partial solution $(t,S,C)$,
we will say that $(t,S,C)$ is of type $(S_t,C_t, D_t, \mathcal{F}_t)$,
where $S_t = S\cap B_t$, $C_t$ is the restriction of $C$ in $B_t$,
while $D_t$ and $\mathcal{F}_t$ are the table and the partition as previously defined.
Notice that the type of any partial solution can be computed in $n^{\bO(1)}$ time.
In what follows, we assume that for any node $t$ of the tree decomposition,
when we build two valid partial solutions $(t,S_1,C_1)$ and $(t,S_2,C_2)$ of the same type,
we only keep the one with the larger feedback vertex set,
namely $(t,S_1,C_1)$ if $|S_1| \ge |S_2|$ and $(t,S_2,C_2)$ otherwise. 

We now proceed to explaining how to compute the partial solutions for each node of the tree decomposition.

\proofsubparagraph*{Leaf Nodes.}
Since the bags of Leaf Nodes are empty, it follows that we keep a partial solution $(t,S,C)$ where $t$ is the considered leaf node and 
both $S$ and $C$ are empty.
Note that this suffices for any partial solution we need to keep.

\proofsubparagraph*{Introduce Nodes.}
Let $t$ be an Introduce Node,
where $t'$ denotes its child node and $u$ the newly introduced vertex.

We build the valid partials solutions for $t$ as follows.
For each partial solution $(t',S', C')$ we extend $S'$ and $C'$ by considering all valid options
regarding the introduced vertex $u$. We discard any invalid partial solution.
In the following, let $C_i$ denote the extension of $C'$ to $B_{t'}^{\downarrow} \cup \{u\}$
where $C_i[u] = i$, for all $i \in [\tw+1]$.

First consider the case where we are extending $(t',S', C')$ by adding $u$ to the partial solution $S'$.
To this end, for each $i \in [\tw+1]$ we create a partial solution $(t,S' \cup \{u\}, C_i)$,
in which case the private cycle of $u$ is using vertices of color $i$.
Notice that the inclusion of $u$ to $S'$ does not create any invalid partial solutions.

Next, we extend $(t',S', C')$ by placing $u$ in the forest.
Once again, we will consider all different colorings of $u$.
Fix $i \in [\tw+1]$.
Note that, if there exists a vertex $v \in N(u) \cap (B_t \setminus S')$ such that $C_i[v] \neq i$,
then $(t,S', C_i)$ is an invalid partial solution,
as we use multiple colors for the same connected component of $G[B_t^{\downarrow} \setminus S']$, thus we discard it.
Furthermore,
if $u$ has at least two neighbors $v_1,v_2 \in N(u) \cap (B_t \setminus S')$ such that $v_1, v_2$ are
connected in $G[B_{t'}^{\downarrow} \setminus S']$, 
then $G[B^\downarrow_t \setminus S']$ contains a cycle, thus $(t,S', C_i)$ is an invalid partial solution.
If none of the above holds, then $(t,S',C_i)$ is a non-discarded valid partial solution.

\proofsubparagraph*{Join Nodes.}
Let $t$ be a Join Node and $t_1$ and $t_2$ denote its two children. 
First, we explain how we create the partial solutions for $t$ by
using the previously built partial solutions for nodes $t_1$ and $t_2$. 

Let $(t_1,S_1,C_1)$ and $(t_2,S_2,C_2)$ be two partial solutions stored for $t_1$ and $t_2$ respectively,
with $S_1 \cap B_{t_1} = S_2 \cap B_{t_2}$ and $C_1[v] = C_2[v]$ for all $v \in B_t$. 
Consider the partial solution $(t,S,C)$ where
\begin{itemize}
    \item $C[u] = C_1[u]$ if $u \in B_{t_1}^\downarrow$,
    \item $C[u] = C_2[u]$ if $u \in B_{t_2}^\downarrow$,
    \item and $S=(S_1 \cup S_2)$.
\end{itemize}
Note that $(t,S,C)$ is a valid partial solution if $B_t^{\downarrow} \setminus S$ is acyclic
and this can be checked in polynomial time; if that is not the case we discard it.

\proofsubparagraph*{Forget Nodes.}
Let $t$ be a Forget Node, where $t'$ denotes its child node and $u$ the forgotten vertex.
Again, we start by explaining how we create the partial solutions for $t$ by using
the previously built partial solutions for $t'$.
To this end, for each $(t',S',C')$ we have stored,
we create the partial solution $(t,S',C')$ and check whether it is valid.
To do so, we distinguish between the cases $u \in S'$ and $u \notin S'$.

If $u \in S'$, then we need to verify whether $u$ has found at least $2$ of its neighbors
which are included in its private cycle in the potential final solution.
This can be done in polynomial time by checking whether $D_{t'}[u] = 2$,
where $(S'_{t'},C'_{t'},D_{t'},\mathcal{F}_{t'})$ denotes the type of $(t',S',C')$;
if that is indeed the case, we keep $(t,S',C')$, otherwise we discard it.

Now we consider the case where $u \notin S'$.
Here, we need to check whether the vertices of $\setdef{v \in B_{t'}\setminus S'}{C'[v] = C'[u]}$
can still be in the same connected component of the potential final forest after the removal of $u$. 
We consider two cases, either for all $v \in B_t \setminus S'$ it holds that $C'[v] \neq C'[u]$,
or not.

In the first case, there is no vertex in $B_t$ that should be in the same connected component of the potential final forest as $u$. Therefore, we keep $(t,S',C')$.

In the latter, let $U \subseteq B_t^\downarrow \setminus S'$ be the connected component of
$G[B_t^\downarrow \setminus S']$ where $u \in U$.
It suffices to check whether $U \cap B_t = \varnothing$.
If $U \cap B_t = \varnothing$, we discard $(t,S',C')$ as $u$ cannot be in the same connected component as
the rest of the vertices of $\setdef{v \in B_{t'} \setminus S'}{C'[v] = C'[u]}$ in a potential extension of $(S',C')$.
If $U \cap B_t \neq \varnothing$, we store $(t,S',C')$ as it is a valid partial solution.

We now proceed to proving the correctness of the described algorithm.
In particular, we will show that in order to find a minimal feedback vertex set of $G$ of maximum size,
it suffices to keep for each node one partial solution per type.

\begin{lemma}\label{lemma:tw_algo_correctness}
    Let $(t,S,C)$ be a valid partial solution.
    The algorithm builds a valid partial solution $(t,\bar{S}, \bar{C})$ of the same type,
    where $|\bar{S}| \ge |S|$.
\end{lemma}

\begin{proof}
    Let $(t,S,C)$ be a valid partial solution and $(S_t,C_t,D_t,\mathcal{F}_t)$ its type.
    When we consider the restriction of $(t,S,C)$ in a child node $t'$ of $t$
    we refer to the partial solution $(t',S',C')$ where $S' = S \cap B_{t'}^\downarrow$ and
    $C' \colon B_{t'}^\downarrow \to [\tw+1]$ is the restriction of $C$ in $B_{t'}^\downarrow$.
    Notice that since $(t,S,C)$ is a valid partial solution,
    $(t',S',C')$ is also a valid partial solution.
    We will deal with each kind of node separately and prove the statement by induction.
    
    \proofsubparagraph*{Leaf Nodes.}
    Since the bags of Leaf Nodes are empty, the statement is trivially true.
    Notice that the leaf nodes correspond to the base case of the induction.

    \proofsubparagraph*{Introduce Nodes.}
    Let $t$ be an Introduce Node, where $t'$ is its child node and $u$ is the newly introduced vertex. 
    Assume that the claim holds for node $t'$.
    In that case, we have stored a partial solution $(t',\bar{S'},\bar{C'})$ of the same type as $(t',S', C')$ such that $|\bar{S'}|\ge |S'|$.
    We will prove that, when the algorithm extends $(t',\bar{S'},\bar{C'})$,
    it builds a valid partial solution $(t,\bar{S},\bar{C})$ that is of the same type as $(t,S,C)$ with $|\bar{S}| \ge |S|$.

    Initially consider the case where $u \in S$ and $C[u] = c$.
    Notice that the algorithm always creates a \emph{valid} partial solution $(t,\bar{S}, \bar{C})$ for node $t$ that
    extends $(t',\bar{S'},\bar{C'})$ and where $u \in \bar{S}$ and $\bar{C}[u] = c$.
    Let $(\bar{S}_t,\bar{C}_t, \bar{D}_t, \bar{\mathcal{F}}_t)$ denote the type of $(t,\bar{S} , \bar{C})$,
    in which case it suffices to prove that this is the same as the type of $(t,S,C)$,
    which is $(S_t,C_t,D_t,\mathcal{F}_t)$.
    Additionally,
    let $(S'_{t'},C'_{t'}, D'_{t'}, \mathcal{F}'_{t'})$ be the type of $(t',S',C')$ and $(t',\bar{S'},\bar{C'})$.

    First, observe that $S_t = S'_t \cup \{u\} = \bar{S}_t$,
    as well as that $C_t[v] = \bar{C}_t[v]$ for all $v \in B_t$
    (as they both extend $C'_{t'}$ and assign the color $c$ to $u$).    

    Now we deal with the partitions $\mathcal{F}_t$ and $\bar{\mathcal{F}}_t$.
    Notice that adding $u$ to the feedback vertex set does not affect the connectivity.
    Consequently, $\mathcal{F}_t$, $\mathcal{F}'_{t'}$, and $\bar{\mathcal{F}}_t$ define
    the same partition into connected components for the vertices of $B_{t'} \setminus (S \cup \{u\}) = B_t \setminus S$.

    Finally, we show that $D_t[v] = \bar{D}_t[v]$ for all $v \in S \cap B_t$.
    Notice that this holds for all $v \in S \cap B_{t'}$,
    as adding $u$ to the feedback vertex set does not affect the interesting neighbors of any other vertex.
    It remains to argue for vertex $u$,
    for which it holds that%
    \footnote{Recall that since $u$ is the introduced vertex,
    it has no neighbors in $B_t^\downarrow \setminus B_t$.}
    $D_t[u] = \min\{2, |N(u) \cap \setdef{v \in B_t \setminus S_t}{C[v] = c}|\}$ and
    $\bar{D}_t[u] = \min\{2, |N(u) \cap \setdef{v \in B_t \setminus S_t}{\bar{C}[v] = c}|\}$,
    and $D_t[u] = \bar{D}_t[u]$ follows from the fact that both $C$ and $\bar{C}$ agree in $B_t$.

    Consequently, the created partial solution $(t,\bar{S}, \bar{C})$ has the same type as $(t,S,C)$.

    Now we deal with the case where $u \notin S$ and $C[u]= c$.
    Notice that the algorithm always creates a (possibly invalid) partial solution
    $(t,\bar{S}, \bar{C})$ that extends $(t',\bar{S'},\bar{C'})$ such that
    \begin{itemize}
        \item $\bar{S} = \bar{S'}$ and 
        \item $\bar{C} \colon B_t^\downarrow \to [\tw+1]$,
        where $\bar{C}[v] = \bar{C'}[v]$ for all $v \in B_{t'}^\downarrow$ and $\bar{C}[u] = c$.
    \end{itemize}
    We claim that if $(t,\bar{S},\bar{C})$ is an invalid partial solution (and is thus discarded),
    then $(t,S,C)$ is also invalid.

    First, observe that $\bar{S} \cap B_t = \bar{S'} \cap B_t = S \cap B_t$,
    and $C_t[v] = \bar{C}_t[v]$ for all $v \in B_t$
    (as they both extend $C'_{t'}$ and assign the color $c$ to $u$). 
    Notice that, while the algorithm extends $(t',\bar{S'}, \bar{C'})$ to $(t,\bar{S},\bar{C})$,
    it discards the latter in two cases:
    \begin{enumerate}
        \item There exists a vertex $v \in B_t \setminus \bar{S}$ such that $v \in N(u)$ and $\bar{C}[v]\neq c$.
        \item The addition of $u$ to the potential forest creates a cycle in $G[B_t^\downarrow \setminus \bar{S}]$.
    \end{enumerate}
    We will show that neither of these may happen if $(t,S,C)$ is a valid partial solution.

    Assume that the first case holds, and there exists a vertex $v \in B_t \setminus \bar{S}$ such that
    $v \in N(u)$ and $\bar{C}[v] \neq c$.
    Since $S \cap B_t = \bar{S} \cap B_t$ and the colorings $C$ and $\bar{C}$
    agree in the vertices of $B_t$,
    it holds that $v \in B_t \setminus S$ with $v \in N(u)$ and $C[v] \neq c$.
    In that case, there exists a connected component of $G[B_t^\downarrow \setminus S]$ which is not monochromatic,
    thus $(t,S,C)$ cannot be a valid partial solution of $t$.

    In the second case, we will show that $G[B_t^\downarrow \setminus {S}]$ has a cycle
    which contradicts to the assumption that $(t,S,C)$ is a valid partial solution. 
    Let $u - u_1 - \ldots - u_\ell$ be a cycle in $G[B_t^\downarrow \setminus \bar{S}]$ that includes $u$,
    where $u_1, u_\ell \in B_t \setminus \{ u \} = B_{t'}$,
    since $u$ cannot have any neighbors in $B_t^\downarrow \setminus B_t$.
    In that case, $u_1$ and $u_\ell$ are connected in $G[B_t^\downarrow \setminus (\bar{S} \cup \{u\})] = G[B_{t'}^\downarrow \setminus \bar{S}]$,
    which implies that $u_1$ and $u_\ell$ belong to the same set $F \in \mathcal{F}'_{t'}$
    (recall that $(S'_{t'},C'_{t'}, D'_{t'}, \mathcal{F}'_{t'})$ is the type of $(t', \bar{S}',\bar{C}')$). 
    Furthermore, because $(t',\bar{S}',\bar{C}')$ has the same type as $(t',S',C')$,
    it follows that $u_1$ and $u_{\ell}$ are connected in $G[B^\downarrow_{t'} \setminus S']$.
    Then, since $u \notin S = S'$ and $u$ is incident to both $u_1$ and $u_\ell$,
    $G[B^\downarrow_t \setminus S]$ contains a cycle.

    To prove that $|\bar{S}| \ge |S|$,
    it suffices to observe that $|\bar{S} \setminus \{u\}| = |\bar{S'}| \ge |S'| = |S \setminus \{u\}|$,
    while $u \in \bar{S}$ if and only if $u \in S$.

    \proofsubparagraph*{Join Nodes.}
    Let $t$ be a Join Node and $t_1$ and $t_2$ be its two children. 
    Notice that since $(t,S,C)$ is a valid partial solution,
    the partial solution $(t_1,S_1,C_1)$ (resp.~$(t_2,S_2,C_2)$) which is the restriction of
    $(t,S,C)$ in $B_{t_1}^\downarrow$ (resp.~$B_{t_2}^\downarrow$) is valid.
    Assume that the claim holds for the nodes $t_1$ and $t_2$.
    In that case, we have stored partial solutions $(t_i,\bar{S}_i,\bar{C}_i)$ of the same type as
    $(t_i,S_i,C_i)$ such that $|\bar{S}_i| \ge |S_i|$, where $i \in \{ 1,2 \}$.
    Furthermore, since $B_t = B_{t_1} = B_{t_2}$ and
    $C_1, C_2$ agree with $C$ in $B_t$,
    we have that $C[v] = C_1[v] = C_2[v] = \bar{C}_1[v] = \bar{C}_2[v]$ for all $v \in B_t$.
    For the same reason it also holds that
    $S \cap B_t = S_1 \cap B_t = S_2 \cap B_t = \bar{S}_1 \cap B_t = \bar{S}_2 \cap B_t$. 

    As it constructs the partial solutions for $t$,
    the algorithm will consider the combination of $(t_1,\bar{S}_1,\bar{C}_1)$ and $(t_2,\bar{S}_2,\bar{C}_2)$.
    Since $\bar{S}_1 \cap B_t = \bar{S}_2 \cap B_t$ and $\bar{C}_1[v]= \bar{C}_2[v]$ for all $v \in B_t$,
    the algorithm stores the partial solution $(t,\bar{S}, \bar{C})$ obtained by said combination
    as long as the graph $G[B_t^{\downarrow} \setminus \bar{S}]$ is acyclic,
    where $\bar{S} = \bar{S}_1 \cup \bar{S}_2$ and $\bar{C} \colon B_t^\downarrow \to [\tw+1]$
    such that $\bar{C}[v] = \bar{C}_1[v]$ if $v \in B_{t_1}^\downarrow$, else $\bar{C}[v] = \bar{C}_2[v]$.
    In the following we argue that (i) the graph $G[B_t^{\downarrow} \setminus \bar{S}]$ is acyclic,
    thus $(t,\bar{S}, \bar{C})$ is a valid partial solution,
    and (ii) the partial solution $(t,\bar{S}, \bar{C})$ has the same type as $(t,S,C)$.

    \begin{claim}\label{claim:tw_acyclicity}
        $G[B_t^{\downarrow} \setminus \bar{S}]$ is acyclic.
    \end{claim}

    \begin{claimproof}
        We start with the following definition of which we will make use in our proof.

        \begin{definition}[Swap]\label{definition:swap}
            Let $w_1 - v_1 - \ldots - v_m - w_2$ be $m+2$ consecutive vertices of a cycle in
            $G[B_t^{\downarrow} \setminus \bar{S}]$.
            We will say that a \emph{swap} is happening between $w_1$ and $w_2$ if:
            \begin{itemize}
                \item $\{v_1, \ldots, v_m\} \subseteq B_t$ and 
                \item one vertex of $\{w_1, w_2\}$ belongs to $B^\downarrow_{t_1} \setminus (\bar{S}_{t_1} \cup B_t)$
                and the other to $B^\downarrow_{t_2} \setminus (\bar{S}_{t_2} \cup B_t)$.
            \end{itemize}
        \end{definition}

        On a high level, given a cycle in $G[B_t^{\downarrow} \setminus \bar{S}]$,
        a swap is happening every time the cycle interchanges between vertices of
        $B^\downarrow_{t_1} \setminus B_t$ and $B^\downarrow_{t_2} \setminus B_t$.
        It is easy to see that $m \ge 1$.

        We are now ready to move on to our proof.
        For the sake of contradiction, assume that $G[B_t^{\downarrow} \setminus \bar{S}]$ contains a cycle,
        and let $\mathcal{C}$ denote one such cycle with a minimum number of swaps,
        where $V(\mathcal{C})$ denotes its vertex set.
        Notice that $\mathcal{C}$ contains vertices from both $B_{t_1}^{\downarrow} \setminus B_t$ and
        $B_{t_2}^{\downarrow} \setminus B_t$,
        as otherwise it exists in either $G[B^\downarrow_{t_1} \setminus \bar{S}]$ or $G[B^\downarrow_{t_2} \setminus \bar{S}]$,
        contradicting the assumption that $(t_1,\bar{S}_1,\bar{C}_1)$ and $(t_2,\bar{S}_2,\bar{C}_2)$ are valid partial solutions.

        Let $\tau$ denote the number of swaps happening between vertices of $\mathcal{C}$;
        since $\mathcal{C}$ contains vertices from both $B^\downarrow_{t_j} \setminus B_t$ where $j \in \{1,2\}$,
        it follows that $\tau \ge 2$.
        For $i \in [\tau]$,
        let $(w^i_1, w^i_2)$ denote the pairs of vertices in $\mathcal{C}$ between of which the swaps are happening,
        indexed by their order of appearance;
        notice that for all $i \in [\tau]$,
        there exists a $j \in \{1,2\}$ such that $w^i_2, w^{(i \bmod \tau)+1}_1 \in B^\downarrow_{t_j} \setminus B_t$,
        that is, both vertices belong to the same subtree rooted at a child $t_j$ of node $t$.
        In that case, it is easy to see that $\tau$ is even.
        Furthermore, let for $i \in [\tau]$,
        $v^i_1 \in B_t$ denote the first vertex in the path between $w^i_1$ and $w^i_2$,
        and $P_i$ be the path between $v^i_1$ and $v^{(i \bmod \tau)+1}_1$, with $V(P_i)$ denoting its vertices.
        In that case, $\mathcal{C}$ can be obtained from $v^1_1-P_1-\ldots -v^\tau_1 - P_\tau$.

        Noticing that the vertices of $V(P_i) \setminus B_t$ belong to the connected component of
        $w^i_2$ in $G[B^\downarrow_{t_j} \setminus (B_t \cup \bar{S}_j)]$,
        where $w^i_2 \in B^\downarrow_{t_j} \setminus (B_t \cup \bar{S}_j)$,
        leads to \cref{obs:tw1}.

        \begin{observation}\label{obs:tw1}
            For all $i \in [\tau]$ it holds that $V(P_i) \setminus B_t \neq \varnothing$.
            Moreover, $V(P_i) \subseteq B^\downarrow_{t_j} \setminus \bar{S}_j$ if and only if
            $w^i_2 \in B^\downarrow_{t_j} \setminus (B_t \cup \bar{S}_j)$.
        \end{observation}

        Define the function $f \colon [\tau] \to \{1,2\}$ which denotes for each path $P_i$
        the subtree defined by the children of $t$ that contains all the vertices of $P_i$,
        that is, for all $i \in [\tau]$,
        $V(P_i) \subseteq B^\downarrow_{t_{f(i)}} \setminus \bar{S}_{f(j)}$.
        Notice that by \cref{definition:swap,obs:tw1},
        it holds that $f(i) \neq f((i \bmod \tau) + 1)$ for all $i \in [\tau]$.

        \begin{observation}\label{obs:tw2}
            For all $i \in [\tau]$,
            there is no path connecting $v^i_1, v^{(i \bmod \tau)+1}_1$ using only vertices of $B_{t}\setminus \bar{S}$.
        \end{observation}

        To see why \cref{obs:tw2} holds, notice that if there was such a path
        connecting $v^i_1$ and $v^{(i \bmod \tau)+1}_1$,
        then $\mathcal{C}$ is not a cycle of a minimum number of swaps.

        We now show that there exists a cycle in $G[B^\downarrow_t \setminus S]$, leading to a contradiction.
        Consider the paths $\bar{P_1} = v^1_1-P_1-v^2_1$ and $\bar{P_2} =v^2_1-P_2-\ldots - v^\tau_1 -P_\tau-v^1_1$. 
        We will show that we can use $\bar{P_1}$ and $\bar{P_2}$ to construct two distinct paths
        $P'_1$ and $P'_2$ between $v^1_1$ and $v^2_1$ in $G[B^\downarrow_t \setminus S]$.
        
        Without loss of generality, let $V(P_1) \subseteq B^\downarrow_{t_1} \setminus \bar{S}_1$,
        that is, $f(1) = 1$.
        In that case, it holds that $f(i) = 1$ for all odd $i \in [\tau]$,
        otherwise, i.e., if $i \in [\tau]$ is even, $f(i) = 2$.

        For each $i \in [\tau]$,
        let $\bar{U}_i$ be the connected component of $G[B^\downarrow_{t_{f(i)}} \setminus \bar{S}_{f(i)}]$
        such that $V(P_i) \subseteq \bar{U}_i$.
        It holds that for $j \in \{1,2\}$, $(t_j,\bar{S}_j,\bar{C}_j)$ has the same type as $(t_j,S_j,C_j)$,
        thus for each $i \in [\tau]$, there exists a connected component $U_i$ of $G[B^\downarrow_{t_{f(i)}} \setminus S_{f(i)}]$
        such that $U_i \cap B_{t_{f(i)}} = U_i \cap B_t = \bar{U}_i \cap B_t = \bar{U}_i \cap B_{t_{f(i)}}$.

        Since $v^i_1-P_i-v^{(i \bmod \tau)+1}_1$ is a path in $G[B^\downarrow_{t_{f(i)}}\setminus S_{f(i)}]$
        and $V(P_i) \subseteq \bar{U}_i$,
        we have that $\{v^i_1,v^{(i \bmod \tau)+1}_1\}\subseteq \bar{U}_i$. 
        Furthermore, since $\{v^i_1,v^{(i \bmod \tau)+1}_1\} \subseteq B_{t}$,
        we have that $\{v^i_1,v^{(i \bmod \tau)+1}_1\} \subseteq U_i$.
        Therefore, for any $i \in [\tau]$,
        there exists a path $Q_i$ connecting $v^i_1$ and $v^{(i \bmod \tau)+1}_1$ in the graph $G[U_i]$.
        Substituting the paths $P_i$ with the paths $Q_i$ for each $i \in [\tau]$,
        results into two walks $v^1_1-Q_1-v^2_1$ and $v^2_1-Q_2-\ldots-v^\tau_1-Q_\tau-v^1_1$ in $G[B^\downarrow_t\setminus S]$.
        Notice that, by construction, $v^1_1-Q_1-v^2_1$ is a path.
        Also, the existence of the walk $v^2_1-Q_2-\ldots-v^\tau_1-Q_\tau-v^1_1$ results in a path $v^2_1-Q-v^1_1$ that uses vertices only
        from the vertex set $\{v^1_1\} \cup \bigcup_{i=2}^\tau (V(Q_i) \cup \{v^i_1\})$.
        It remains to show that the two paths are not identical. 
    
        Notice that $v^1_1-Q_1-v^2_1$ uses vertices from $U_1$.
        Furthermore, by \cref{obs:tw1} we know that $V(Q_1) \setminus B_t \neq \varnothing$. 
        Consider a vertex $x \in V(Q_1) \setminus B_t$; we will prove that $x \notin V(Q)$.
        Assume that $x \in V(Q)$.
        Since $x \notin B_t$ and $V(Q) \setminus B_t \subseteq \bigcup_{i=2}^\tau V(Q_i)$,
        we have that there exists $i \in [2,\tau]$ such that $x \in U_i$.
        Notice that since $x \in U_1 \setminus B_t$,
        we have that $U_i = U_1$.
        Furthermore, since $U_i = U_1 \subseteq B^\downarrow_{t_1}$,
        it holds that $i$ is odd (implying that $i \notin \{2,\tau\}$),
        as well as that $\tau \ge 4$.

        We will use the assumption that $\mathcal{C}$ has a minimum number of swaps to show that this cannot be the case.
        Consider the path $v^2_1-P_2-\ldots-v^{i-1}_1 - P_{i-1} - v^i_1$.
        Notice that, since $i \ge 3$,
        said path includes vertices from $B^\downarrow_{t_2} \setminus B_t$ as it holds that
        $V(P_2) \subseteq B^\downarrow_{t_2}$ and $V(P_2) \not\subseteq B_t$.
        We claim that there exists a path $P$ between $v^1_1$ and $v^i_1$ that uses only vertices of $\bar{U}_1$.
        Recall that $U_1 \cap B_t = \bar{U}_1 \cap B_t$.
        Furthermore, since $\{v^2_1,v^i_1\} \subseteq U_1 \cap B_t$,
        we have that $\{v^2_1,v^i_1\} \subseteq \bar{U}_1$.
        Since $\bar{U}_1$ is a connected component of $G[B^\downarrow_{t_1}\setminus \bar{S}_1]$,
        we have that there exists a path $P$ that connects $v^2_1$ with $v^i_1$ and $V(P) \subseteq \bar{U}_1$.
        In that case, define $\mathcal{C}'$ to be the cycle obtained by $\mathcal{C}$ by substituting
        $v^2_1 - P_2 - \ldots - v^{i-1}_1 - P_{i-1} - v^i_1$
        with $v^2_1 - P - v^i_1$.
        It is easy to see that the number of swaps happening in $\mathcal{C}'$ is less than those happering in $\mathcal{C}$,
        a contradiction.
    \end{claimproof}

    Notice that in the proof of \cref{claim:tw_acyclicity}
    we have shown that any two vertices $v_1,v_2 \in B_t \setminus \bar{S}$
    that are connected in $G[B_t^\downarrow \setminus \bar{S}]$,
    are also connected in $G[B_t^\downarrow \setminus S]$.
    Using analogous arguments one can show that the converse is also true, thus leading to \cref{observation:connectivity_in_join_nodes}.

    \begin{observation}\label{observation:connectivity_in_join_nodes}
        Two vertices $v_1,v_2 \in B_t \setminus S$ are in the same connected component of
        $G[B_t^\downarrow \setminus S]$ if and only if they are in the same connected component of
        $G[B_t^\downarrow \setminus \bar{S}]$.
    \end{observation}

    We now show that $(t,S,C)$ and $(t,\bar{S},\bar{C})$ have the same type.
    Let $(S_t, C_t, D_t, \mathcal{F}_t)$ and
    $(\bar{S}_t, \bar{C}_t, \bar{D}_t, \bar{\mathcal{F}}_t)$
    be the types of $(t,S,C)$ and $(t,\bar{S},\bar{C})$ respectively.
    Due to previous discussion, it holds that $C_t[v] = \bar{C}_t[v]$ for all $v \in B_t$,
    as well as $S_t = \bar{S}_t$.
    Hereinafter, we will use $S_t$ to refer to the set defined by both $S_t$ and $\bar{S}_t$.
    Notice that by \cref{observation:connectivity_in_join_nodes}, it follows that
    both $\mathcal{F}_t$ and $\bar{\mathcal{F}}_t$ define the same partition on the vertices of $B_t \setminus S_t$.

    It remains to prove that $D_t[v] = \bar{D}_t[v]$ for all $v \in S_t$.
    To this end, let $(S_{t_j}, C_{t_j}, D_{t_j}, \mathcal{F}_{t_j})$ denote the types of partial solutions
    $(t_j,S_j,C_j)$ and $(t_j,\bar{S}_j,\bar{C}_2)$, where $j \in \{ 1,2 \}$.
    Observe that
    \begin{itemize}
        \item $(t,S,C)$ is the combination of $(t_1,S_1,C_1)$ and $(t_2,S_2,C_2)$,
        \item and $(t,\bar{S},\bar{C})$ is the combination of $(t_1,\bar{S}_1,\bar{C}_1)$ and $(t_2,\bar{S}_2,\bar{C}_2)$.
    \end{itemize}
    This means that $D_t[v] = \min \{2, D_{t_1}[v] + D_{t_2}[v]\} = \bar{D}_t[v]$ for all $v \in S_t$. 

    Finally, we show that $|\bar{S}| \ge |S|$.
    Recall that $S = S_1 \cup S_2$ and $\bar{S} = \bar{S}_1 \cup \bar{S}_2$. 
    Since $|\bar{S}_1| \ge |S_1|$, $|\bar{S}_2| \ge |{S_2}|$, and $\bar{S}_1 \cap \bar{S}_2 = S_1 \cap S_2$,
    we conclude that $|\bar{S}| = |\bar{S}_1| +|\bar{S}_2| - |\bar{S}_1 \cap \bar{S}_2| \ge |S_1| +|S_2| - |S_1\cap S_2| = |S|$.

    \proofsubparagraph*{Forget Nodes.}
    Let $t$ be a Forget Node, where $t'$ denotes its child node and $u$ the forgotten vertex.
    Assume that the claim holds for node $t'$.
    Note that $(t',S,C)$ is a valid partial solution as well,
    thus there exists a partial solution $(t',\bar{S},\bar{C} )$ stored that has
    the same type as $(t',S,C)$ such that $|\bar{S}| \ge |S|$.
    Let $(S_{t'}, C_{t'}, D_{t'}, \mathcal{F}_{t'})$ be the type of $(t',S,C)$ and $(t',\bar{S},\bar{C})$.
    When the algorithm considers $(t', \bar{S}, \bar{C})$,
    it checks whether $(t,\bar{S},\bar{C})$ is a valid partial solution;
    if not, it discards it. 
    We first prove that $(t,\bar{S},\bar{C})$ is a valid partial solution, and then that it has the same type as $(t,S,C)$.

    We distinguish between two cases, either $u \in S$ or not.
    If $u \in S$, then $u \in \bar{S}$, since $S \cap B_{t'} =\bar{S} \cap B_{t'}$.
    Therefore, the algorithm discards $(t,\bar{S},\bar{C})$ only if $u$ has not found at least two interesting neighbors,
    that is, $D_{t'}[u] < 2$.
    Since $(t',S,C)$ has the same type as $(t',\bar{S},\bar{C})$, it holds that $u$ has not found two interesting neighbors
    in $(t',S,C)$;
    this contradicts to the assumption that $(t,S,C)$ is a valid partial solution. 
    
    If on the other hand $u \notin S$, it follows that $u \notin \bar{S}$.
    We prove that $(t,\bar{S},\bar{C})$ will not be discarded by the algorithm.
    Informally, this will only happen if forgetting $u$ results into two distinct connected components
    that share the same color intersecting the same bag.
    Formally, the algorithm discards $(t,\bar{S},\bar{C})$ if the following holds. 
    Let $\bar{U}$ be the connected component of $G[B_{t'}^\downarrow \setminus \bar{S}]$ such that
    $u \in \bar{U}$, $\bar{U} \cap B_{t'} = \{u\}$,
    and there exists some vertex $v \in B_{t'} \setminus(\bar{S} \cup \{u\}) $ with $\bar{C}[u] = \bar{C}[v]$. 
    Note that since $\bar{U} \cap B_{t'} = \{u\}$, it follows that in the partition defined by
    $\mathcal{F}_{t'}$, vertex $u$ appears in a singleton.
    Now, since $(t',S,C)$ and $(t',\bar{S},\bar{C})$ have the same type,
    and $u$ appears in a singleton set in the partition $\mathcal{F}_{t'}$, 
    we have that $U \cap B_{t'} = \{u\}$, where $U$ is the connected component $U$ of $G[B_{t'}^\downarrow \setminus S]$ where $u \in U$.
    Now, we consider the vertex $v$.
    Recall that $C[v] = \bar{C}[v] = \bar{C}[u] = C[u]$
    and $u,v \in B_{t'}\setminus \bar{S}= B_{t'}\setminus S$.
    Therefore $u$ and $v$ must be in the same connected component of the potential final forest.
    This leads to a contradiction as $u$ and $v$ belong to distinct connected components of
    $G[B_{t'}^\downarrow \setminus S]$, and after node $t$ all the vertices of the connected component of $u$ are forgotten;
    thus $u$ and $v$ will stay disconnected in any potential final forest.

    Consequently, the partial solution $(t,\bar{S},\bar{C})$ is valid and it is stored by the algorithm.
    It  remains to show that $(t,S,C)$ and $ (t,\bar{S},\bar{C})$ have the same type. 
    Let $(S_t, C_t, D_t, \mathcal{F}_t)$ be the type of $(t,S,C)$ and
    $(\bar{S}_t, \bar{C}_t, \bar{D}_t, \bar{\mathcal{F}}_t)$ be the type of $(t,\bar{S},\bar{C})$.
    Since $(t',S,C)$ and $(t',\bar{S},\bar{C})$ have the same type and
    $B_t \subset B_{t'}$,
    it follows that $S_t = \bar{S}_t$ and $C_t[v] = \bar{C}_t[v]$ for all $v \in B_t$.
    Next, observe that forgetting a node does not affect the connectivity in the subtree,
    therefore two vertices of $B_t \setminus S_t$ are connected in $G[B_t^\downarrow \setminus S]$ only if
    that is the case in $G[B_{t'}^\downarrow \setminus S]$ (and analogously for $\bar{S}$).
    Since $(t',S,C)$ and $(t',\bar{S},\bar{C})$ have the same type,
    it follows that $\mathcal{F}_t$ and $\bar{\mathcal{F}}_t$ define the same partition on $B_t \setminus S_t$.

    It remains to prove that $D_t[v] = \bar{D}_t[v]$ for all $v \in S_t$. 
    Note that if $u \in S$,
    then forgetting $u$ does not affect the interesting neighbors of any vertex,
    thus $D_t[v] = D_{t'}[v] = \bar{D}_t[v]$ for all $v \in S_t$.
    Now, we consider the case where $u \notin S$.
    Since $(t',S,C)$ and $(t',\bar{S},\bar{C})$ have the same type,
    $u \notin \bar{S}$ follows.
    Consider a vertex $v \in S_t$. 
    If $u \notin N(v)$ or $C_t[v] \neq C_t[u]$,
    we have that the interesting neighbors of $v$ remain the same as in $B_{t'}$,
    thus $D_t[v] =  D_{t'}[v] =  \bar{D}_t[v]$.
    
    It remains to consider the case where $u \in N(v)$ and $C_t[v] = C_t[u]$;
    notice that then, $D_{t'}[v] \ge 1$.
    If $D_{t'}[v] = 2$, $v$ has already found its two interesting neighbors in both $(t',S,C)$ and $ (t',\bar{S},\bar{C})$,
    therefore, $D_t[v] =  D_{t'}[v] =  \bar{D}_t[v] = 2$.
    It remains to consider the case $D_{t'}[v] = 1$.

    Recall that by Property 1, the same-colored vertices of the partial forest that belong to the same bag
    must be in the same connected component of the potential final forest.
    Therefore, it holds that $D_t[v] = D_{t'}[v] = 1$ if there exists a vertex $v' \in B_t \setminus (S \cup \{u\})$
    such that $C[v'] = C[u]$; if no such vertex exists, $D_t[v] = 0$. 
    Similarly, $\bar{D}_t[v] = D_{t'}[v]$ if there exists a vertex $v'' \in B_t \setminus (\bar{S} \cup \{u\})$
    such that $\bar{C}[v''] = \bar{C}[u]$; otherwise $\bar{D}_t[v] = 0$.
    Based on the discussion so far, it suffices to show that, if $v'$ exists then $v''$ exists and vice versa. 
    That is indeed the case, since
    $\setdef{v' \in B_t \setminus (S \cup \{u\})}{C[v'] = C[u]} = \setdef{v'' \in B_t \setminus (\bar{S} \cup \{u\})}{\bar{C}[v''] = \bar{C}[u]}$,
    as $C$ and $\bar{C}$ agree in the vertices of $B_t$, while $B_t \cap S = B_t \cap \bar{S}$.

    Finally, we need to argue that $|\bar{S}| \ge |S|$.
    This is direct from the assumption that $(t',\bar{S},\bar{C})$ and $(t',{S},{C})$
    have the same type and that we have stored $(t',\bar{S},\bar{C})$ for the node $t'$.
\end{proof}

Finally, we show that we can find a maximum minimal feedback vertex set by considering all partial
solutions of the root node of the tree decomposition.
Let $S^*$ be a maximum minimal feedback vertex set of $G$ and $C^*$ a valid coloring of $V$, with respect to $S^*$.
Let $r$ denote the root node of the tree decomposition.
Notice that $(r, S^*,C^*)$ is a valid partial solution, thus, by \cref{lemma:tw_algo_correctness},
the algorithm stores a partial solution $(r, \bar{S}^*,\bar{C}^*)$
that has the same type as $(r,S^*,C^*)$ and $|\bar{S}^*| \ge |S^*|$.

\begin{claim}
    $\bar{S}^*$ is a maximum minimal feedback vertex set of $G$.
\end{claim}

\begin{claimproof}
    Since, for any node $t$,
    the algorithm keeps a partial solution only if the considered set is a feedback vertex set of $G[B^\downarrow_t]$,
    it holds that $\bar{S}^*$ is a feedback vertex set of $G[B^\downarrow_r] = G$.
    Furthermore, it holds that $|\bar{S}^*| \geq |S^*|$,
    where $S^*$ is a maximum minimal feedback vertex set of $G$.

    It remains to show that $\bar{S}^*$ is minimal.
    Recall that the bag $B_r$ of the root $r$ of a nice tree decomposition is empty.
    Therefore, for any vertex $u \in \bar{S}^*$ we know that $u \notin B_r$.
    We will prove that $u$ has a private cycle, i.e.,
    that the graph $G[(V \setminus \bar{S}^*)\cup \{u\}]$ is not acyclic. 
    
    Consider the forget node $t$ of the tree decomposition where the forgotten vertex is $u$.
    Let $t'$ be the child node of $t$ and $(\bar{S}_{t'}, \bar{C}_{t'}, \bar{D}_{t'}, \bar{\mathcal{F}}_{t'})$
    be the type of $(t',\bar{S},\bar{C})$,
    where $\bar{S} = \bar{S}^* \cap B^\downarrow_{t'}$ and $\bar{C}$ the restriction of $\bar{C}^*$ in $B^\downarrow_{t'}$.
    Notice that, in order to store $(r,\bar{S}^*,\bar{C}^*)$,
    the algorithm has created and kept the partial solutions $(t',\bar{S},\bar{C})$ and $(t,\bar{S},\bar{C})$.
    Also, since we have kept the partial solution $(t,\bar{S},\bar{C})$ and $u \in \bar{S}$ is the forgotten node of $t$,
    we have that $\bar{D}_{t'}[u]=2$;
    if that were not the case, the algorithm would have discarded $(t,\bar{S},\bar{C})$.
    Let $v_1$ and $v_2$ denote the two interesting neighbors of $u$.
    
    Notice that due to Property 1 it holds that
    \begin{itemize}
        \item either $v_1, v_2$ belong to the same connected component of $G[B_{t}^\downarrow \setminus \bar{S}]$,
        \item or the connected components $U_1, U_2$ of $G[B_{t}^\downarrow \setminus \bar{S}]$,
        where $v_1 \in U_1$ and $v_2 \in U_2$,
        both intersect $B_t$, i.e., $U_i \cap B_t \neq \varnothing$ for $i \in \{1,2\}$.
    \end{itemize}

    In the first case, that is, when $v_1$ and $v_2$ belong to the same connected component of $G[B_t^\downarrow \setminus \bar{S}]$,
    it holds that $G[(B_t^\downarrow \setminus \bar{S}) \cup \{u\}]$ is not acyclic.
    Notice that $G[(B_t^\downarrow \setminus \bar{S}) \cup \{u\}]$ is an induced subgraph of $G[(B_r^\downarrow \setminus \bar{S}^*) \cup \{u\}]$.
    Consequently, $\bar{S}^* \setminus \{u\}$ is not a feedback vertex set of $G$. 
    
    As for the second case,
    let $U_1, U_2$ be the two connected components of $G[B_t^\downarrow \setminus \bar{S}]$,
    where $v_i \in U_i$ and $U_i \cap B_t \neq \varnothing$ for $i \in \{1,2\}$.
    We argue that there exists a connected component $U$ of $G[B_r \setminus \bar{S}^*]$ such that
    $U \supseteq U_1 \cup U_2$.
    Notice that the vertices of $U_1$ and $U_2$ share the same color in $\bar{C}$,
    as $v_1$ and $v_2$ are the interesting neighbors of $u$, thus $\bar{C}[u] = \bar{C}[v_1] = \bar{C}[v_2]$. 
    
    Now, assume that there is no connected component $U$ of $G[B_r\setminus \bar{S}^*]$ such that $U \supseteq U_1 \cup U_2$.
    Also, let $U_1^*$ and $U_2^*$ be the connected components of $G[B_r\setminus \bar{S}^*]$ such that $U_1^*\supseteq U_1$ and $U_2^*\supseteq U_2$. 
    Notice that such connected components exist as $(r,\bar{S}^*, \bar{C}^*)$ extends $(t,\bar{S}, \bar{C})$.
    Furthermore, since $B_t \cap U_i \neq \varnothing$ for $i \in [2]$,
    there exist Forget nodes $t_1$ and $t_2$ which are ancestors of $t$ in the tree decomposition,
    where the last vertex of $U^*_1$ and $U^*_2$ are the forgotten vertices respectively.
    
    Without loss of generality assume that $t_2$ is an ancestor of $t_1$,
    and let $t'_1$ be the child node of $t_1$.
    Let $\bar{S}_1 = B^\downarrow_{t_1} \cap \bar{S}^*$ and $\bar{C}_1$ be the restriction of $\bar{C}^*$ in $B^\downarrow_{t_1}$.
    Notice that, in order to create the partial solution $(r, \bar{S}^*, \bar{C}^*)$,
    the algorithm has created and kept the partial solutions $(t'_1, \bar{S}_1,\bar{C}_1)$ and $(t_1, \bar{S}_1,\bar{C}_1)$. 
    We will show that, in this case, the algorithm discards $(t_1, \bar{S}_1,\bar{C}_1)$ and thus we have a contradiction.

    Notice that, by assumption, $B_{t'_1} \cap U^*_1 = \{v\}$ and $v$ is the forgotten vertex in $t_1$. 
    Furthermore, since $U^*_2 \cap B_t \neq \varnothing$,
    $U^*_2 \cap B_{t_2} \neq \varnothing$,
    and $U^*_2$ is a connected component in $G[B_r \setminus \bar{S}^*]$,
    we can conclude that $U^*_2 \cap B_{t'_1} = U^*_2 \cap B_{t_1} \neq \varnothing$.
    Notice that, in order to create the partial solution $(t_1, \bar{S}_1,\bar{C}_1)$,
    the algorithm considers the partial solution $(t'_1, \bar{S}_1,\bar{C}_1)$.
    Let $v'$ be a vertex in $U^*_2 \cap B_{t'_1}$.
    Since the vertices $v_1$ and $v_2$ have the same color $\bar{C}^*[v_1]$,
    we can conclude that $\bar{C}^*[v] = \bar{C}^*[v_1] = \bar{C}^*[v_2] = \bar{C}^*[v']$.
    Then, the algorithm discards $(t_1, \bar{S}_1,\bar{C}_1)$,
    as $v$ and $v'$ have the same color $\bar{C}_1[v] = \bar{C}^*[v] = \bar{C}^*[v'] = \bar{C}_1[v']$ and $\{v\}$ appears
    as a singleton in the type of $(t'_1, \bar{S}_1,\bar{C}_1)$.
    This contradicts the assumption that the algorithm does not discard $(t_1, \bar{S}_1,\bar{C}_1)$.

    Therefore, there exists a connected component $U$ of $G[B_r\setminus \bar{S}^*]$ such that
    $U \supseteq U_1 \cup U_2$, thus $G[(B_r \setminus \bar{S}^*)\cup \{u\}]$ is not acyclic.
\end{claimproof}

\proofsubparagraph*{Running Time.}
We now consider the running time of our algorithm.
First we provide a bound on the number of valid partial solutions per node of the tree decomposition;
recall that the algorithm stores one such partial solution per type.
Notice that, for any node $t$ of the tree decomposition,
the type of a partial solution $(t,S,C)$ is composed of
(i) a subset $S_t \subseteq B_t$,
(ii) a coloring $C_t \colon B_t \to [\tw+1]$,
(iii) a table $D_t \colon S_t \to \{0,1,2\}$,
and (iv) a partition $\mathcal{F}_t$ of $B_t \setminus S_t$. 
Since $|B_t| \le \tw+1$, it follows that the total number of types for a node $t$ is at most
$2^{\tw+1} \cdot (\tw+1)^{\tw+1} \cdot 3^{\tw+1} \cdot (\tw+1)^{\tw+1}$,
leading to a total of $\tw^{\bO(\tw)}$ different types.

Checking whether a partial solution is valid induces a polynomial-time overhead.
Now, notice that in the dynamic programming part of the algorithm
we can create all the partial solutions (of different types) for Introduce and Forget Nodes
in time $P \cdot |V|^{\bO(1)}$ where $P$ is the number of partial solutions we have stored for
the child of the node we consider.
Therefore, we can compute all partial solutions (of different types)
for these nodes in $\tw^{\bO(\tw)} |V|^{\bO(1)}$ time.
For the Join Nodes, in the worst case we need to consider all pairs of partial solutions corresponding
to the two children of the Join Node.
However, as all the other calculations remain polynomial in the number of vertices,
the time required to compute the partial solutions for Join Node is again $\tw^{\bO(\tw)} |V|^{\bO(1)}$.
In that case, since the tree decomposition consists of a polynomial number of nodes,
it follows that the total running time of the algorithm is $\tw^{\bO(\tw)} n^{\bO(1)}$.
\end{proof}

\section{ETH Lower Bound} \label{sec:eth_vc_lb}

In this section we present a lower bound on the complexity of solving {\mmFVS} parameterized by the vertex cover
of the input graph.  
Starting from a 3-\textsc{SAT} instance on $n$ variables,
we produce an equivalent {\mmFVS} instance on a graph of vertex cover number $\bO (n / \log n)$,
hence any algorithm solving the latter problem in time $\vc^{o(\vc)}n^{\bO(1)}$ would refute the ETH.
As already mentioned, vertex cover is a very restrictive structural parameter,
and due to its known relationship with treewidth,
an analogous lower bounds follows for the latter,
rendering the algorithm of \cref{sec:tw_algo} optimal.
We first state the main theorem of the section.

\begin{theorem}\label{thm:eth_vc_lb}
    There is no $\vc^{o (\vc)}n^{\bO(1)}$ time
    algorithm for \mmFVS, where \vc{} denotes the size of the minimum vertex cover
    of the input graph, unless the ETH fails.
\end{theorem}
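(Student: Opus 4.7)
The goal is a polynomial-time reduction from $3$-\textsc{SAT} (equivalently, from \SD, which is its standard variable-partitioned variant) on $N$ variables to \mmFVS{} producing a graph $G$ with $\vc(G) = O(N/\log N)$. A hypothetical $\vc^{o(\vc)} n^{O(1)}$ algorithm would then decide the original formula in time $2^{o(N)}$, contradicting the (weak) ETH. The structural trick, which is standard for lower bounds of ``slightly super-exponential'' flavor, is to pack $\log N$ variables into each of $q = N/\log N$ groups, and to build a \emph{selection gadget} per group that encodes the $2^{\log N} = N$ possible truth assignments of the group while contributing only $O(1)$ vertices to the vertex cover.

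\textbf{Selection gadgets.} For each group $g$ I would introduce a constant-sized set of \emph{core} vertices (which will lie in the vertex cover) together with an independent set of $N$ \emph{assignment vertices} $v_{g,1},\dots,v_{g,N}$, each attached to the core via a small cycle. The construction should guarantee that, in any minimal FVS $S$: (i) all but one of the $v_{g,j}$ must belong to $S$ (each one has its own private cycle through the core); and (ii) the single $v_{g,j^*}$ left in the forest represents the chosen assignment $\alpha_{j^*}$ of $g$. Minimality of $S$---in particular, the private-cycle requirement of each $v_{g,j} \in S$---prevents the ``committed'' vertex $v_{g,j^*}$ from also being added to $S$, so the choice is unambiguous. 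For each variable $x$ of the group, an \emph{extractor} attached to the core then provides vertices representing ``$x$ is true under $\alpha_{j^*}$'' and ``$x$ is false under $\alpha_{j^*}$,'' which the clause gadgets can read.

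\textbf{Clause gadgets and calibration.} For every clause $C = \ell_1 \lor \ell_2 \lor \ell_3$, attach a small gadget whose vertices are adjacent only to core/extractor vertices of the three selection gadgets containing the literals, so that clause vertices stay in the independent set and do not contribute to the vertex cover. The gadget should contain a cycle that can be broken, consistently with minimality, if and only if at least one $\ell_i$ evaluates to true under the chosen assignments; this is typically arranged so that breaking it contributes exactly one vertex to $S$ whose private cycle uses an assignment vertex witnessing a satisfied literal. I then set the target $k$ equal to the sum over all gadgets of their mandatory FVS contributions plus one ``bonus'' per clause, so that reaching $k$ forces every selection gadget to be in a canonical state \emph{and} every clause gadget to be satisfiable; correctness follows by checking both directions along these lines.

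\textbf{Main obstacle.} The delicate part is the selection gadget: with only $O(1)$ core vertices per group, each core vertex in $S$ must be equipped with a private cycle whose existence is robust against the choices made in other groups and in clause gadgets, while simultaneously the ``unchosen'' assignment vertices must each obtain a private cycle purely from the local structure, and the ``chosen'' one must not. Getting all three effects (correct selection, robust private cycles, independence from other gadgets) out of a constant-sized core is the technical heart of the proof; once the gadget is in place, bounding $\vc(G) = O(q) = O(N/\log N)$ is immediate because every non-core vertex belongs to the independent set of assignment/extractor/clause vertices.
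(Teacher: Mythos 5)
Your high-level plan --- start from a partitioned 3-SAT, pack $\Theta(\log N)$ variables per ``unit'' of vertex cover, build selection gadgets encoding assignments by which independent-set vertex is left in the forest, and read this off with clause vertices --- matches the paper's strategy, and the ETH arithmetic at the end is the standard one. However, your gadget design is genuinely different from the paper's, and as sketched it has a concrete gap that would sink the reduction.

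The core problem is your reading mechanism. You propose per-variable ``extractor'' vertices that the clause gadgets are ``adjacent only to.'' Count them: there are $\log N$ variables per group and $N/\log N$ groups, so $\Theta(N)$ extractors overall. If the extractors are in the vertex cover, then $\vc(G)=\Omega(N)$, which destroys the bound you need. If instead they are in the independent set, a clause vertex (which you also want in the independent set) cannot be adjacent to them, because that would leave an uncovered edge. The paper avoids this dilemma by not having extractors at all: a clause vertex $c$ is adjacent only to a row vertex $\ell_\alpha$ and some column vertices $r_\beta$, all of which already sit in the vertex cover and are \emph{forced into the forest} by force gadgets; the private cycle of $c$ is then $c$--$\ell_\alpha$--$m^\alpha_\beta$--$r_\beta$--$c$, which exists precisely when the cell $m^\alpha_\beta$ (an independent-set vertex) survives into the forest. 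In other words, the assignment is read via \emph{forest connectivity between two cover vertices}, not by direct adjacency to an assignment-encoding vertex.

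This forces a different gadget shape from the $O(1)$-core one you propose. Each of the paper's $3\log n$ gadgets has $\Theta(n/\log^2 n)$ row vertices (each encoding only $\log n/2$ variables, not $\log n$) and $\Theta(\sqrt n)$ shared column vertices, all in the vertex cover, with the $\Theta(n\sqrt n/\log^2 n)$ cells forming the independent set; the cover is $\Theta(\log n)\cdot\Theta(n/\log^2 n + \sqrt n)=\Theta(n/\log n)$. Finally, you also skip the issue of why the maximizer cannot simply dump \emph{all} assignment vertices into $S$ (they would each still have a private cycle through your $O(1)$ core), which would void the selection semantics; the paper handles this with force gadgets pinning $\ell,\ell',r$ in the forest and the $\kappa_i,\lambda_i$ auxiliary pair, which together force each choice set to contribute exactly $R$ vertices to any large minimal FVS. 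Without some replacement for these two mechanisms --- the connectivity-based clause check and the exact-count enforcement --- your sketch cannot be completed as written.
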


Before we present the details of our construction, let us give some high-level
intuition. Our goal is to ``compress'' an $n$-variable instance of
\textsc{3-SAT}, into a {\mmFVS} instance with vertex cover roughly $n / \log n$.
To this end, we will construct $\log n$ choice gadgets, each of which is
supposed to represent $n/\log n$ variables, while contributing only $n/\log^2n$
to the vertex cover. Hence, each vertex of each such gadget must be capable of
representing roughly $\log n$ variables.

Our choice gadget may be thought of as a variation of a bipartite graph with
sets $L,R$, of size roughly $n/\log^2n$ and $\sqrt{n}$ respectively. If one
naively tries to encode information in such a gadget by selecting which
vertices of $L \cup R$ belong in an optimal solution, this would only give $2$
choices per vertex, which is not efficient enough. Instead, we engineer things
in a way that all vertices of $L \cup R$ must belong in the forest in an optimal
solution, and the interesting choice for a vertex $\ell \in L$ is which
vertex $r \in R$ belongs to the same component. In this sense, a
vertex $\ell \in L$ has $|R|$ choices, which is sufficient to encode the
assignment for $\Omega(\log n)$ variables. What remains, then, is to add
machinery that enforces this basic setup, and then clause checking vertices
which for each clause verify that the clause is satisfied.
This is done by testing if a vertex $\ell$ that represents one of its literals is in the same component as
a vertex $r$ that represents a satisfying assignment for the clause.

\subsection{Preliminary Tools}

Before we present the construction that proves \cref{thm:eth_vc_lb}, we give a
variant of \textsc{3-SAT} from which it will be more convenient to start our
reduction, as well as a basic force gadget that we will use in our construction
to ensure that some vertices must be placed in the forest in order to have
a sufficiently large minimal feedback vertex set.

\subparagraph*{\textsc{3-SAT} Variant.}
In the following, we formally define a constrained version of 3-SAT,
called \SD, and establish its hardness under the ETH.
We mention in passing that this variant is an important building block for a recently introduced technique to obtain double-exponential lower bounds for problems in NP parameterized by the treewidth and the vertex cover number~\cite{isaac/ChakrabortyFMT24,icalp/FoucaudGK0IST24}.

\problemdef{\SD}
{A formula $\phi$ in $3$-CNF form, together with a partition
of the set of its variables into three disjoint sets $V_1,V_2,V_3$ with $|V_i| = n$ for $i \in [3]$,
such that no clause contains more than one variable from $V_i$, for all $i \in [3]$.}
{Determine whether $\phi$ is satisfiable.}

\begin{theoremrep}\label{thm:3_sat_dif}
    {\SD} cannot be decided in time $2^{o(n)}$,
    unless the ETH fails.
\end{theoremrep}

\begin{proof}
    Let $\phi$ be a 3-SAT formula of $m$ clauses,
    where $V$ denotes the set of its variables and $|V| = n$.
    We will construct an equivalent instance $\phi'$ of {\SD} as follows:
    \begin{itemize}
        \item For every variable $x \in V$, introduce variables $x_i \in V_i$, for $i \in \bracks{3}$.
        
        \item For every clause $x \lor y \lor z$ of $\phi$,
        introduce a clause $x_1 \lor y_2 \lor z_3$ in $\phi'$.
        In an analogous way, for every clause $x \lor y$ of $\phi$,
        introduce a clause $x_1 \lor y_2$ in $\phi'$.
    
        \item Introduce clauses $\neg x_1 \lor x_2$, $\neg x_2 \lor x_3$, and $\neg x_3 \lor x_1$ in $\phi'$.
        Note that these clauses are all satisfied if and only if variables $x_1$, $x_2$, and $x_3$ share the same assignment.
    \end{itemize}
    Let $V' = V_1 \cup V_2 \cup V_3$. Notice that this is a valid {\SD} instance
    since $|V_i| = n$ for $i \in [3]$, and none of the $m + 3n$ clauses of $\phi'$
    contains more than one variable from $V_i$, for all $i \in [3]$.
    
    We argue that $\phi$ is satisfiable if and only if $\phi'$ is satisfiable.
    For the forward direction, let $f \colon V \to \{ T, F \}$ be an assignment that satisfies $\phi$,
    and notice that the assignment $f' \colon V' \to \{ T, F \}$,
    where $f'(x_i) = f(x)$ for $i \in \bracks{3}$ and $x \in V$,
    satisfies $\phi'$.
    For the converse direction, let $f' \colon V' \to \{ T, F \}$ be an assignment that satisfies $\phi'$,
    and notice that it holds that $f'(x_1) = f'(x_2) = f'(x_3)$.
    Then, it holds that the assignment $f \colon V \to \{ T, F \}$,
    where $f(x) = f(x_1)$ for $x \in V$,
    satisfies $\phi$.
    
    Lastly, assume there exists a $2^{o(|V_i|)}$ algorithm deciding whether $\phi'$ is satisfiable.
    Then, since $|V_i|$ is equal to the number of variables of $\phi$,
    \textsc{3-SAT} could be decided in $2^{o(n)}$, thus the ETH fails.
\end{proof}

\subparagraph*{Force gadgets.}
We now present a gadget that will ensure that a vertex $u$ must be placed in the forest
in any solution that finds a large minimal feedback vertex set.
In the remainder, suppose that $A$ is a sufficiently large value
(we give a concrete value to $A$ in the next section).
When we say that we attach a \emph{force gadget} to a vertex $u$,
we introduce $A+1$ new vertices $\bar{u}, u_1, \ldots, u_A$ to the graph such that
the vertices $u_i$ form an independent set,
while there exist edges $\braces{u, u_i}, \braces{\bar{u},u_i}$ for all $i \in [A]$,
as well as the edge $\braces{u, \bar{u}}$.
See \cref{fig:force_gadget} for an illustration.
We refer to the vertex $\bar{u}$ as the \emph{gadget twin} of $u$,
while the rest of the vertices will be referred to as the \emph{gadget leaves} of $u$.
Intuitively, the idea here is that if $u$
(or $\bar{u}$) is contained in a minimal feedback vertex set, then none of its
$A$ gadget leaves can be also contained, as these vertices cannot have
private cycles. Hence, setting $A$ to be sufficiently large will allow us to
force $u$ to be in the forest.

\begin{figure}[ht]
\centering
\begin{tikzpicture}[scale=0.75, transform shape]

\node[vertex] (vu) at (0.3,4) {};
\node[vertex] (vv) at (0.3,2) {};

\node[] () at (0,4) {$u$};
\node[] () at (0,2) {$\bar{u}$};

\node[vertex] (vv1) at (2,1) {};
\node[vertex] (vv2) at (2,3) {};
\node[vertex] (vvA) at (2,5) {};

\node[] () at (2.4,1) {$u_1$};
\node[] () at (2.4,3) {$u_2$};
\node[] () at (2,4) {$\vdots$};
\node[] () at (2.4,5) {$u_A$};

\draw[] (vu)--(vv);

\draw[] (vu)--(vv1);
\draw[] (vu)--(vv2);
\draw[] (vu)--(vvA);

\draw[] (vv)--(vv1);
\draw[] (vv)--(vv2);
\draw[] (vv)--(vvA);

\end{tikzpicture}
\caption{Force gadget attached to vertex $u$.}
\label{fig:force_gadget}
\end{figure}

\subsection{Construction}

Let $\phi$ be a {\SD} instance of $m$ clauses, where $|V_p| = n$ for $p \in [3]$ and,
without loss of generality, assume that $n$ is a power of $4$
(this can be achieved by adding dummy variables to the instance if needed).
Partition each variable set $V_p$ into $\log n$ subsets $V_p^q$ of size at most
$\ceil{\frac{n}{\log n}}$, where $p \in [3]$ and $q \in [\log n]$.
Let $L = \ceil{\frac{n}{\log^2 n}}$.
Moreover, partition each variable subset $V_p^q$ into $2L$ subsets
$\mathcal{V}^{p,q}_\alpha$ of size as equal as possible, where $\alpha \in [2L]$.
Define $R = \sqrt{n}$, $A = n^2 + m$, and $k = (4AL + AR + 2LR) \cdot 3\log n + m$.
We will proceed with the construction of a graph $G$ such that
$\phi$ is satisfiable if and only if $G$ has a minimal feedback vertex set of size at least $k$.

For each variable subset $V_p^q$ we define the \emph{choice gadget graph $G_p^q$} as follows
(see also \cref{fig:choice_gadget_b}):
\begin{itemize}
    \item $V(G_p^q) = \setdef{\ell_\alpha, \ell'_\alpha, \kappa_\alpha, \lambda_\alpha}{\alpha \in [2L]} \cup \setdef{r_\beta}{\beta\in[R]} \cup \setdef{m_\beta^\alpha}{\alpha \in [2L], \beta \in [R]}$,
    \item all the vertices $\ell_\alpha$, $\ell'_\alpha$, and $r_\beta$ have an attached force gadget,
    \item for $\alpha \in [2L]$, $N(\kappa_\alpha) = M_\alpha \cup \braces{\lambda_\alpha}$
    and $N(\lambda_\alpha) = M_\alpha \cup \braces{\kappa_\alpha}$,
    where $M_\alpha = \setdef{m_\beta^\alpha}{\beta \in [R]}$,
    \item for $\alpha \in [2L]$ and $\beta \in [R]$,
    $m_\beta^\alpha$ has an edge with $\ell_\alpha$, $\ell'_\alpha$, and $r_\beta$.
\end{itemize}
We additionally define the
\emph{choice set $X_\alpha$} as $X_\alpha = M_\alpha \cup \braces{\kappa_\alpha, \lambda_\alpha}$.

Intuitively, one can think of this gadget as having been constructed as
follows: we start with a complete bipartite graph that has on one side the
vertices $\ell_\alpha$ and on the other the vertices $r_\beta$; we subdivide each edge
of this graph, giving the vertices $m_\beta^\alpha$; for each $\alpha \in [2L]$ we add
$\ell'_\alpha,\kappa_\alpha,\lambda_\alpha$, connect them to the same $m_\beta^\alpha$ vertices that
$\ell_\alpha$ is connected to and connect $\kappa_\alpha$ to $\lambda_\alpha$; we attach force
gadgets to all $\ell_\alpha, \ell'_\alpha, r_\beta$. Hence, as sketched before, the idea of
this gadget is that the choice of a vertex $\ell_\alpha$ is to pick an $r_\beta$ with
which it will be in the same component in the forest, and this will be
expressed by picking one $m_\beta^\alpha$ that will be placed in the forest.

\begin{figure}[ht]
\centering 
  \begin{subfigure}[b]{0.45\linewidth}
  \centering
    \begin{tikzpicture}[scale=0.75, transform shape]
    \node[black_vertex] (vl) at (0.3,3.5) {};
    \node[black_vertex] (vl') at (0.3,1.5) {};
    
    \node[] () at (0,3.5) {$\ell_\alpha$};
    \node[] () at (0,1.5) {$\ell'_\alpha$};
    
    \node[vertex] (vm1) at (3,0.5) {};
    \node[vertex] (vm2) at (3,2.5) {};
    \node[vertex] (vmsn) at (3,4.5) {};
    
    \node[] () at (3.4,0.2) {$m_1^\alpha$};
    \node[] () at (3.4,2.2) {$m_2^\alpha$};
    \node[] () at (3,3.5) {$\vdots$};
    \node[] () at (3.4,4.2) {$m_R^\alpha$};
    
    \node[black_vertex] (vr1) at (6,0.5) {};
    \node[black_vertex] (vr2) at (6,2.5) {};
    \node[black_vertex] (vrsn) at (6,4.5) {};
    
    \node[] () at (6,0.2) {$r_1$};
    \node[] () at (6,2.2) {$r_2$};
    \node[] () at (6,3.5) {$\vdots$};
    \node[] () at (6,4.2) {$r_R$};
    
    \node[vertex] (vkappa) at (2,5.5) {};
    \node[vertex] (vlambda) at (4,5.5) {};
    
    \node[] () at (2,5.8) {$\kappa_\alpha$};
    \node[] () at (4,5.8) {$\lambda_\alpha$};
    
    \draw[] (vkappa)--(vlambda);
    
    \draw[] (vl)--(vm1);
    \draw[] (vl)--(vm2);
    \draw[] (vl)--(vmsn);
    
    \draw[] (vl')--(vm1);
    \draw[] (vl')--(vm2);
    \draw[] (vl')--(vmsn);
    
    \draw[] (vr1)--(vm1);
    \draw[] (vr2)--(vm2);
    \draw[] (vrsn)--(vmsn);
    
    \draw[] (vkappa) edge [bend right] (vm1);
    \draw[] (vkappa) edge [bend right] (vm2);
    \draw[] (vkappa)--(vmsn);
    
    \draw[] (vlambda) edge [bend left] (vm1);
    \draw[] (vlambda) edge [bend left] (vm2);
    \draw[] (vlambda)--(vmsn);
    
    \end{tikzpicture}
    \caption{Part of the construction concerning $X_\alpha$.}
    \label{fig:choice_gadget_a}
  \end{subfigure}
\begin{subfigure}[b]{0.45\linewidth}
\centering
  \begin{tikzpicture}[scale=0.75, transform shape]
    \node[black_vertex] (vl) at (0.3,6.5) {};
    \node[black_vertex] (vl') at (0.3,5.5) {};
    
    \node[] () at (0,6.4) {$\ell_{2L}$};
    \node[] () at (0,5.4) {$\ell'_{2L}$};

    \node[] () at (0.3,4) {$\vdots$};

    \node[black_vertex] (vl2) at (0.3,2.5) {};
    \node[black_vertex] (vl2') at (0.3,1.5) {};
    
    \node[] () at (0,2.5) {$\ell_1$};
    \node[] () at (0,1.5) {$\ell'_1$};

    \node[vertex] (vm1) at (3,5.5) {};
    \node[vertex] (vmsn) at (3,6.5) {};
    
    \node[] () at (3,5.1) {$m_1^{2L}$};
    \node[] () at (3,6.1) {$\vdots$};
    \node[] () at (3.5,6.7) {$m_R^{2L}$};

    \node[vertex] (vm21) at (3,1.5) {};
    \node[vertex] (vms2n) at (3,2.5) {};
    
    \node[] () at (3.4,1.3) {$m_1^1$};
    \node[] () at (3,2.1) {$\vdots$};
    \node[] () at (3.4,2.4) {$m_R^1$};

    \node[vertex] (vkappa) at (2,8) {};
    \node[vertex] (vlambda) at (4,8) {};
    
    \node[] () at (2,8.3) {$\kappa_{2L}$};
    \node[] () at (4,8.3) {$\lambda_{2L}$};

    \node[vertex] (vkappa2) at (2,4) {};
    \node[vertex] (vlambda2) at (4,4) {};
    
    \node[] () at (2,4.3) {$\kappa_1$};
    \node[] () at (4,4.3) {$\lambda_1$};

    \node[black_vertex] (vr1) at (6,2) {};
    \node[black_vertex] (vrsn) at (6,6) {};
    
    \node[] () at (6,1.6) {$r_1$};
    \node[] () at (6,4) {$\vdots$};
    \node[] () at (6,6.4) {$r_R$};

    \draw[] (vkappa)--(vlambda);
    
    \draw[] (vl)--(vm1);
    \draw[] (vl)--(vmsn);
    
    \draw[] (vl')--(vm1);
    \draw[] (vl')--(vmsn);
    
    \draw[] (vr1) edge [bend right] (vm1);
    \draw[] (vrsn)--(vmsn);
    
    \draw[] (vkappa) edge [bend right] (vm1);
    \draw[] (vkappa)--(vmsn);
    
    \draw[] (vlambda) edge [bend left] (vm1);
    \draw[] (vlambda)--(vmsn);

    \draw[] (vkappa2)--(vlambda2);
    
    \draw[] (vl2)--(vm21);
    \draw[] (vl2)--(vms2n);
    
    \draw[] (vl2')--(vm21);
    \draw[] (vl2')--(vms2n);
    
    \draw[] (vr1)--(vm21);
    \draw[] (vrsn) edge [bend left] (vms2n);
    
    \draw[] (vkappa2) edge [bend right] (vm21);
    \draw[] (vkappa2)--(vms2n);
    
    \draw[] (vlambda2) edge [bend left] (vm21);
    \draw[] (vlambda2)--(vms2n);
    
    \end{tikzpicture}
    \caption{The choice gadget graph $G^q_p$.}
    \label{fig:choice_gadget_b}
  \end{subfigure}
\caption{Black vertices have a force gadget attached.}
\label{fig:choice_gadget}
\end{figure}  

Notice that each vertex $\ell_\alpha$ of $G_p^q$, where $\alpha \in [2L]$, $p \in [3]$, and $q \in [\log n]$,
is used to represent a variable subset
$\mathcal{V}^{p,q}_\alpha \subseteq V_p^q$ containing at most 
\[
    |\mathcal{V}^{p,q}_\alpha| \leq
    \ceil*{\frac{|V^{p,q}|}{2L}} \leq
    \ceil*{\frac{\ceil{\frac{n}{\log n}}}{2L}} =
    \ceil*{\frac{n}{2L \log n}} \leq
    \ceil*{\frac{n}{2 \frac{n}{\log^2 n} \log n}} =
    \ceil*{\frac{\log n}{2}} =
    \frac{\log n}{2}
\]
variables of $\phi$, where we used \cref{thm:ceilings} for $f(x) = x / 2L$.
We fix an arbitrary one-to-one mapping so that every
vertex $m_\beta^\alpha$, where $\alpha \in [2L]$ and $\beta \in [R]$, corresponds to a different
assignment for this subset, which is dictated by which element of $M_\alpha$
was not included in the final feedback vertex set.  Since $R = 2^{\log n / 2}
= \sqrt{n}$, the size of $M_\alpha$ is sufficient to uniquely encode all the
different assignments of $\mathcal{V}^{p,q}_\alpha$.

Finally, introduce \emph{clause vertices} $c_1, \ldots, c_m$, each of which corresponds to a clause of $\phi$,
and define the graph $G$ as the union of these vertices as well as all graphs $G_p^q$, where $p \in [3]$ and $q \in [\log n]$.
For a clause vertex $c$, add an edge to $\ell_\alpha$ in $G_p^q$ when $\mathcal{V}^{p,q}_\alpha$
contains a variable appearing in $c$,
as well as to the vertices $r_\beta$ for each such $\ell_\alpha$,
such that $m^\alpha_\beta \notin S$ corresponds to an assignment of $\mathcal{V}^{p,q}_\alpha$ satisfying $c$,
where $S$ denotes a minimal feedback vertex set.
Notice that since no clause contains multiple variables from the same variable set $V_i$,
due to the refinement of the partition of the sets of variables, it holds that all
the variables of a clause will be represented by vertices appearing in distinct choice gadget graphs $G_p^q$.

\subsection{Correctness}
Having constructed the previously described instance $(G, k)$ of \mmFVS,
it remains to prove its equivalence with the initial {\SD} instance.

\begin{lemmarep}\label{lem:size_per_subgraph}
    Any minimal feedback vertex set $S$ of $G$ of size at least $k$ has the following properties:
    \begin{romanenumerate}
        \item $S$ does not contain any vertex attached with a force gadget or its gadget twin,
        \item $|M_\alpha \setminus S| \leq 1$, for every $G_p^q$ and $\alpha \in [2L]$,
        \item $|S \cap V(G_p^q)| = 4AL + AR + 2LR$,
    \end{romanenumerate}
    where $p \in [3]$ and $q \in [\log n]$.
\end{lemmarep}

\begin{proof}
    Let $S$ be a minimal feedback vertex set of size $|S| \geq k > (4L + R) \cdot 3A \log n$.
    Let $u$ be a vertex attached with a force gadget, and $\bar{u}$ its gadget twin.
    
    For the first property, suppose that $u, \bar{u} \in S$.
    In that case, $S \setminus \braces{\bar{u}}$ remains a feedback vertex set of $G$, thus $S$ cannot be minimal.
    On the other hand, if one of $u, \bar{u}$ belongs to $S$, then $|S| \leq |V(G)| - (A + 1)$,
    since $S$ cannot include the rest of the vertices of the corresponding force gadget,
    due to minimality (notice that the gadget leaves in that case cannot have a private cycle).
    However, for the defined $A$ and sufficiently large $n$, this leads to a contradiction, since
    \begin{align*}
        (4L + R) \cdot 3A \log n \leq |V(G)| - A - 1 \iff\\
        (4L + R) \cdot 3A \log n \leq m + (8L + 4AL + 2R + AR + 2L (2 + R)) 3 \log n - A - 1 \iff\\
        n^2 \leq (12L + 2R + 2LR) 3 \log n - 1 = \bO \parens*{\frac{n \sqrt{n}}{\log n}}.
    \end{align*}
    Consequently, it follows that $u, \bar{u} \notin S$ for any vertex $u$ attached with a force gadget.    

    For the second property,
    consider $G_p^q$ for some $p \in [3]$ and $q \in [\log n]$,
    and $Y_\alpha = S \cap X_\alpha$ for choice set $X_\alpha$,
    where $\alpha \in [2L]$.
    Since $S$ does not contain any vertices attached with a force gadget,
    it must contain at least $R-1$ vertices of $M_\alpha$;
    if not, there exists a cycle involving vertices $\ell_\alpha, m_\beta^\alpha, \ell'_\alpha, m_{\beta'}^\alpha$
    for some $\beta,\beta' \in [R]$.
    Therefore, $|M_\alpha \setminus S| \leq 1$.

    Lastly, observe that if $|M_\alpha \setminus S| = 0$, then $|Y_\alpha| \geq R$.
    If that is not the case, i.e., $|M_\alpha \setminus S| = 1$, $S$ must contain an additional vertex of $X_\alpha$,
    since a cycle involving vertices $\kappa_\alpha, \lambda_\alpha, m_\beta^\alpha$ remains otherwise.
    Hence, in both cases, it follows that $|Y_\alpha| \geq R$.
    Suppose that $|Y_\alpha| > R$.
    In that case, if $M_\alpha \subseteq Y_\alpha$, then $Y_\alpha$ contains at least one of $\kappa_\alpha$ and $\lambda_\alpha$.
    However, $S' = S \setminus \{ \kappa_\alpha, \lambda_\alpha \}$ remains a feedback vertex set,
    thus $S$ is not minimal.
    Alternatively, $Y_\alpha$ contains both $\kappa_\alpha, \lambda_\alpha$ and all but one element of $M_\alpha$.
    However, $S' = S \setminus \{ \lambda_\alpha \}$ remains a feedback vertex set, thus $S$ is not minimal.
    Since $S$ includes $A$ vertices per force gadget and exactly $R$ vertices per choice set,
    the last property follows.
\end{proof}

\begin{lemmarep}\label{lem:vc_lb_correctness1}
    If $\phi$ has a satisfying assignment,
    then $G$ has a minimal feedback vertex set of size at least $k$.
\end{lemmarep}

\begin{proof}
    Assume that $\phi$ has a satisfying assignment $f \colon V \to \{ T, F \}$.
    For each set of variables $V_p^q$, where $p \in [3]$ and $q \in [\log n]$,
    consider the corresponding choice gadget graph $G_p^q$.
    For all $\alpha \in [2L]$,
    vertex $\ell_\alpha$ in $G_p^q$ represents a subset $\mathcal{V}^{p,q}_\alpha \subseteq V_p^q$,
    for which there exists a $\beta \in [R]$ such that $m^\alpha_\beta$ corresponds
    to the restriction of $f$ to $\mathcal{V}^{p,q}_\alpha$.
    Moreover, each variable $x \in V_p^q$ is uniquely represented by some vertex $\ell_\alpha$ in $G_p^q$.
    Let $S$ be a set of size $k$ containing
    \begin{itemize}
        \item all the $A$ gadget leaves per force gadget,

	\item all the $2L \cdot 3 \log n$ vertices $\kappa_\alpha$,
        
        \item $m_{\beta'}^\alpha$, with $\beta' \neq \beta$,
        for each $G_p^q$ and each subset $\mathcal{V}^{p,q}_\alpha \subseteq V_p^q$,
        where $m^\alpha_\beta$ corresponds to the restriction of $f$ to $\mathcal{V}^{p,q}_\alpha$,

        \item all clause vertices $c_1, \ldots, c_m$.
    \end{itemize}

    We will argue that $S$ is a feedback vertex set.
    Since $S$ contains all the clause vertices,
    the only possible remaining cycles concern vertices in the same choice gadget graph $G_p^q$.
    Since $S$ contains all the gadget leaves per force gadget,
    all the vertices attached with a force gadget do not belong to $S$.
    All $\lambda$ vertices have a single neighbor, hence cannot be part of any cycle.
    Moreover, $\ell$ and $\ell'$ vertices cannot be part of a cycle,
    since they are of degree $2$ and one of their neighbors (their gadget twin) is a leaf.
    Therefore, any possible cycle contains only $r$ and $m$ vertices.
    However, notice that the graph induced by $r$ and $m$ vertices is a union of stars,
    with $r$ vertices being the centers and $m$ vertices being the leaves.
    Consequently, $G - S$ cannot have any cycles.

    We now argue that $S$ is a minimal feedback vertex set.
    Assume there exists $u \in S$ such that $S \setminus \braces{u}$ is a feedback vertex set.
    In that case, $u$ cannot be a gadget leaf introduced by a force gadget,
    since both the vertex it is attached to as well as the latter's gadget twin do not belong to $S$.
    On the other hand, if $u$ were a vertex $m^\alpha_{\beta'}$, then a $\ell_\alpha - m_\beta^\alpha - \ell'_\alpha - m_{\beta'}^\alpha$ cycle would remain.
    Furthermore, if it were a $\kappa_\alpha$ vertex, then a $\kappa_\alpha - \lambda_\alpha - m^\alpha_\beta$ cycle would remain.
    Lastly, $u$ cannot be any clause vertex $c$.
    Indeed, for any $c$, there exists a variable $x$ due to which $c$ is satisfied.
    Consequently, there exists $\ell_\alpha$ in choice gadget graph $G_p^q$ representing $\mathcal{V}^{p,q}_\alpha \ni x$,
    as well as $m^\alpha_\beta \notin S$ encoding said satisfying assignment.
    Therefore, $\ell_\alpha - m_\beta^\alpha - r_\beta - c$ comprises a cycle,
    because we connect $c$ to all $r$ vertices that encode a satisfying
    assignment for $c$.
\end{proof}

\begin{lemmarep}\label{lem:vc_lb_correctness2}
    If $G$ has a minimal feedback vertex set of size at least $k$,
    then $\phi$ has a satisfying assignment.
\end{lemmarep}

\begin{proof}
    Let $S$ denote said minimal feedback vertex set.
    Notice that due to \cref{lem:size_per_subgraph},
    it holds that $|S \cap V(G_p^q)| = 4AL + AR + 2LR$
    for all $p \in [3]$ and $q \in [\log n]$.
    In that case, since $|S| \geq k = (4AL + AR + 2LR) \cdot 3 \log n + m$ and graph $G$
    is composed of all choice gadget graphs $G_p^q$ as well as vertices $c_1, \ldots, c_m$,
    it follows that $c_i \in S$ for all $i \in [m]$.

    Since $S$ is minimal, it holds that,
    for all clause vertices $c$, $S \setminus \{ c \}$ is not a feedback vertex set.
    Consequently, $G - (S \setminus \braces{c})$ contains at least one cycle involving vertex $c$.
    Notice that each such cycle can only involve vertices belonging to a specific choice gadget graph $G_p^q$,
    since vertices not belonging to the same $G_p^q$ can only be connected via paths containing vertices $c_i$,
    but only a single such vertex remains in $G - (S \setminus \braces{c})$.
    Let $G_c = G[(V(G_p^q) \setminus S) \cup \braces{c}]$ be a subgraph of $G$ containing one such cycle.

    We will show that the aforementioned cycle must be of the form $\ell_\alpha - m_\beta^\alpha - r_\beta - c$,
    for some $\alpha \in [2L]$ and $\beta \in [R]$.
    In order to do so, we first argue that there is no path in $G_c - \{c\}$ between any two $r$ vertices.
    Suppose there exists such a path,
    connecting $r_i$ and $r_j$, for distinct $i, j \in [R]$.
    First, notice that any $\kappa$ and $\lambda$ vertices in $G_c$ are leaves due to the second property in \cref{lem:size_per_subgraph},
    thus they cannot be present in said path.
    Furthermore, this path cannot involve only $r$ and $m$ vertices,
    since the graph induced by those is a union of stars,
    with $r$ vertices being the centers and $m$ vertices being the leaves of the stars.
    Therefore, any path from $r_i$ to $r_j$ must include a vertex
    $\ell_\alpha$ or $\ell'_\alpha$ for some $\alpha \in [2L]$,
    denoted by $w_\alpha$.
    In that case, the shortest such path must be of the form
    $r_i - m_i^\alpha - w_\alpha - m_j^\alpha - r_j$.
    However, this cannot be the case, since $G_c$ contains at most one vertex belonging to $M_\alpha$,
    due to \cref{lem:size_per_subgraph}. 
    
    Consequently, any cycle that contains $c$ in $G_c$ must include the unique vertex
    $\ell_\alpha$ that is a neighbor of $c$.
    Moreover, as the only other vertices that are adjacent to $c$ are $r$ vertices,
    and there are no paths between any two $r$ vertices,
    the cycle must be of the form $\ell_\alpha - m_\beta^\alpha - r_\beta - c$ for some $\beta \in [R]$.

    Now, consider the following assignment for the variables of $\phi$:
    for a set of variables $\mathcal{V}^{p,q}_\alpha \subseteq V_p^q$ represented by $\ell_\alpha$ in $G_p^q$,
    if there exists a vertex $m_\beta^\alpha \notin S$ for some $\beta \in [R]$,
    then let these variables have the assignment encoded by this choice.
    Alternatively, if there is no such vertex $m$,
    then set all of these variables to true.
    This is a valid assignment, since every variable of $\phi$ appears in a single variable set $\mathcal{V}^{p,q}_\alpha \subseteq V_p^q$,
    for some $\alpha \in [2L]$, $p \in [3]$, and $q \in [\log n]$,
    which is uniquely represented by a single vertex $\ell_\alpha$ in $G_p^q$,
    while $|M_\alpha \setminus S| \leq 1$. 
    Lastly, this is a satisfying assignment, since (by the minimality of $S$) for every clause vertex $c$,
    there exist neighboring vertices $\ell_\alpha$ and $r_\beta$ such that $m_\beta^\alpha \notin S$,
    i.e., for every clause, there exists at least one variable in $\mathcal{V}^{p,q}_\alpha$
    such that its assignment satisfies said clause.
\end{proof}

\begin{lemmarep}\label{lem:vc}
    It holds that $\vc (G) = \bO (n / \log n)$.
\end{lemmarep}

\begin{proof}
    Notice that the graph resulting from deleting all vertices
    $\ell_\alpha, \ell'_\alpha, r_\beta, \kappa_\alpha, \lambda_\alpha$
    and their gadget twins,
    where $\alpha \in [2L]$ and $\beta \in [R]$,
    from all choice gadget graphs, is an independent set.
    Therefore,
    \[
        \vc(G) \leq (8L + 2R + 4L) \cdot 3\log n = \bO (n / \log n),
    \]
    and the statement follows.
\end{proof}

We now prove \cref{thm:eth_vc_lb}.

\begin{proof}[Proof of \cref{thm:eth_vc_lb}]
    Let $\phi$ be a {\SD} formula.
    In polynomial time, we can construct a graph $G$ such that,
    due to \cref{lem:vc_lb_correctness1,lem:vc_lb_correctness2},
    deciding if $G$ has a minimal feedback vertex set of size at least $k$ is equivalent to
    deciding if $\phi$ has a satisfying assignment.
    In that case, assuming there exists a $\vc^{o(\vc)} n^{\bO(1)}$ algorithm for \mmFVS,
    one could decide {\SD} in time
    \[
        \vc^{o(\vc)} n^{\bO(1)} =
        \parens*{\frac{n}{\log n}}^{o (n / \log n)} n^{\bO(1)} =
        2^{(\log n - \log \log n) o (n / \log n) + \bO(\log n)} =
        2^{o (n)},
    \]
    which contradicts the ETH due to \cref{thm:3_sat_dif}.
\end{proof}

Since for any graph $G$ it holds that $\tw(G) \leq \vc(G) + 1$,
the following corollary holds.

\begin{corollary}\label{cor:eth_tw_lb}
    There is no $\tw^{o (\tw)}n^{\bO(1)}$ time algorithm for \mmFVS,
    where {\tw} denotes the treewidth of the input graph, unless the ETH fails.
\end{corollary}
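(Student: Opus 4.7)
The plan is to derive this as an immediate consequence of \cref{thm:eth_vc_lb} by invoking the standard inequality $\tw(G)\le\vc(G)$, which holds for every graph $G$ (any vertex cover, together with the singletons of the remaining independent set, is easily arranged into a tree decomposition of width at most $\vc(G)$). I would argue by contraposition: assume, for the sake of contradiction, that there exists an algorithm $\mathcal{A}$ solving \mmFVS{} in time $\tw^{o(\tw)} n^{\bO(1)}$.

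Given an input graph $G$, I would first note that a tree decomposition of width $\bO(\tw)$ can be computed within time that fits inside the claimed bound (either by exact treewidth computation or by any constant-factor FPT approximation, since both incur at most a $2^{\bO(\tw)} n^{\bO(1)}$ overhead, which is dominated by the hypothesized $\tw^{o(\tw)}$ factor). Feeding this decomposition into $\mathcal{A}$, and using $\tw(G)\le\vc(G)$, the total running time on any input graph $G$ is bounded by $\vc(G)^{o(\vc(G))} n^{\bO(1)}$. In particular, this yields a $\vc^{o(\vc)} n^{\bO(1)}$ algorithm for \mmFVS, directly contradicting \cref{thm:eth_vc_lb}.

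There is essentially no obstacle here; the only minor point worth being precise about is ensuring that we do not need the vertex cover number as part of the input to apply \cref{thm:eth_vc_lb}. This is fine because the lower bound of \cref{thm:eth_vc_lb} is stated with respect to the parameter $\vc(G)$ of the constructed instance, not with a decomposition provided on input, so any algorithm whose running time can be upper-bounded by a function of $\vc(G)$ of the prescribed form suffices to refute the ETH. Consequently the corollary follows without further work.
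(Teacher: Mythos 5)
Your proof is correct and takes essentially the same route as the paper, which derives the corollary directly from Theorem~\ref{thm:eth_vc_lb} via the inequality $\tw(G)\le\vc(G)$. The additional remarks about computing a tree decomposition are harmless but unnecessary, since the theorem statement places no assumption of being handed a decomposition on input, so the bound $\tw\le\vc$ alone immediately transfers the running-time guarantee.
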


\section{Natural Parameter Algorithm}\label{sec:natural}

In this section we present an FPT algorithm for {\mmFVS} parameterized by the natural parameter,
i.e., the size of the sought minimal feedback vertex set $k$.
The main theorem of this section is the following.

\begin{theorem}\label{thm:mmfvs_natural_algo}
    {\mmFVS} can be solved in time $9.34^{k} n^{\bO(1)}$.
\end{theorem}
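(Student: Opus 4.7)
The plan is to attack the problem through an annotated variant, \ammFVS{}: the vertex set is partitioned into a forest part $F$ (with $G[F]$ acyclic), a solution part $S$, and an undecided part $U$, and the goal is to distribute the vertices of $U$ between $F$ and $S$ so that the resulting $S$ becomes a minimal feedback vertex set of $G$ of size at least $k$. The original \mmFVS{} instance is recovered by taking $U=V(G)$, $F=S=\emptyset$. A standard win/win already handles the ``easy'' direction: compute any inclusionwise maximal forest $F_0$ greedily; if $V\setminus F_0$ has size at least $k$ we return yes, otherwise $\tw(G)\le k$ and we fall back on \cref{thm:fpt-tw-D}. So it suffices to design a branching procedure on \ammFVS{} whose running time is genuinely single exponential in $k$.

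I would first isolate a base case in which $G[U]$ induces a \emph{linear forest} (a disjoint union of paths) and prove that \ammFVS{} can then be solved in FPT time considerably smaller than $9.34^k n^{\bO(1)}$, by a path-by-path dynamic program that tracks, for each path prefix, which vertices are placed in $S$, whether a private cycle through $G[F]$ has already been certified for each such vertex, and the connectivity constraints imposed on $G[F]$ by this choice. A separate reduction shows that this restriction is already NP-hard, which justifies treating it as an algorithmic subroutine rather than a polynomial shortcut. This base case absorbs the combinatorial burden once the branching phase has simplified $G[U]$ enough to fall into it.

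The branching phase iteratively picks a vertex $v\in U$ of maximum degree in $G[U]$ (so long as that degree is at least $3$) and branches on whether $v\in F$ or $v\in S$. Placing $v$ into $S$ decreases the budget by one. Placing $v$ into $F$ decreases the $U$-degree of each neighbor of $v$ and triggers forced consequences that we propagate as reduction rules: if moving $v$ to $F$ creates a cycle in $G[F]$ the branch is pruned; if some $u\in S$ now has two neighbors in the same component of $G[F]$, its private cycle is certified and the status of other vertices along the certifying path can be forced; and degree-$1$ vertices of $G[U]$ whose unique neighbor lies in $F$ can immediately be moved to $F$ as well. The whole branching is driven by a measure $\mu$ that combines the remaining budget $k$ with a correction term counting undecided vertices of degree at least $3$ in $G[U]$, tuned so that both branches strictly decrease $\mu$. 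When no undecided vertex has $U$-degree above $2$, one further bounded branch per remaining cycle of $G[U]$ reduces the instance to the linear-forest base case, which then finishes the computation.

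The main obstacle is calibrating the branching rules and the measure so that the worst-case branching factor is at most $9.34$: the target is tight (the algorithm is meant to correct and slightly improve the flawed $10^k$ bound of \cite{corr/abs-2208-01953}), so the analysis has to split into several subcases according to the local structure around the branching vertex $v$, namely how many of its neighbors lie in $U$ versus $F$, and whether the $F$-neighbors already lie in a common component of $G[F]$. The critical configurations are those in which $v$ has exactly three $U$-neighbors and no useful $F$-incidence to propagate, where the recurrence consumes essentially all of its slack; here the fact that the linear-forest base case is already FPT with a small constant is what allows the amortized analysis to close at $9.34$ rather than $10$.
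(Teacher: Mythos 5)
Your high-level plan matches the paper's architecture — annotated problem, linear-forest base case shown NP-hard, branching to reduce to it, a potential-function argument — but several of the load-bearing technical pieces are either missing or not quite right, and without them the $9.34^k$ bound does not follow.

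The most serious gap is the initialization. You recover \mmFVS{} from \ammFVS{} by setting $U=V(G)$, $F=S=\emptyset$, and your win/win only says ``if the starting minimal FVS is small, fall back on the treewidth algorithm'' — but \cref{thm:fpt-tw-D} is $\tw^{\bO(\tw)}$, i.e.\ $k^{\bO(k)}$, so this does not reduce anything to a single-exponential branching problem. The paper instead starts from a minimal FVS $S_0$ with $|S_0|<k$, guesses the intersection $S_0\cap S^*$ with the hypothetical target solution $S^*$, and initializes $S=S_0\cap S^*$, $F=S_0\setminus S$, $U=V\setminus S_0$. This guessing (at most $2^{|S_0|}<2^k$ branches, and more precisely $\binom{k}{c}$ when $c=|F|$) is what makes the branching start from a state where $S\cup F$ is already a feedback vertex set and the potential is bounded by $\bO(k)$. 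With your $U=V$, $S=F=\emptyset$ start, your measure (budget $k$ plus the count of vertices of $U$-degree $\geq 3$) begins at $\Theta(n)$, so the branching depth is not bounded by a function of $k$ at all.

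The measure itself is also different in a way that matters. The paper uses $\mu=k+\mathrm{cc}(F)-g-p$, where $g$ counts \emph{good vertices} of $S$ (those with $\geq 2$ edges to $F$ and $\leq 1$ to $U$, hence already certified to have a private cycle) and $p$ counts interesting paths of $G[U]$. Both $g$ and $p$ are essential: the decrease of $\mu$ in the $F$-branch comes either from merging components of $F$ or from creating interesting paths, and the decrease in the $S$-branch comes from making the branched vertex good; and the path-restricted subroutine runs in $3^{k-g}n^{\bO(1)}$, so the good-vertex count is precisely what lets the work done during branching be ``credited back'' at the base case. Your potential doesn't track $\mathrm{cc}(F)$ or good vertices, and your branching rule (maximum degree in $G[U]$ rather than a deepest vertex with $\deg_{F\cup U}\geq 3$) does not guarantee either side of the branch strictly decreases the paper's quantity in all cases. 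You also omit the crucial termination lemma that $\mu\leq 1$ already forces a Yes answer, which is what lets the paper stop branching early, and the closed-form calculation $\sum_{c}\binom{k}{c}\sum_\ell\binom{k+c}{\ell}3^{k-\ell}=4^k(7/3)^k$ that actually produces the $9.34^k$ figure. Without the good-vertex bookkeeping and the $\mathrm{cc}(F)$ term, it is not clear how your ``amortized analysis'' would close at any concrete constant.
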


\subparagraph*{Structure of the Section.}
In \cref{subsec:path_restricted} we define the closely related {\ammFVS} problem
and prove that it remains NP-hard, even on some instances of a specific form,
called \emph{path-restricted}.
Subsequently, we present an algorithm dealing with these kinds of instances,
which either returns a minimal feedback vertex set of size at least $k$ or
concludes that this is a No instance of \ammFVS.
Afterwards, in \cref{subsec:general_algorithm}, we solve {\mmFVS}
by producing a number of instances of {\ammFVS} and utilizing the previous algorithm,
thereby proving \cref{thm:mmfvs_natural_algo}.

\subparagraph*{Oversight of \cite{arxiv/GaikwadKMST22}.}
The algorithm of Gaikwad et al.~\cite{arxiv/GaikwadKMST22} performs a branching procedure which
marks vertices as either belonging in the feedback vertex set or the remaining forest.
The flaw is that the algorithm ceases the branching once $k$ vertices have been
identified as vertices of the feedback vertex set.
However, this is not correct, since deciding if a given set $S$ can be extended into
a minimal feedback vertex set $S^* \supseteq S$ is NP-complete and even W[1]-hard
parameterized by $|S|$~\cite{CaselFGMS22}.
Hence, identifying $k$ vertices of the solution is not, in general, sufficient to produce
a feasible solution and the algorithm of~\cite{arxiv/GaikwadKMST22} is incomplete,
as it does not explain how the guessed part of the feedback vertex set can be
extended into a feasible minimal solution.
Intuitively, the pitfall here is that, unlike other standard maximization problems,
such as \textsc{Max Clique}, {\mmFVS} is not monotone, that is, a graph that contains
a feasible solution of size $k$ is not guaranteed to contain a feasible solution of size $k-1$
(consider, for instance, a $K_{2,n}$).

\subsection{Annotated MMFVS and Path-restricted Instances}\label{subsec:path_restricted}

First, we define the following closely related problem, called {\ammFVS} for short.

\problemdef{\textsc{Annotated Maximum Minimal Feedback Vertex Set}}
{A graph $G = (V, E)$, disjoint sets $S, F \subseteq V$ where $S \cup F$ is a feedback vertex set of $G$, as well as an integer $k$.}
{Determine whether there exists a minimal feedback vertex set $S'$ of $G$ of size $|S'| \geq k$ such that $S' \supseteq S$ and $S' \cap F = \varnothing$.}

Notice that if $F$ is not a forest, then the corresponding instance always has a negative answer.
Moreover, given an instance $\mathcal{I}$, let $\ammfvs(\mathcal{I})$ be equal to $1$ if it is a Yes instance and $0$ otherwise.

Before we proceed, let us give some general remarks to explain how the results of this section
fit in with the larger algorithm for the parameter $k$.
We defined an annotated version of our problem because our plan is to use a branching strategy,
marking vertices of the graph as belonging to the minimal feedback vertex set $S$ or
the remaining forest $F$.
We will use $U = V(G) \setminus (S \cup F)$ to denote the remaining (undecided) part of the graph. 

Our main focus in this section is to deal with a (very) restricted special case of our problem:
the case where every component of $G[U]$ is a path, and more strongly the case where for all vertices
$u \in U$ it holds that $\deg_{F \cup U} (u) = 2$.
Note that, since $S\cup F$ is a feedback vertex set, when $\deg_{F \cup U} (u) = 2$ for
the vertices of a component of $G[U]$ this implies that the component is a path whose
internal vertices have no neighbors in $F$.
We call instances that satisfy the above conditions \emph{path-restricted}.

The reason we are interested in path-restricted instances is that the branching algorithm
we present in \cref{subsec:general_algorithm} will consider such instances as a base case
and cease branching once the current instance is path-restricted.
We therefore need an explicit algorithm to deal with such instances.
An astute reader may be wondering whether such instances can in fact be solved in polynomial
time -- after all we have severely restricted the structure of the undecided part $G[U]$.
Alas, this is not the case and we show in \cref{thm:ammfvs_hardness} that such instances
remain NP-complete in a strong sense:
it is hard to distinguish Yes instances from instances where no solution of the desired size
exists even if we permit ourselves to disregard the given annotation.
Given this hardness, we are therefore obliged to give an FPT algorithm for this case (\cref{thm:path_restricted_algo}).

\begin{theoremrep}\label{thm:ammfvs_hardness}
    The following problem is NP-complete:
    given a path-restricted instance $\mathcal{I} = (G,S,F,k)$ of \ammFVS,
    where all components of $G[V\setminus(S\cup F)]$ are paths on $3$ vertices,
    distinguish between the following two cases:
    (i) the given instance is a Yes instance, that is, there exists a minimal feedback vertex set $S'$ of $G$
    of size $|S'| \geq k$ such that $S' \supseteq S$ and $S' \cap F = \varnothing$,
    (ii) any minimal feedback vertex set $S'$ of $G$ has size $|S'| < k$.
\end{theoremrep}

\begin{proof}
    Let the graph $G=(V,E)$, where $|V| = n$ and $|E| = m$, be an instance of \ThCol.
    We will construct an instance $(G', S, F, k)$ of \ammFVS.
    Set $A = 2n + 6$ and construct the graph $G' = (V', E')$ as follows
    \begin{itemize}
        \item introduce the vertex $w \in V'$,

	\item introduce three vertices $w_1, w_2, w_3$ and edges to form the
        cycle $w - w_1 - w_2 - w_3 - w$;
        similarly introduce three vertices $w_1', w_2', w_3'$ and edges to form the
        cycle $w - w_1' - w_2' - w_3' - w$,
        
        \item for every vertex $u_i \in V$, introduce vertices $u_i^{j_1}$ and $v_i^{j_2}$ in $V'$,
        as well as edges $\braces{u_i^{j_1}, v_i^{j_2}}$ in $E'$, where $j_1 \in [3]$ and $j_2 \in [A]$,
        
        \item for every edge $e_i \in E$, introduce vertices $e_i^j \in V'$ and edges $\braces{e_i^j, w} \in E'$, where $j \in [3]$,
        
        \item for all $i \in [n]$, introduce edges to form the
        cycle $w - u_i^1 - u_i^2 - u_i^3 - w$,
        
        \item for every edge $e_i = \braces{u_k, u_\ell} \in E$, introduce edges $\braces{e_i^j, u_k^j}, \braces{e_i^j, u_\ell^j} \in E'$, where $j \in [3]$.
    \end{itemize}
    Set $F = \braces{w}$, $S = \setdef{e_i^j \in V'}{i \in [m], \, j \in [3]} \cup \setdef{v_i^j \in V'}{i \in [n], \, j \in [A]}$, and $k = n + 3m + An + 2$.
    Moreover, let $U_i = \braces{u_i^1, u_i^2, u_i^3}$ and $A_i = \setdef{v_i^j}{j \in [A]}$, for all $i \in [n]$.  
    Notice that $(G',S,F,k)$ is a path-restricted instance of \ammFVS,
    where every component of $G'[V' \setminus (S \cup F)]$ is a path on $3$ vertices.
    In \cref{fig:path_hardness} part of the construction is shown,
    assuming there exists an edge $e_i = \{ u_p, u_q \} \in E$.
    It remains to show that the two instances are equivalent in the sense that 3-colorable graphs give Yes
    instances of {\ammFVS} while finding any minimal feedback vertex set of size $k$
    in the new instance (even violating the annotations) implies that the original
    graph is 3-colorable.

    \begin{figure}[ht]
\centering
\begin{tikzpicture}[scale=0.75, transform shape]


\node[black_vertex] (vw) at (5,4) {};
\node[] () at (4.6,4) {$w$};

\node[vertex] (w1) at (2,7) {};
\node[] () at (2,7.3) {$w_1$};
\node[vertex] (w2) at (3,7) {};
\node[] () at (3,7.3) {$w_2$};
\node[vertex] (w3) at (4,7) {};
\node[] () at (4,7.3) {$w_3$};
\draw[] (vw)--(w1)--(w2)--(w3)--(vw);

\node[vertex] (w'1) at (6,7) {};
\node[] () at (6,7.3) {$w'_1$};
\node[vertex] (w'2) at (7,7) {};
\node[] () at (7,7.3) {$w'_2$};
\node[vertex] (w'3) at (8,7) {};
\node[] () at (8,7.3) {$w'_3$};
\draw[] (vw)--(w'1)--(w'2)--(w'3)--(vw);

\node[gray_vertex] (ve1) at (3,1) {};
\node[gray_vertex] (ve2) at (5,1) {};
\node[gray_vertex] (ve3) at (7,1) {};

\node[] () at (3,0.6) {$e_i^1$};
\node[] () at (5,0.6) {$e_i^2$};
\node[] () at (7,0.6) {$e_i^3$};

\node[vertex] (vu11) at (3,3) {};
\node[vertex] (vu12) at (3,4) {};
\node[vertex] (vu13) at (3,5) {};

\node[] () at (2.6,3.1) {$u_p^1$};
\node[] () at (2.6,4) {$u_p^2$};
\node[] () at (2.6,4.9) {$u_p^3$};

\node[gray_vertex] (a11) at (1,2) {};
\node[gray_vertex] (a12) at (1,6) {};

\node[] () at (0.6,2) {$v_p^1$};
\node[] () at (1,4.1) {$\vdots$};
\node[] () at (0.6,6) {$v_p^A$};

\node[vertex] (vun1) at (7,5) {};
\node[vertex] (vun2) at (7,4) {};
\node[vertex] (vun3) at (7,3) {};

\node[] () at (7.4,4.9) {$u_q^1$};
\node[] () at (7.4,4) {$u_q^2$};
\node[] () at (7.4,3.1) {$u_q^3$};

\node[gray_vertex] (a21) at (9,2) {};
\node[gray_vertex] (a22) at (9,6) {};

\node[] () at (9.4,2) {$v_q^1$};
\node[] () at (9,4.1) {$\vdots$};
\node[] () at (9.4,6) {$v_q^A$};

\draw[] (ve1)--(vw);
\draw[] (ve2)--(vw);
\draw[] (ve3)--(vw);

\draw[] (vw)--(vu11)--(vu12)--(vu13)--(vw);
\draw[] (vu11)--(ve1);
\draw[] (vu12)--(ve2);
\draw[] (vu13)--(ve3);

\draw[] (vw)--(vun1)--(vun2)--(vun3)--(vw);
\draw[] (vun1)--(ve1);
\draw[] (vun2)--(ve2);
\draw[] (vun3)--(ve3);

\draw[] (a11)--(vu11);
\draw[] (a11)--(vu12);
\draw[] (a11)--(vu13);
\draw[] (a12)--(vu11);
\draw[] (a12)--(vu12);
\draw[] (a12)--(vu13);

\draw[] (a21)--(vun1);
\draw[] (a21)--(vun2);
\draw[] (a21)--(vun3);
\draw[] (a22)--(vun1);
\draw[] (a22)--(vun2);
\draw[] (a22)--(vun3);

\end{tikzpicture}
\caption{Part of the graph $G'$ depicting the vertices associated with the edge $e_i = \braces{u_p, u_q} \in E$.
The black vertex $w$ belongs to $F$ and the gray vertices belong to $S$.}
\label{fig:path_hardness}
\end{figure}
    
    Assume that $G$ has a valid 3-coloring, say $f \colon V \to [3]$.
    Let $S' = \setdef{u_i^j \in V'}{f(u_i) = j} \cup S \cup \{w_1, w_1'\}$
    and note that the size of $S'$ is $n + 3m + An + 2 = k$,
    while $S' \supseteq S$ and $S' \cap F = \varnothing$ hold.
    $S'$ is a feedback vertex set of $G'$.
    Indeed, since $S \subseteq S'$, as well as $w_1, w_1' \in S'$,
    the only remaining cycles are due to the vertices of $U_i$ and $w$,
    for every $i \in [n]$, but $|S' \cap U_i| = 1$ holds.
    It remains to show that $S'$ is minimal.
    $S_1 = S' \setminus \{u_i^j\}$ is not a feedback vertex set, for any $u_i^j \in S'$,
    since then $w \notin S_1$ while $S \cap U_i = \varnothing$.
    Next, we argue that $S_2 = S' \setminus \{e_i^j\}$ is not a feedback vertex set, for any $e_i^j \in S'$.
    Assume that $e_i = \{u_p, u_q\}$.
    Then, since $f(u_p) \neq f(u_q)$, it holds that at least one of $u_p^j, u_q^j$ does not belong to $S'$.
    Name this vertex $x \in U_r$ where $r \in \{ p,q \}$,
    and notice that since $|S' \cap U_r| = 1$,
    there exists a path from $x$ to $w$ containing only vertices of $U_r \setminus S'$.
    In that case, since $e_i^j$ has an edge with $w$ and $x$ is a neighbor of $e_i^j$,
    it follows that $S_2$ is not a feedback vertex set.
    Moreover, $S_3 = S' \setminus \braces{v_i^j}$ is not a feedback vertex set,
    for any $v_i^j \in S'$, since then there exists a cycle composed of $v_i^j$,
    the vertices of $U_i \setminus S'$ and possibly $w$.
    Finally, $w_1$ is clearly necessary, as we have the cycle $w - w_1 - w_2 - w_3 - w$,
    and similarly for $w_1'$.

    Assume that $G'$ has a minimal feedback vertex set $S'$, with $|S'| \geq k = n+3m+An+2$.
    We first show that $w\not\in S'$, that is, $S'\cap F=\varnothing$.
    To this end, we first prove that $|S' \cap U_i| \le 1$, for all $i \in [n]$.
    If $|S' \cap U_i| \geq 2$, it follows that $S' \cap A_i = \varnothing$ since the
    vertices of $A_i$ have at most one neighbor in $V' \setminus S'$, thus they do
    not belong to $S'$.
    In that case, $S' \cap A_i = \varnothing$, implying that
    $|S'| \leq |V'| - A = 3n + 3m + An + 7 - (2n + 6) = k-1$,
    which is a contradiction.
    We therefore have that $|S' \cap U_i| \le 1$, for all $i \in [n]$.
    If $w \in S'$, then $w_1,w_2,w_3, w_1',w_2',w_3' \notin S'$,
    in which case the size of $S'$ is at most $n+3m+An+1 < k$.
    Thus it follows that $w \notin S'$, i.e., $S' \cap F = \varnothing$.
    Furthermore, $S' \cap U_i \neq \varnothing$ for all $i \in [n]$,
    since the vertices of $U_i$ and $w$ form a cycle.
    We conclude that $|S' \cap U_i| = 1$, for all $i \in [n]$.
    Now, due to the size of $S'$, as well as the fact that at most $2$ vertices
    from $\{w_1,w_2,w_3,w_1',w_2',w_3'\}$ belong to $S$, we have that $S' \supseteq S$.
    
    Now consider the coloring $f \colon V \to [3]$ where $f(u_i) = j$ if $u_i^j \in S'$.
    In that case, for $f$ to be a valid coloring, it suffices to prove
    that if $\{u_p, u_q\} \in E$, then $u_p^i, u_q^j \in S'$ for $i \neq j$.
    Assume that this is not the case, i.e.,
    there exist $u_p^i, u_q^i \in S'$ and $e_1 = \{u_p, u_q\} \in E$.
    In that case, $S' \setminus \{ e_1^i \}$ remains a feedback vertex set,
    since $e_1^i$ only has a single neighbor not belonging to $S'$, which is a contradiction.
\end{proof}

We proceed by presenting the main algorithm of this subsection, which will be
essential in proving \cref{thm:mmfvs_natural_algo}.
In the following, we say that, given an instance $\mathcal{I} = (G, S, F, k)$ of \ammFVS,
$s \in S$ is a \emph{good vertex} of $\mathcal{I}$ if $\deg_F (s) \geq 2$ and $\deg_U (s) \leq 1$.
We stress the fact that the algorithm of \cref{thm:path_restricted_algo} may return a set $S'$
that, albeit being a minimal feedback vertex set of $G$ of size at least $k$,
does not necessarily constitute a solution of the annotated instance $\mathcal{I}$,
since even though $S' \cap F = \varnothing$, it does not necessarily hold that $S' \supseteq S$.

\begin{theoremrep}\label{thm:path_restricted_algo}
    Let $\mathcal{I} = (G, S, F, k)$ be a path-restricted instance of \ammFVS,
    and let $g$ denote the number of its good vertices.
    There is an algorithm running in time $3^{k-g} n^{\bO(1)}$ which either returns
    a minimal feedback vertex set $S'$ of $G$ of size $|S'| \geq k$ such that $S' \cap F = \varnothing$,
    or concludes that $\mathcal{I}$ is a No instance of \ammFVS.
\end{theoremrep}

\begin{proof}
    Let, for an instance $\mathcal{I} = (G, S, F, k)$ of \ammFVS,
    $Z(\mathcal{I}) \subseteq S$ denote the subset of \emph{marked} vertices of $S$,
    which is comprised of the vertices of $S$ that have two neighbors in $F$ belonging to the same
    connected component of $G[F \cup U]$.
    The main idea behind the algorithm lies in the fact that we can efficiently handle instances where
    $|Z(\mathcal{I})| \geq k$ or $Z(\mathcal{I}) = S$.
    Towards this, we will employ a branching strategy that, as long as this is not the case,
    produces new instances $\mathcal{I}'$ such that $Z(\mathcal{I}') \supset Z(\mathcal{I})$.
    Prior to performing branching, we first observe that we can efficiently deal with the good vertices.
    Afterwards, by employing said branching strategy, in every step we decide which additional vertex will be marked,
    thereby increasing the number of marked vertices on each iteration.
    If at some point $|Z(\mathcal{I})| \geq k$ or $Z(\mathcal{I}) = S$,
    it remains to decide whether this comprises a viable solution $S'$.
    We start with \cref{lem:connectivity_in_solution}.

    \begin{lemma}\label{lem:connectivity_in_solution}
        Let $\mathcal{I} = (G, S, F, k)$ be a path-restricted instance of {\ammFVS} and $S^*$
        a minimal feedback vertex set of $G$, where $S^* \supseteq S$ and $S^* \cap F = \varnothing$,
        and $F^* = V(G) \setminus S^*$ denotes the corresponding forest.
        \begin{romanenumerate}
            \item From every path of $G[U]$, at most one vertex belongs to $S^*$.
            \item Let $u,v \in F^*$. Then, $u$ and $v$ are in the same connected component of $G[F \cup U]$
            if and only if they are in the same connected component of $G[F^*]$.
        \end{romanenumerate}
    \end{lemma}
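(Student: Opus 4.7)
The plan is to exploit two structural facts of a path-restricted instance. First, every $u \in U$ satisfies $\deg_{F \cup U}(u) = 2$ by definition of an interesting path. Second, under the convention $S \subseteq S^*$ (implicit in the intended application, since otherwise a vertex of $S \setminus S^*$ could lie in $F^*$ and the statement would genuinely fail), we have $F^* = V \setminus S^* \subseteq F \cup U$; hence the only candidates for $F^*$-neighbors of a $U$-vertex are its two $F \cup U$-neighbors.

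For part (i), I would argue by contradiction. Suppose $u = w_i$ and $v = w_j$, with $i < j$, both lie in $S^* \cap U$ on the same interesting path $P = w_1 \cdots w_t$, and pick them so that no $w_\ell$ with $i < \ell < j$ belongs to $S^*$. The private cycle $C_u$ of $u$ in $G[F^* \cup \{u\}]$ must enter and leave $u$ through two of its $F^*$-neighbors, and by the observation above these are exactly its two $F \cup U$-neighbors. In the case where $u$ is internal to $P$ these neighbors are $w_{i-1}$ and $w_{i+1}$, and the connected component of $w_{i+1}$ in $G[F^*]$ is precisely the segment $\{w_{i+1}, \ldots, w_{j-1}\}$: each such vertex is internal to $P$ and thus has no $F$-neighbor, so its only $F^*$-neighbors lie on $P$, and the segment terminates at $v \in S^*$. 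Since $w_{i-1}$ does not belong to this component, no path in $F^*$ from $w_{i-1}$ to $w_{i+1}$ avoiding $u$ exists, so $C_u$ cannot close, contradicting the minimality of $S^*$. The cases in which $u$ or $v$ is an endpoint of $P$ are handled by the same connectivity analysis, using that an endpoint's unique $F$-neighbor lies in a different component of $G[F^*]$ than the $U$-segment between $u$ and $v$.

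For part (ii), the direction $(\Leftarrow)$ is immediate because $F^* \subseteq F \cup U$ turns any $F^*$-path into an $F \cup U$-path. For $(\Rightarrow)$, take any $u$-to-$v$ walk $Q$ in $G[F \cup U]$; if $Q$ meets some $w \in S^*$, part (i) ensures that $w$ is the unique $S^*$-vertex on its interesting path, so both $F \cup U$-neighbors of $w$ lie in $F^*$, and the minimality of $S^*$ supplies a private cycle through $w$ whose deletion from $w$ yields an $F^*$-walk between those two neighbors. Splicing this walk into $Q$ bypasses $w$; iterating across every $S^*$-vertex of the successively updated walk eliminates them all and produces a walk in $G[F^*]$, which can then be pruned to a simple $u$-to-$v$ path. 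The main obstacle is the component analysis in part (i), where the interesting-path property (ruling out $F$-escapes at internal vertices of $P$) and $S \subseteq S^*$ (ruling out $S$-escapes) must both be used to pin down the component of $w_{i+1}$ in $G[F^*]$ as exactly the interior segment between $u$ and $v$; once this is in place, part (ii) follows from the routine rerouting argument just sketched.
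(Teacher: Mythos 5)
Your proof is correct and takes essentially the same approach as the paper: a degree-two chase along the interesting path to pin down the private cycle in part (i), and rerouting an $F\cup U$-walk around $S^*$-vertices via their private cycles in part (ii). Your explicit flagging of the hidden assumption $S \subseteq S^*$ (equivalently $F^* \subseteq F \cup U$) is a good catch, since the paper's own proof relies on exactly this inclusion when it asserts $F^* \cup \{u_1\} \subseteq F \cup U$ and in the converse direction of (ii), without noting that the lemma as literally stated does not impose it.
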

    
    \begin{nestedproof}
        For the first statement, suppose there exist distinct $u_1,u_2 \in S^* \cap U$ belonging to the same
        path of $G[U]$, where $P \subseteq U$ denotes the set of vertices of said path.
        In that case, $G[F^* \cup \braces{u_1}]$ must contain a cycle involving $u_1$.
        Since $\mathcal{I}$ is a path-restricted instance, it holds that $\forall v \in P$,
        $\deg_{F \cup U} (v) = 2$,
        and since $F^* \cup \braces{u_1} \subseteq F \cup U$,
        it follows that $\deg_{F^* \cup \braces{u_1}} (v) \leq 2$.
        Therefore, for $G[F^* \cup \braces{u_1}]$ to contain such a cycle it holds that
        $F^* \supseteq P \setminus \braces{u_1}$,
        which is a contradiction.
    
        For the second statement,
        first consider the case when $u, v \in F$, both belonging to the same connected component of $G[F \cup U]$.
        Let $P$ be a path of $G[F \cup U]$ connecting $u$ and $v$,
        where $f_1, \ldots, f_j$ are the vertices of $P$ belonging to $F$ in the order that they appear in $P$,
        i.e., $f_1 = u$ and $f_j = v$.
        We claim that any two consecutive vertices $f_i, f_{i+1}$ belong to the same connected component of $G[F^*]$.
        Fix some $i \in [j-1]$ and notice that if $f_i, f_{i+1}$ are adjacent,
        both vertices belong to the same connected component of $G[F]$,
        and since $F^* \supseteq F$, that is also the case in graph $G[F^*]$.
        Assume otherwise, i.e., that $f_i, f_{i+1}$ are not adjacent.
        Since no vertex of $F$ appears between $f_i $ and $f_{i+1}$ in $P$,
        it follows that there exists a path of $U$ whose endpoints are adjacent to $f_i$ and $f_{i+1}$ respectively. 
        Now, for this path, either all of its vertices belong to $F^*$, or one of its vertices, say $w$, is in $S^*$.
        In the first case, $f_i$ and $f_{i+1}$ are in the same connected component of $G[F^*]$ due to said path.
        In the latter case, the private cycle of $w$ in $G[F^* \cup \braces{w}]$ contains both $f_i$ and $f_{i+1}$,
        thus they are in the same connected component of $G[F^*]$.
        Consequently, the claim holds, and due to the transitivity of connectivity,
        it follows that $u$ and $v$ belong to the same connected component of $G[F^*]$.
    
        In case at least one of $u, v$ belongs to $U$, let $F' = F \cup \braces{u,v}$
        and consider the instance $\mathcal{I}' = (G, S, F', k)$.
        Obviously, $u,v$ are in the same connected component of $G[F \cup U]$ if and only if they are in the same connected component of $G[F' \cup U']$,
        where $U' = U \setminus \braces{u,v}$.
        Moreover, any $S^*$ that does not contain $u,v$ is a solution of instance $\mathcal{I}$ if and only if it is a solution of $\mathcal{I}'$.
        Thus, the statement follows.

        For the converse direction, it suffices to notice that $F^* \subseteq F \cup U$.
    \end{nestedproof}

    Due to \cref{lem:connectivity_in_solution} we can thus infer the connected components of any
    forest $F^* = V \setminus S^*$,
    where $S^*$ is a minimal feedback vertex set of $G$ such that $S^* \supseteq S$ and $S^* \cap F = \varnothing$.
    Moreover, we can apply the following reduction rule,
    whose correctness follows from \cref{lem:connectivity_in_solution}.
    
    \subparagraph*{Rule $(\diamond)$.}
    Let $\mathcal{I} = (G, S, F, k)$ be a path-restricted instance of \ammFVS,
    and $u \in U$ such that the connected components of $G[(F \cup U) \setminus \braces{u}]$
    are more than the connected components of $G[F \cup U]$.
    Then, replace $\mathcal{I}$ with $\mathcal{I}' = (G, S, F \cup \braces{u}, k)$.
    
    We next handle the cases where for instance $\mathcal{I}$ it holds that either $|Z(\mathcal{I})| \geq k$ or $Z(\mathcal{I}) = S$.
    We consider the two cases separately;
    notice that when $|Z(\mathcal{I})| \geq k$, the returned set $S'$ does not necessarily respect the annotation,
    since although $|S'| \geq k$ and $S' \cap F = \varnothing$, it does not necessarily hold that $S' \supseteq S$.

    \begin{lemma}\label{lem:path_restricted_base_case1}
        Let $\mathcal{I} = (G, S, F, k)$ be a path-restricted instance of \ammFVS,
        where $|Z(\mathcal{I})| \geq k$.
        There exists an algorithm that in polynomial time returns a minimal feedback vertex set $S'$ of $G$
        such that $|S'| \geq k$ and $S' \cap F = \varnothing$.
    \end{lemma}

    \begin{nestedproof}
        Initially set $S' = S$.
        Greedily add any vertex $u \in U$ to the forest, as long as no cycles are formed;
        alternatively, add $u$ to $S'$, in which case notice that the vertices of $F$ adjacent to the path
        $u$ belongs to are in the same connected component.
        Lastly, greedily remove vertices of $S'$ so that it becomes minimal.
        In the end, for $F' = V(G) \setminus S'$ it holds that $F' \supseteq F$,
        which implies that $S' \cap F = \varnothing$,
        while any two vertices $f_1, f_2 \in F$ belonging to the same connected component of $G[F \cup U]$
        belong to the same connected component of $G[F']$,
        thus every vertex of $Z(\mathcal{I})$ has a private cycle,
        i.e., $S' \supseteq Z(\mathcal{I})$,
        implying that $|S'| \geq k$.
    \end{nestedproof}

    \begin{lemma}\label{lem:path_restricted_base_case2}
        Let $\mathcal{I} = (G, S, F, k)$ be a path-restricted instance of \ammFVS,
        where $Z(\mathcal{I}) = S$.
        There exists an algorithm that decides $\mathcal{I}$ in polynomial time,
        and in case it is a Yes instance it returns a minimal feedback vertex set $S'$ of $G$
        of size $|S'| \geq k$, where $S' \supseteq S$ and $S' \cap F = \varnothing$.
    \end{lemma}

    \begin{nestedproof}
        Recall that due to \cref{lem:connectivity_in_solution},
        it holds that for any minimal feedback vertex set $S^*$ where $S^* \cap F = \varnothing$ and $S^* \supseteq S$,
        if $u,v \in F^*$, where $F^* = V(G) \setminus S^*$,
        then $u$ and $v$ are in the same connected component of $G[F^*]$ if and 
        only if that is the case in $G[F \cup U]$.
        We will say that a path of $U$ belongs to $F^*$ when all of its vertices belong to $F^*$.
        
        Notice that the vertices of $F$ can be partitioned into equivalence classes, depending on their connectivity 
        in $G[F \cup U]$.
        For $u, v \in F$, let them belong to the same equivalence class $C_i$, for $i \in [p]$,
        if they are in the same connected component of $G[F \cup U]$,
        where $p \leq |F|$ denotes the number of equivalence classes.
        Now, for each $C_i$, let $c_i = \cc(G[C_i])$ be equal to the number of connected components of $G[C_i]$.
        All components of $G[C_i]$ must be connected in $G[F^*]$,
        thus the number of paths required is $c_i - 1$ per equivalence class $C_i$,
        since each path of $U$ either reduces the number of connected components of $G[C_i]$ by exactly $1$,
        or induces a cycle.
        Therefore, it suffices to greedily add each path to the final forest $F'$, as long as no cycle is formed.
        In case a cycle is indeed formed,
        then due to \cref{lem:connectivity_in_solution} it suffices to add one of its vertices to $S'$,
        since it has two edges towards the same connected component of $F'$.
        In the end, two vertices are connected in $G[F']$ if that is the case in $G[F \cup U]$ as well,
        thus every vertex of $Z(\mathcal{I}) = S$ has a private cycle since $F' \supseteq F$,
        while $S'$ is a minimal feedback vertex set such that $S' \supseteq S$ and $S' \cap F = \varnothing$.
        To see that $S'$ is an optimal solution of the annotated instance, notice that each path of $\mathcal{I}$
        either reduces by $1$ the connected components of $G[F]$, or increases by $1$ the cardinality of $S'$.
        In that case, we can determine whether $\mathcal{I}$ is a Yes or No instance,
        depending on whether $|S'| \geq k$ holds.
    \end{nestedproof}

    Armed with \cref{lem:path_restricted_base_case1,lem:path_restricted_base_case2},
    we are now ready to describe our algorithm.
    Let $\mathcal{I} = (G, S, F, k)$ be a path-restricted instance of \ammFVS.
    Notice that if at any point of execution of our algorithm
    there exists some vertex $s \in S$ which does not have two edges towards the same connected component of $G[F \cup U]$,
    then this is a No instance of {\ammFVS} and we discard it.
    Moreover, we exhaustively apply Rule $(\diamond)$ in every produced instance,
    thus inducing a polynomial-time overhead.
    
    Regarding our branching strategy, we consider the different cases for vertices of $U$.
    Notice that when a vertex $u \in U$ is moved from $U$ to $S$,
    due to \cref{lem:connectivity_in_solution} it is imperative that the connectivity of the vertices belonging to $F$
    with respect to $G[F \cup U]$ remains the same with respect to $G[(F \cup U) \setminus \braces{u}]$.
    Since we have assumed that Rule $(\diamond)$ has been exhaustively applied, that is indeed the case.
    We will first do some preprocessing and afterwards describe a branching strategy which,
    as long as $Z(\mathcal{I}) \subset S$ and $|Z(\mathcal{I})|< k$,
    marks at least one extra vertex per step.
    Since the connectivity of the vertices of $F$ in the new instances remains unchanged,
    the already marked vertices remain so.
    
    \subparagraph*{Preprocessing.}
    Suppose there exists a good vertex $h \in S$ such that $h \notin Z(\mathcal{I})$,
    i.e., all of its neighbors in $F$ belong to distinct connected components of $G[F \cup U]$.
    Recall that $h$ has at most one neighbor in $U$.
    In that case, for $h$ to have a private cycle,
    it is necessary that its neighbor $u \in N(h) \cap U$ belongs to the forest,
    as well as that it is in the same connected component of $G[F \cup U]$ as one of the other neighbors of $h$ in $F$
    (if there is no such neighbor of $h$ in $F$ we can discard the instance).
    Therefore, we replace $\mathcal{I}$ with $\mathcal{I}' = (G, S, F \cup \braces{u}, k)$,
    where $Z(\mathcal{I}') \supseteq Z(\mathcal{I}) \cup \braces{h}$.
    Note that the preprocessing can be done in polynomial time
    while for the resulting instance $\mathcal{I}^* = (G^*, S^*, F^*, k^*)$ it holds that $|Z(\mathcal{I}^*)| \geq g$,
    where $g$ denotes the number of good vertices of $\mathcal{I}$.

    \subparagraph*{Branching.}
    Let $\mathcal{I} = (G, S, F, k)$ be the instance after the preprocessing.
    Consider a vertex $s \in S \setminus Z(\mathcal{I})$.
    For $u \in U$, let $P_u \subseteq U \setminus \braces{u}$ denote the vertices in the same path as $u$ in $G[U]$.
    Consider the following cases: either there exists $u \in N(s) \cap U$ such that $u$ is in the same connected component of $G[F \cup U]$ as some $f \in N(s) \cap F$ or not.
    \begin{itemize}
        \item In the first case, we branch depending on whether $u$ is in the feedback vertex set or not. 
        Notice that if $u$ is in the feedback vertex set, then all vertices of $P_u$ must be in the forest due to \cref{lem:connectivity_in_solution}.
        Therefore, we replace our current instance with the following two:
        \begin{itemize}
            \item $\mathcal{I}_1 = (G, S\cup \braces{u}, F \cup P_u, k)$, and 
            \item $\mathcal{I}_2 = (G, S , F \cup \braces{u}, k)$.
        \end{itemize}
        In both instances we mark at least one extra vertex:
        $u$ in $\mathcal{I}_1$ and $s$ in $\mathcal{I}_2$.
    
        \item In the latter case, two vertices $a,b \in N(s) \cap U$ that belong to the same connected component of $G[F \cup U]$ must exist.
        For these vertices we branch on the following 3 cases:
        $a,b \in F$, or $a \in S$, or $b \in S$.
        Therefore, we replace the current instance with the following three:     
        \begin{itemize}
            \item $\mathcal{I}_1 = (G, S, F \cup \braces{a,b}, k)$,
            \item $\mathcal{I}_2 = (G, S \cup \braces{a}, F \cup P_a, k)$,
            \item $\mathcal{I}_3 = (G, S \cup \braces{b}, F \cup P_b, k)$.
        \end{itemize}
        In each of these instances we mark at least one extra vertex:
        $s$ in $\mathcal{I}_1$, $a$ in $\mathcal{I}_2$, and $b$ in $\mathcal{I}_3$.
    \end{itemize}
    
    \subparagraph*{Complexity.}
    The preprocessing part of the algorithm, as well as the application of the rules, require polynomial time.
    The branching strategy previously described results in at most $3^{k-g}$ instances,
    since on every step at most $3$ instances may be produced, while the branching ceases when $|Z(\mathcal{I})| \geq k$.
    Lastly, due to \cref{lem:path_restricted_base_case1,lem:path_restricted_base_case2},
    the cases when $|Z(\mathcal{I})| \geq k$ or $Z(\mathcal{I}) = S$
    are solvable in polynomial time.
    Therefore, the final running time is $3^{k-g} n^{\bO(1)}$.
    \end{proof}


\subsection{Algorithm for Max Min FVS}\label{subsec:general_algorithm}
We start by presenting a high level sketch of the algorithm for \mmFVS.
The starting point is a minimal feedback vertex set $S_0$ of $G$.
Note that such a set can be obtained in polynomial time,
while if it is of size at least $k$ we are done.
Therefore, assume that $|S_0| < k$.
Then, assuming there exists a minimal feedback vertex set $S^*$, where $|S^*| \geq k$ and $F^* = V(G) \setminus S^*$,
we will guess $S_0 \cap S^*$, thereby producing instances $\mathcal{I}_0 = (G, S_0 \cap S^*, S_0 \cap F^*, k)$ of \ammFVS.
Subsequently, we will establish a number of \emph{safe} reduction rules, which do not affect the answer of the instances.
We will present a measure of progress $\mu$, which guarantees that if an instance $\mathcal{I} = (G, S, F, k)$ of {\ammFVS} has $\mu(\mathcal{I}) \leq 1$,
then $G$ has a minimal feedback vertex set $S'$ of size at least $k$, where $S' \cap F = \varnothing$.
Then, we will employ a branching strategy which, given $\mathcal{I}_i$,
will produce instances $\mathcal{I}^1_{i+1}, \mathcal{I}^2_{i+1}$ of lesser measure of progress,
such that $\mathcal{I}_i$ is a Yes instance if and only if at least one of $\mathcal{I}^1_{i+1}, \mathcal{I}^2_{i+1}$
is also a Yes instance.
If we can no further apply our branching strategy, and the measure of progress remains greater than $1$,
then it holds that $\mathcal{I}$ is a path-restricted instance and \cref{thm:path_restricted_algo} applies.

\subparagraph*{Measure of progress.}
Let $\mathcal{I} = (G, S, F, k)$ be an instance of \ammFVS.
We define as $\mu(\mathcal{I}) = k + \cc(G[F]) - g - p$ its \emph{measure of progress}, where
\begin{itemize}
    \item $\cc(G[F])$ denotes the number of connected components of $G[F]$,
    
    \item $g$ denotes the number of good vertices of $\mathcal{I}$, i.e., vertices $s \in S$ such that
    $\deg_F (s) \geq 2$ and $\deg_U (s) \leq 1$,
    
    \item $p$ denotes the number of \emph{interesting paths} of $G[U]$,
    where a connected component of $G[U]$ is an interesting path if for
    every vertex $u$ belonging to said component, it holds that $\deg_{F \cup U} (u) = 2$.
    Notice that if every connected component of $G[U]$ is an interesting path,
    then $\mathcal{I}$ is a path-restricted instance.
\end{itemize}
It holds that if $\mu(\mathcal{I}) \leq 1$, then the underlying {\mmFVS} instance has a positive answer,
which does not necessarily respect the constraints dictated by the annotated version.

\begin{lemmarep} \label{lem:end_of_progress}
    Let $\mathcal{I} = (G, S, F, k)$ be an instance of \ammFVS,
    where $\mu(\mathcal{I}) \leq 1$.
    Then, $G$ has a minimal feedback vertex set $S'$ of size at least $k$, where $S' \cap F = \varnothing$.
\end{lemmarep}

\begin{proof}
    Since $F$ is a forest, $S \cup U$ comprises a valid feedback vertex set of $G$.
    Let $S'$ be a minimal feedback vertex set obtained in polynomial time from $S \cup U$,
    while $F' = V \setminus S'$ denotes the forest resulting from the vertices belonging to $F$ plus
    the vertices of $(S \cup U) \setminus S'$.

    Let a \textit{loss} be when either a good vertex of $S$,
    or the entirety of an interesting path belongs to $F' = V \setminus S'$.
    Notice that both good vertices and interesting paths have at least $2$ edges to some vertices of $F$.
    Consequently, for every loss, the connected components of $F$ reduce by at least $1$:
    in order to move a good vertex or an interesting path to the forest, no cycles should be formed,
    i.e., all of their neighbors are in distinct connected components of $F$,
    thus the connected components of the forest will be reduced.
    Therefore, it follows that at most $\cc(G[F]) - 1$ losses may happen,
    which means that $S'$ contains at least $g + p - (\cc(G[F]) - 1)$ vertices;
    each of those corresponds to either a good vertex or belongs to an interesting path which does not belong entirely to $F'$.
    In that case however, $|S'| \geq g + p - (\cc(G[F]) - 1) \geq k$, since $\mu(\mathcal{I}) \leq 1$.
\end{proof}

Next, we describe some reduction rules which neither affect the answer of an instance of \ammFVS,
nor increase its measure of progress.
For their correctness, we repeatedly use \cref{lem:contraction} which we first present.

\begin{lemmarep} \label{lem:contraction}
    Let $G = (V,E)$ be a (multi)graph and $uv \in E(G)$.
    Then, $G$ is acyclic if and only if $G / uv$ is acyclic. 
\end{lemmarep}

\begin{proof}
    First consider the case where there are multiple edges between $u$ and $v$. 
    Then, $G$ has a cycle that uses these edges while $G / uv$ has a self loop and the statement holds.
    Thus, it suffices to consider the case where there is only a single edge between $u$ and $v$,
    in which case the vertex $w$ that has replaced $u$ and $v$ in $G/ uv$ does not have a self loop in $G/ uv$.
    
    For the forward direction, assume that $G$ has a cycle.
    Notice that any cycles not including the edge $uv$ are not affected by its contraction,
    thus assume that $uv$ is part of a cycle in $G$.
    Since $G$ does not include any edges parallel to $uv$, this cycle has at least three vertices.
    This means that there exists a path from $u$ to $v$ which does not include the edge $uv$.
    Then, in $G / uv$, this path is a cycle as we have replaced $u$ and $v$ with a single vertex.

    For the converse direction, assume that $G/ uv$ has a cycle $C$
    and let $w$ be the vertex that has replaced $u$ and $v$ in $G/ uv$.
    There are two cases, either $w \notin V(C)$ or $w \in V(C)$.
    In the first case notice that $C$ is also a cycle in $G$ therefore the statement holds.
    In the latter, since $w$ does not have a self loop, 
    there is a path $P$ consisting of at least $1$ vertex such that,
    the first and the last vertex of said path are both adjacent to $w$. 
    Let $v_s$ and $v_t$ denote those (not necessarily distinct) vertices. 
    If there is $v' \in \{u,v\}$ such that $v' \in N(v_s) \cap N(v_t)$ then 
    the path $P$ together with $v'$ comprises a cycle in $G$. 
    Otherwise, one among $v_s, v_t$ is adjacent to $u$ and the other to $v$. 
    Without loss of generality let $v_su,v_tv \in E(G)$. Notice that there is a path in $G$ that starts with $u$, 
    ends with $v$, and uses the vertices in $P$.
    Consequently, this path does not include the edge $uv$.
    Adding the edge $uv$ to this path results in a cycle in $G$.
\end{proof}

\subparagraph*{Rule 1.}
Let $\mathcal{I} = (G,S,F,k)$ be an instance of \ammFVS, $u, v \in F$, and $uv \in E(G)$.
Then, replace $\mathcal{I}$ with $\mathcal{I}' = (G',S,F',k)$,
where $G' = G / uv$ occurs from the contraction of $u$ and $v$ into $w$,
while $F' = (F \cup \braces{w}) \setminus \braces{u,v}$.

\subparagraph*{Rule 2.}
Let $\mathcal{I} = (G,S,F,k)$ be an instance of \ammFVS, $u \in U$, and $\deg_{F \cup U} (u) = 0$.
Then, replace $\mathcal{I}$ with $\mathcal{I}' = (G-u,S,F,k)$.

\subparagraph*{Rule 3.}
Let $\mathcal{I} = (G,S,F,k)$ be an instance of \ammFVS, $u \in U$, and $\deg_{F \cup U} (u) = 1$,
where $N(u) \cap (F \cup U) = \{ v \}$.
Then, replace $\mathcal{I}$ with $\mathcal{I}' = (G',S,F',k)$,
where $G' = G / uv$ occurs from the contraction of $u$ and $v$ into $w$,
while $F' = (F \cup \braces{w}) \setminus \braces{v}$ if $v \in F$,
and $F' = F$ otherwise.

\begin{lemmarep}
    Applying rules 1, 2, and 3 is safe and
    does not increase the measure of progress.
\end{lemmarep}

\begin{proof}
    We will provide a proof for each rule in a distinct paragraph.
    
    \proofsubparagraph*{Rule 1.}
    Let $\mathcal{I} = (G,S,F,k)$ be an instance of {\ammFVS} and $\mathcal{I}' = (G',S,F',k)$
    the instance of {\ammFVS}
    resulting from applying Rule 1 to $\mathcal{I}$,
    where $G' = (V', E')$ occurs from the contraction of $u$ and $v$ into $w$ (i.e., $G' = G / uv$),
    while $F' = (F \cup \braces{w}) \setminus \braces{u,v}$.
    We will show that $\ammfvs(\mathcal{I}') = \ammfvs(\mathcal{I})$ and $\mu(\mathcal{I}') \leq \mu(\mathcal{I})$.

    Let $S_1 \supseteq S$ be a minimal feedback vertex set of $G$ such that $S_1 \cap F = \varnothing$,
    which implies that $u,v \notin S_1$.
    We claim that $S_1$ is a minimal feedback vertex set of $G'$.
    Let $F_1 = V \setminus S_1$ and $F_2 = V' \setminus S_1$.
    By deﬁnition, $G[F_1]$ is acyclic.
    Due to \cref{lem:contraction} and the fact that $G'[F_2]$ is obtained from $G[F_1]$ by contracting $uv$,
    it follows that $G'[F_2]$ is also a forest.
    To see that $S_1$ is a \emph{minimal} feedback vertex set of $G'$,
    let $z \in S_1$ and consider the graphs $G_1 = G [F_1 \cup \braces{z}]$
    and $G_2 = G'[F_2 \cup \braces{z}]$.
    Notice that $G_2$ can be obtained from $G_1$ by contracting $uv$.
    Moreover, due to the minimality of $S_1$, $G_1$ contains a cycle,
    and due to \cref{lem:contraction} it follows that $G_2$ also does.
    Thus, $S_1$ is a minimal feedback vertex set of $G'$,
    and $\ammfvs(\mathcal{I}) \leq \ammfvs(\mathcal{I}')$ follows.

    For the converse direction, let $S_2 \supseteq S$ be a minimal feedback vertex set of $G'$
    such that $S_2 \cap F' = \varnothing$, which implies that $w \notin S_2$.
    We claim that $S_2$ is a minimal feedback vertex set of $G$.
    Let $F_1 = V \setminus S_2$ and $F_2 = V' \setminus S_2$.
    By deﬁnition, $G'[F_2]$ is acyclic.
    Due to \cref{lem:contraction} and the fact that $G'[F_2]$ is obtained from $G[F_1]$ by contracting $uv$,
    it follows that $G[F_1]$ is also a forest.
    To see that $S_2$ is a \emph{minimal} feedback vertex set of $G$,
    let $z \in S_2$ and consider the graphs $G_1 = G [F_1 \cup \braces{z}]$
    and $G_2 = G'[F_2 \cup \braces{z}]$.
    Notice that $G_2$ can be obtained from $G_1$ by contracting $uv$.
    Moreover, due to the minimality of $S_2$, $G_2$ contains a cycle,
    and due to \cref{lem:contraction} it follows that $G_1$ also does.
    Thus, $S_2$ is a minimal feedback vertex set of $G$,
    and $\ammfvs(\mathcal{I}) \geq \ammfvs(\mathcal{I}')$ follows.
    
    Moreover, it holds that $\mu(\mathcal{I}') = \mu(\mathcal{I})$,
    since $\cc(G[F]) = \cc(G'[F'])$, while the number of interesting paths
    and good vertices of both instances is the same.

    \proofsubparagraph*{Rule 2.}
    Let $\mathcal{I} = (G,S,F,k)$ be an instance of {\ammFVS} and
    $\mathcal{I}' = (G',S,F,k)$ the instance of {\ammFVS} resulting from applying Rule 2 to $\mathcal{I}$,
    where $G' = (V', E')$ occurs from the deletion of some $u \in U$
    with $\deg_{F \cup U} (u) = 0$ (i.e., $G' = G - u$).
    We will show that $\ammfvs(\mathcal{I}') = \ammfvs(\mathcal{I})$ and 
    $\mu(\mathcal{I}') \leq \mu(\mathcal{I})$.
    
    Let $S_1 \supseteq S$ be a minimal feedback vertex set of $G$ such that $S_1 \cap F = \varnothing$.
    Since $N_G(u) \subseteq S \subseteq S_1$,
    it follows that $u \notin S_1$ as otherwise
    $S_1 \setminus \braces{u}$ remains a feedback vertex set and $S_1$ is not minimal.
    In that case, $S_1$ is a feedback vertex set of $G - u$.
    To see that $S_1$ is a \emph{minimal} feedback vertex set of $G - u$,
    notice that for any $z \in S_1$, $G[F \cup \{z\}]$ contains a cycle which does not include $u$,
    therefore this cycle is also present in $G'[(F \cup \{z\}) \setminus \{u\}]$,
    i.e., $z$ has a private cycle in $G - u$.
    Thus, $S_1$ is a minimal feedback vertex set of $G - u$,
    and $\ammfvs(\mathcal{I}) \leq \ammfvs(\mathcal{I}')$ follows.

    For the converse direction, let $S_2 \supseteq S$ be a minimal feedback vertex set of $G - u$
    such that $S_2 \cap F = \varnothing$.
    We claim that $S_2$ is a minimal feedback vertex set of $G$.
    We first show that $G[V \setminus S_2]$ is a forest.
    Notice that $V \setminus S_2 \subseteq F \cup U$,
    thus $N_G(u) \cap (V \setminus S_2) = \varnothing$,
    which implies that no cycle present in $G[V \setminus S_2]$ contains $u$.
    Now, assume that $G[V \setminus S_2]$ contains a cycle.
    Since $u$ does not belong to any such cycle,
    it follows that $G[V \setminus (S_2 \cup \{u\})]$ contains a cycle as well,
    which is a contradiction since $S_2$ is a feedback vertex set of $G-u$.
    For the minimality, notice that since all vertices of $S_2$ have a private cycle in $G - u$,
    that is the case for graph $G$ as well.
    Therefore, $\ammfvs(\mathcal{I}) \geq \ammfvs(\mathcal{I}')$.

    Lastly, $\mu(\mathcal{I}') \leq \mu(\mathcal{I})$,
    since the deletion of $u$ affects neither $\cc(G[F])$ nor the number of interesting paths,
    while the number of good vertices may increase.

    \proofsubparagraph*{Rule 3.}
    Let $\mathcal{I} = (G,S,F,k)$ be an instance of {\ammFVS} and
    $\mathcal{I}' = (G',S,F',k)$ the instance of {\ammFVS} resulting from applying Rule 3 to $\mathcal{I}$,
    where $G' = (V', E')$ occurs from the contraction of $u$ and $v$ into $w$ (i.e., $G' = G / uv$),
    for some $u \in U$ such that $\deg_{F \cup U} (u) = 1$, where $N(u) \cap (F \cup U) = \{ v \}$.
    Moreover, it holds that $F' = (F \cup \braces{w}) \setminus \braces{v}$ if $v \in F$,
    and $F' = F$ otherwise.
    We will show that $\ammfvs(\mathcal{I}') = \ammfvs(\mathcal{I})$ and
    $\mu(\mathcal{I}') \leq \mu(\mathcal{I})$.

    Let $S_1 \supseteq S$ be a minimal feedback vertex set of $G$
    such that $S_1 \cap F = \varnothing$.
    Notice that since $\deg_{F \cup U} (u) = 1$, it holds that $u \notin S_1$ as otherwise $S_1$ is not minimal.

    We first argue that the statement is true when $v \in F$.
    Indeed, since $u \notin S_1$, it holds that $\mathcal{I}$ is a Yes instance of {\ammFVS} if and only if
    $\mathcal{J} = (G,S,F \cup \{u\},k)$ is a Yes instance of \ammFVS.
    We remark that $\mu(\mathcal{J}) \le \mu(\mathcal{I})$,
    since the two instances have the same number of interesting paths,
    $\cc(G[F]) = \cc(G[F \cup \{ u \}])$,
    and $\mathcal{J}$ has at least as many good vertices as $\mathcal{I}$. 
    Lastly, notice that applying Rule 1 on $\mathcal{J}$ results in instance $\mathcal{I}'$.
    
    It remains to prove the statement when $v \in U$.
    For the forward direction, let $S_1$ as before,
    and consider the cases $v \notin S_1$ and $v \in S_1$.

    If $v \notin S_1$, then we claim that $S_1$ is a minimal feedback vertex set of $G'$.
    Notice that contracting $uv$ in $G[V \setminus S_1]$ results in $G'[V' \setminus S_1]$,
    thus due to \cref{lem:contraction} the latter is a forest.
    Furthermore, for all $z \in S_1$,
    contracting $uv$ in $G[(V \setminus S_1) \cup \braces{z}]$ results in $G'[(V' \setminus S_1) \cup \braces{z}]$,
    thus due to \cref{lem:contraction} the latter has a cycle.
    Consequently, $S_1$ is a minimal feedback vertex set of $G'$.

    If $v \in S_1$, then we claim that $S^*_1 = (S_1 \setminus \braces{v}) \cup \braces{w}$ is a minimal feedback vertex set of $G'$.
    Notice that $S^*_1$ is a feedback vertex set of $G'$, since $S_1 \cup \braces{u}$ is a feedback vertex set of $G$.
    To see that it is minimal, for all $z \in S^*_1 \setminus \braces{w}$ we observe that $G'[(V' \setminus S^*_1) \cup \braces{z}]$ is
    obtained from $G[(V \setminus S_1) \cup \braces{z}]$ by deleting $u$,
    which has degree at most $1$ due to $z$.
    Therefore, this deletion does not destroy any cycles, while $G[(V \setminus S_1) \cup \braces{z}]$ contains
    some cycle due to the minimality of $S_1$.
    Finally, for $w \in S^*_1$,
    observe that contracting $uv$ in $G[(V \setminus S_1) \cup \braces{v}]$, which contains a cycle,
    results in $G'[(V' \setminus S^*_1) \cup \braces{w}]$,
    thus due to \cref{lem:contraction} the latter contains a cycle as well.

    Consequently, $\ammfvs(\mathcal{I}) \leq \ammfvs(\mathcal{I}')$ follows.
    For the converse direction,
    let $S_2 \supseteq S$ be a minimal feedback vertex set of $G'$ such that $S_2 \cap F' = \varnothing$.
    Recall that we consider the case where $u,v \in U$, so $F'=F$.  
    We consider two cases, either $w \notin S_2$ or $w \in S_2$.

    If $w \notin S_2$, we claim that $S_2$ is a minimal feedback vertex set of $G$.
    Notice that contracting $uv$ in $G[V \setminus S_2]$ results in $G'[V'\setminus S_2]$,
    thus due to \cref{lem:contraction} the latter is acyclic and $S_2$ is a feedback vertex set of $G$.
    Regarding minimality, let $z \in S_2$.
    Since contracting $uv$ in $G[(V \setminus S_2) \cup \{z\}]$
    results in $G'[(V' \setminus S_2) \cup \{z\}]$,
    due to \cref{lem:contraction} it follows that the latter has a cycle.
    Therefore, $S_2$ is a minimal feedback vertex set of $G$.

    If $w \in S_2$, we claim that $S^*_2 = (S_2 \cup \{v\}) \setminus \{w\}$ is a minimal feedback vertex set of $G$.
    Let $F_2 = V' \setminus S_2$ and $F^*_2 = V \setminus S^*_2$.  
    Notice that in $G[F^*_2]$, $u$ is an isolated vertex since $N_G (u) \cap (F \cup U) = \{ v \} \subseteq S^*_2$. 
    Moreover, $G[F^*_2 \setminus \{u\}]$ is acyclic since it is the same as $G'[F_2]$.
    Therefore, $S^*_2$ is a feedback vertex set of $G$.
    It remains to show that $S^*_2$ is minimal.
    Let $z \in S_2 \setminus \braces{w}$. 
    Notice that in $G[F^*_2 \cup \{z\}]$, $u$ has degree at most $1$ due to $z$, 
    therefore it is not contained in any cycle of $G[F^*_2 \cup \{z\}]$. 
    Consequently, $G[F^*_2 \cup \{z\}]$ contains a cycle if and only if
    $G[(F^*_2 \setminus \{u\}) \cup \{z\}]$ contains a cycle as well. 
    However, $G[(F^*_2 \setminus \{u\}) \cup \{z\}]$ has a cycle as it is the same as
    $G'[F_2 \cup \{z\}]$. 
    As for $v$, notice that contracting $uv$ in $G[F^*_2 \cup \{v\}]$ results in $G'[F_2 \cup \{w\}]$.
    Therefore, by \cref{lem:contraction} and the minimality of $S_2$,
    it follows that $G[F^*_2 \cup \{v\}]$ has a cycle and
    thus $S^*_2$ is a minimal feedback vertex set of $G$.
    
    Consequently, $\ammfvs(\mathcal{I}') = \ammfvs(\mathcal{I})$ follows.
    Lastly, we need to show that $\mu(\mathcal{I}') \leq \mu(\mathcal{I})$ in the case where $u,v\in U$.
    Indeed, if $u,v \in U$, then the number of components in $F$ remains unchanged,
    while the number of interesting paths and of good vertices in $S$ does not decrease
    by the contraction of $uv$.
\end{proof}

After exhaustively applying the aforementioned rules,
it holds that $\forall u \in U$, $\deg_{F \cup U} (u) \geq 2$,
thus $G[U]$ is a forest containing trees, 
all the leaves of which have at least one edge to $F$.
Moreover, $F$ is an independent set.
We proceed with a branching strategy that produces instances of {\ammFVS} of reduced measure of progress.
If at some point $\mu(\mathcal{I}) \leq 1$ for some produced instance $\mathcal{I}$,
then \cref{lem:end_of_progress} can be applied.

\subparagraph*{Branching strategy.}
Let $\mathcal{I} = (G, S, F, k)$ be an instance of \ammFVS, on which all of the reduction rules
have been applied exhaustively,
thus (i) $\forall u \in U$, $\deg_{F \cup U} (u) \geq 2$,
and (ii) $F$ is an independent set.

Define $u \in U$ to be an \emph{interesting} vertex if $\deg_{F \cup U} (u) \geq 3$.
As already noted, $G[U]$ is a forest, all the leaves of which have an edge towards $F$,
otherwise Rule $3$ could still be applied.
Consider a root for each tree of $G[U]$.
For some tree $T$, let $v$ be an interesting vertex at maximum distance from the corresponding root,
i.e., $v$ is an interesting vertex of maximum depth.
Notice that such a tree cannot be an interesting path.
We branch on whether $v$ is in the feedback vertex set or not.
Towards this end, let $S' = S \cup \braces{v}$ and $F' = F \cup \braces{v}$,
while $\mathcal{I}_1 = (G, S', F, k)$ and $\mathcal{I}_2 = (G, S, F', k)$.
It holds that $\mathcal{I}$ is a Yes instance if and only if at least one of $\mathcal{I}_1, \mathcal{I}_2$ is a Yes instance,
while if $G[F']$ contains a cycle, $\mathcal{I}_2$ is a No instance and we discard it.
We replace $\mathcal{I}$ with the instances $\mathcal{I}_1$ and $\mathcal{I}_2$.

\begin{lemmarep}
    The branching strategy produces instances of reduced measure of progress,
    without reducing the number of good vertices.
    Additionally, whenever the branching places a vertex on the feedback vertex set,
    this vertex is good.
\end{lemmarep}

\begin{proof}
    Let $\mathcal{I} = (G, S, F, k)$ be an instance of {\ammFVS} and
    $\mathcal{I}_1 = (G, S', F, k)$, $\mathcal{I}_2 = (G, S, F', k)$ the instances produced by the branching strategy,
    where $S' = S \cup \braces{v}$ and $F' = F \cup \braces{v}$ for $v \in U$.
    Moreover, let $g, g_1, g_2$ denote the number of good vertices of each instance respectively.
    Recall that a vertex $s \in S$ is good if $\deg_F (s) \geq 2$ and $\deg_U (s) \leq 1$,
    and notice that $g \leq g_1$ and $g \leq g_2$.
    We assume that none of $\mathcal{I}_1, \mathcal{I}_2$ has been discarded, i.e., $G[F']$ is a forest.
    Notice that then, if $\deg_F(v) \geq 2$,
    it follows that $v$ has at least two neighbors in distinct connected components of $G[F]$.
    We will prove that $\mu(\mathcal{I}_1) < \mu(\mathcal{I})$ and
    $\mu(\mathcal{I}_2) < \mu(\mathcal{I})$.
    To this end, we consider three different cases:
    \begin{romanenumerate}
        \item $\deg_U (v) = 0$ and $\deg_F (v) \geq 3$, i.e.,
        $v$ is an isolated vertex of $G[U]$ with at least $3$ edges to $F$.
        On $\mathcal{I}_1$, it holds that $\mu(\mathcal{I}_1) \leq \mu(\mathcal{I}) - 1$,
        since $g_1 \geq g+1$.
        On the other hand, on $\mathcal{I}_2$, it holds that $\mu(\mathcal{I}_2) \leq \mu(\mathcal{I}) - 2$,
        since $\cc(G[F']) \leq \cc(G[F]) - 2$,
        otherwise $G[F']$ contains a cycle. 
        
        \item $\deg_U (v) = 1$ and $\deg_F (v) \geq 2$.
        On $\mathcal{I}_1$, it holds that $\mu(\mathcal{I}_1) \leq \mu(\mathcal{I}) - 1$,
        since $g_1 \geq g+1$.
        On the other hand, on $\mathcal{I}_2$, it holds that $\mu(\mathcal{I}_2) \leq \mu(\mathcal{I}) - 1$,
        since $\cc(G[F']) \leq \cc(G[F]) - 1$,
        otherwise $G[F']$ contains a cycle.
        As a matter of fact, the number of interesting paths might also increase.
        
        \item Lastly, either (a) $\deg_U (v) = 2$ and $\deg_F (v) \geq 1$,
        or (b) $\deg_U (v) \geq 3$.
        Since $v$ is an interesting vertex of maximum depth,
        for all of its descendants $w$ in its corresponding tree in $G[U]$
        it holds that $\deg_{F \cup U} (w) = 2$.
        On $\mathcal{I}_1$, for any child $u$ of $v$, it holds that $\deg_{V \setminus S'} (u) = 1$.
        In that case, by exhaustively applying Rule 3 and producing an instance $\mathcal{I}^*_1 = (G', S', F^*, k)$,
        it follows that $v$ has an additional edge to $F^*$ for each such child.
        In total, $v$ has at least $2$ edges towards $F^*$ in both (a) and (b),
        either due to its descendants or preexisting edges.
        Consequently, $g_1 \geq g+1$ and $\mu(\mathcal{I}_1) \leq \mu(\mathcal{I}) - 1$.
        Note that the number of interesting paths might also increase in the new instance.

        For $\mathcal{I}_2$, we consider (a) and (b) separately.
        \begin{itemize}
            \item In (a), $\cc(G[F']) \leq \cc(G[F])$ since $v$ has at least $1$ neighbor in $F$, while $p$ is increased by at least $1$.
            Indeed, since $v$ has at least one child $u$ in $U$, while $u$ and its descendants have degree $2$
            in $G[V \setminus S]$,
            it follows that the number of interesting paths is increased by at least 1.
        
            \item In (b), since $v$ does not necessarily have a neighbor in $F$, it holds that $\cc(G[F']) \leq \cc(G[F]) + 1$.
            However, $v$ has at least $2$ children in $U$, all of the descendants of which have degree $2$ in $G[V \setminus S]$,
            therefore the number of interesting paths $p$ is increased by at least $2$.
        \end{itemize}
        Consequently, $\mu(\mathcal{I}_2) \leq \mu(\mathcal{I}) - 1$.
    \end{romanenumerate}
    This completes the proof.
\end{proof}

\subparagraph*{Complexity.}
Starting from an instance $(G, k)$ of \mmFVS,
we produce a minimal feedback vertex set $S_0$ of $G$ in polynomial time.
If $|S_0| \geq k$, we are done.
Alternatively, we produce instances of {\ammFVS}
by guessing the intersection of $S_0$ with some minimal feedback vertex set of $G$ of size at least $k$.
%
Let $\mathcal{I} = (G, S, F, k)$ be one such instance.
It holds that $\mu(\mathcal{I}) \leq k + c$, where $c = \cc(G[F])$, therefore the branching will perform at most $k+c$ steps.
Notice that, at any step of the branching procedure, the number of good vertices never decreases.
Now, consider a path-restricted instance $\mathcal{I}' = (G', S', F', k)$ resulting from branching starting on $\mathcal{I}$,
on which branching, exactly $\ell$ times a vertex was placed in the feedback vertex set, therefore $|S'|-|S| = \ell$.
There are at most $\binom{k+c}{\ell}$ different such instances, each of which has at least $\ell$ good vertices,
thus \cref{thm:path_restricted_algo} requires time at most $3^{k-\ell} n^{\bO(1)}$. 
Since $0 \leq \ell \leq k+c$, and there are at most $\sum_{c=0}^k \binom{k}{c}$ different instances $\mathcal{I}$,
the algorithm runs in time $9.34^k n^{\bO(1)}$, as
\begin{align*}
    \sum_{c = 0}^k \binom{k}{c} \sum_{\ell = 0}^{k + c} \binom{k+c}{\ell} 3^{k-\ell} &=
    3^k \sum_{c = 0}^k \binom{k}{c} \sum_{\ell = 0}^{k + c} \binom{k+c}{\ell} 3^{-\ell} = 3^k \sum_{c = 0}^k \binom{k}{c} \parens*{\frac{4}{3}}^{k+c}\\
    &= 4^k \sum_{c = 0}^k \binom{k}{c} \parens*{\frac{4}{3}}^c = 4^k \parens*{\frac{7}{3}}^k \leq 9.34^k.
\end{align*}

\section{The Extension Problem}\label{sec:extension_fvs}

In this section we consider the following extension problem.

\problemdef{\textsc{Minimal FVS Extension}}
{A graph $G=(V,E)$ and a set $S\subseteq V$.}
{Determine whether there exists $S^* \supseteq S$ such that $S^*$ is
a minimal feedback vertex set of $G$.} 

Observe that this is a special case of \ammFVS,
since we essentially set $F = \varnothing$ and do not care about the size of the produced solution,
albeit with the difference that now we will not focus on the case where $V \setminus S$ is already acyclic.
This extension problem was already shown to be W[1]-hard parameterized by $|S|$ by Casel et al.~\cite{CaselFGMS22}.
One question that was left open, however, was whether it is solvable in polynomial time for fixed $|S|$, that is,
whether it belongs in the class XP.
Superficially, this seems somewhat surprising, because for the
closely related \textsc{Maximum Minimal Vertex Cover} and \textsc{Upper Dominating Set} problems,
membership of the extension problem in XP is almost trivial:
it suffices to guess for each $v \in S$ a private edge or vertex that is only dominated by $v$,
remove from consideration other vertices that dominate this private edge or vertex,
and then attempt to find any feasible solution.
The reason that this strategy does not seem to work for feedback vertex set is that for each $v \in S$
we would have to guess a private cycle.
Since a priori we have no bound on the length of such a cycle,
there is no obvious way to achieve this task in $n^{f(k)}$ time.

Though we do not settle the complexity of the extension problem for fixed $k$,
we provide evidence that obtaining a polynomial time algorithm would be a challenging task,
because it would imply a similar algorithm for the $k$-\textsc{in-a-Tree} problem.
In the latter, we are given a graph $G$ and a set $T$ of $k$ terminals and are asked to find
a set $T^*$ such that $T \subseteq T^*$ and $G[T^*]$ is a tree~\cite{ChudnovskyS10,LaiLT20}.

\begin{theorem}\label{thm:k_in_a_tree}
    {\sc $k$-in-a-Tree} parameterized by $k$ is fpt-reducible to
    {\sc Minimal FVS Extension} parameterized by the size of the given set.
\end{theorem}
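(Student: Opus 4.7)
The plan is to give a simple fpt-reduction that attaches a short ``chain'' of auxiliary vertices to the terminals. Given an instance $(G,T)$ of $k$-\textsc{in-a-Tree} with $T=\{t_1,\dots,t_k\}$, construct $G'$ from $G$ by adding $k-1$ fresh vertices $v_1,\dots,v_{k-1}$, where $v_i$ is adjacent exactly to $t_i$ and $t_{i+1}$. Set $S=\{v_1,\dots,v_{k-1}\}$, so $|S|=k-1$, which yields the desired parameter bound. The output of the reduction is the \textsc{Minimal FVS Extension} instance $(G',S)$.

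For the forward direction, suppose $S^*\supseteq S$ is a minimal feedback vertex set of $G'$. Since $\{v_1,\dots,v_{k-1}\}\subseteq S^*$, we have $V(G')\setminus S^* = V(G)\setminus S^*$, so the forest $G'[V(G')\setminus S^*]$ lives entirely in $G$. By minimality, each $v_i$ admits a private cycle in $G'[(V(G')\setminus S^*)\cup\{v_i\}]$; as $v_i$ has only the two neighbors $t_i,t_{i+1}$, this cycle must contain both, certifying that $t_i$ and $t_{i+1}$ belong to the same connected component of $G[V(G)\setminus S^*]$. Chaining through $i=1,\dots,k-1$ places every terminal in the same component $C$ of this forest, and since each component of a forest is an induced tree, $T^*:=C$ provides an induced tree of $G$ containing $T$.

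For the backward direction, assume there exists $T^*\supseteq T$ with $G[T^*]$ a tree. Begin with $S_0=(V(G)\setminus T^*)\cup\{v_1,\dots,v_{k-1}\}$, which is a feedback vertex set of $G'$ because $G'[V(G')\setminus S_0]=G[T^*]$ is acyclic. Each $v_i$ already has a private cycle in $G'[(V(G')\setminus S_0)\cup\{v_i\}]$, obtained by concatenating $v_i$ with the unique $t_i$-to-$t_{i+1}$ path inside the tree $G[T^*]$. Now iteratively delete from $S_0$ any vertex that still has no private cycle, producing some $S^*\subseteq S_0$. The key observation is that the $v_i$'s are never removed: if we add a vertex $u\in V(G)\setminus T^*$ back to the forest, no existing tree-path is destroyed, so the private cycle of $v_i$ through $G[T^*]$ persists throughout the process. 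Hence $S^*\supseteq\{v_1,\dots,v_{k-1}\}=S$ is a minimal feedback vertex set extending $S$, as required.

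The main technical care lies in the backward direction, namely ensuring the pruning step cannot accidentally drop any $v_i$ from the candidate set; this reduces to the straightforward monotonicity remark above, that placing more vertices into the forest part only preserves the existing $t_i$-$t_{i+1}$ path inside $G[T^*]$. Everything else is a direct parse of the definitions, and the reduction is clearly polynomial-time with parameter $|S|=k-1$, hence is an fpt-reduction. In particular, an $n^{f(|S|)}$ algorithm for \textsc{Minimal FVS Extension} would yield an $n^{f(k-1)}$ algorithm for $k$-\textsc{in-a-Tree}, which is the long-standing open problem referenced in the introduction.
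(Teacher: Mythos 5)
Your proposal is correct and follows essentially the same approach as the paper: the construction (attaching a chain of degree-$2$ vertices $v_i$ between consecutive terminals $t_i, t_{i+1}$ and setting $S$ to be these new vertices) and both directions of the equivalence are identical to the paper's argument, with your backward direction merely spelling out the monotonicity remark (growing the forest preserves the $t_i$-to-$t_{i+1}$ path, so each $v_i$ retains its private cycle) that the paper leaves implicit.
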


\begin{proof}
    Consider an instance $G = (V, E)$ of $k$-\textsc{in-a-Tree}, with terminal set $T$.
    Let $T = \{t_1, \ldots, t_k\}$.
    We add to the graph $k-1$ new vertices, $s_1, \ldots, s_{k-1}$ and connect each $s_i$ to $t_i$ and to
    $t_{i+1}$, for $i \in [k-1]$.
    We set $S = \{s_1, \ldots, s_{k-1}\}$.
    This completes the construction.
    Clearly, this reduction preserves the value of the parameter.
    
    To see correctness, suppose first that a tree $T^* \supseteq T$ exists in $G$.
    We set $S_1 = S \cup (V \setminus T^*)$ in the new graph.
    $S_1$ is a feedback vertex set, because removing it from the graph leaves $T^*$, which is a tree.
    $S_1$ contains $S$. Furthermore, if $S_1$ is not minimal, we greedily remove
    from it arbitrary vertices until we obtain a minimal feedback vertex set $S_2$.
    We claim that $S_2$ must still contain $S$.
    Indeed, each vertex $s_i$, for $i \in [k-1]$, has a private cycle, since its neighbors $t_i,t_{i+1} \in T^*$.
    For the converse direction, if there exists in the new graph a minimal feedback
    vertex set $S^*$ that contains $S$, then the remaining forest $F^* = V \setminus
    S^*$ must contain $T$, since each vertex of $S$ must have a private cycle in
    the forest, and vertices of $S$ have degree $2$. Furthermore, all vertices of
    $T$ must be in the same component of $F^*$, because to obtain a private cycle
    for $s_i$, we must have a path from $t_i$ to $t_{i+1}$ in $F^*$, for all
    $i \in [k-1]$. Therefore, in this case we have found an induced tree in $G$ that
    contains all terminals.
\end{proof}

\section{Conclusions and Open Problems}

We have precisely determined the complexity of {\mmFVS} with respect to
structural parameters from vertex cover to treewidth as being slightly
super-exponential. One natural question to consider would then be to examine if
the same complexity can be achieved when the problem is parameterized by
clique-width. Regarding the complexity of the extension problem for sets of
fixed size $k$, we have shown that this is at least as hard as the well-known
(and wide open) $k$-\textsc{in-a-Tree} problem. Barring a full resolution of
this question, it would also be interesting to ask if the converse reduction
also holds, which would prove that the two problems are actually equivalent.

\bibliography{bibliography}

\begin{thebibliography}{10}

\bibitem{AbouEishaHLMRZ18}
Hassan AbouEisha, Shahid Hussain, Vadim~V. Lozin, J{\'{e}}r{\^{o}}me Monnot, Bernard Ries, and Viktor Zamaraev.
\newblock Upper domination: Towards a dichotomy through boundary properties.
\newblock {\em Algorithmica}, 80(10):2799--2817, 2018.
\newblock \href {https://doi.org/10.1007/s00453-017-0346-9} {\path{doi:10.1007/s00453-017-0346-9}}.

\bibitem{AraujoBCS22}
J{\'{u}}lio Ara{\'{u}}jo, Marin Bougeret, Victor~A. Campos, and Ignasi Sau.
\newblock Introducing lop-kernels: {A} framework for kernelization lower bounds.
\newblock {\em Algorithmica}, 84(11):3365--3406, 2022.
\newblock \href {https://doi.org/10.1007/s00453-022-00979-z} {\path{doi:10.1007/s00453-022-00979-z}}.

\bibitem{AraujoBCS23}
J{\'{u}}lio Ara{\'{u}}jo, Marin Bougeret, Victor~A. Campos, and Ignasi Sau.
\newblock Parameterized complexity of computing maximum minimal blocking and hitting sets.
\newblock {\em Algorithmica}, 85(2):444--491, 2023.
\newblock \href {https://doi.org/10.1007/s00453-022-01036-5} {\path{doi:10.1007/s00453-022-01036-5}}.

\bibitem{BasteST20}
Julien Baste, Ignasi Sau, and Dimitrios~M. Thilikos.
\newblock A complexity dichotomy for hitting connected minors on bounded treewidth graphs: the chair and the banner draw the boundary.
\newblock In {\em Proceedings of the 2020 {ACM-SIAM} Symposium on Discrete Algorithms, {SODA} 2020}, pages 951--970. {SIAM}, 2020.
\newblock \href {https://doi.org/10.1137/1.9781611975994.57} {\path{doi:10.1137/1.9781611975994.57}}.

\bibitem{BazganBCFJKLLMP18}
Cristina Bazgan, Ljiljana Brankovic, Katrin Casel, Henning Fernau, Klaus Jansen, Kim{-}Manuel Klein, Michael Lampis, Mathieu Liedloff, J{\'{e}}r{\^{o}}me Monnot, and Vangelis~Th. Paschos.
\newblock The many facets of upper domination.
\newblock {\em Theor. Comput. Sci.}, 717:2--25, 2018.
\newblock \href {https://doi.org/10.1016/j.tcs.2017.05.042} {\path{doi:10.1016/j.tcs.2017.05.042}}.

\bibitem{BelmonteKLMO22}
R\'{e}my Belmonte, Eun~Jung Kim, Michael Lampis, Valia Mitsou, and Yota Otachi.
\newblock Grundy distinguishes treewidth from pathwidth.
\newblock {\em SIAM Journal on Discrete Mathematics}, 36(3):1761--1787, 2022.
\newblock \href {https://doi.org/10.1137/20M1385779} {\path{doi:10.1137/20M1385779}}.

\bibitem{BergougnouxBBK20}
Benjamin Bergougnoux, {\'{E}}douard Bonnet, Nick Brettell, and O{-}joung Kwon.
\newblock Close relatives of feedback vertex set without single-exponential algorithms parameterized by treewidth.
\newblock In {\em 15th International Symposium on Parameterized and Exact Computation, {IPEC} 2020}, volume 180 of {\em LIPIcs}, pages 3:1--3:17. Schloss Dagstuhl - Leibniz-Zentrum f{\"{u}}r Informatik, 2020.
\newblock \href {https://doi.org/10.4230/LIPIcs.IPEC.2020.3} {\path{doi:10.4230/LIPIcs.IPEC.2020.3}}.

\bibitem{BodlaenderCKN15}
Hans~L. Bodlaender, Marek Cygan, Stefan Kratsch, and Jesper Nederlof.
\newblock Deterministic single exponential time algorithms for connectivity problems parameterized by treewidth.
\newblock {\em Inf. Comput.}, 243:86--111, 2015.
\newblock \href {https://doi.org/10.1016/j.ic.2014.12.008} {\path{doi:10.1016/j.ic.2014.12.008}}.

\bibitem{BonamyKNPSW18}
Marthe Bonamy, Lukasz Kowalik, Jesper Nederlof, Michal Pilipczuk, Arkadiusz Socala, and Marcin Wrochna.
\newblock On directed feedback vertex set parameterized by treewidth.
\newblock In {\em Graph-Theoretic Concepts in Computer Science - 44th International Workshop, {WG} 2018}, volume 11159 of {\em Lecture Notes in Computer Science}, pages 65--78. Springer, 2018.
\newblock \href {https://doi.org/10.1007/978-3-030-00256-5\_6} {\path{doi:10.1007/978-3-030-00256-5\_6}}.

\bibitem{BonnetBKM19}
{\'{E}}douard Bonnet, Nick Brettell, O{-}joung Kwon, and D{\'{a}}niel Marx.
\newblock Generalized feedback vertex set problems on bounded-treewidth graphs: Chordality is the key to single-exponential parameterized algorithms.
\newblock {\em Algorithmica}, 81(10):3890--3935, 2019.
\newblock \href {https://doi.org/10.1007/s00453-019-00579-4} {\path{doi:10.1007/s00453-019-00579-4}}.

\bibitem{BonnetLP18}
{\'{E}}douard Bonnet, Michael Lampis, and Vangelis~Th. Paschos.
\newblock Time-approximation trade-offs for inapproximable problems.
\newblock {\em J. Comput. Syst. Sci.}, 92:171--180, 2018.
\newblock \href {https://doi.org/10.1016/j.jcss.2017.09.009} {\path{doi:10.1016/j.jcss.2017.09.009}}.

\bibitem{dam/BoriaCP15}
Nicolas Boria, Federico~Della Croce, and Vangelis~Th. Paschos.
\newblock On the max min vertex cover problem.
\newblock {\em Discret. Appl. Math.}, 196:62--71, 2015.
\newblock \href {https://doi.org/10.1016/j.dam.2014.06.001} {\path{doi:10.1016/j.dam.2014.06.001}}.

\bibitem{CaselFGMS22}
Katrin Casel, Henning Fernau, Mehdi~Khosravian Ghadikolaei, J{\'{e}}r{\^{o}}me Monnot, and Florian Sikora.
\newblock On the complexity of solution extension of optimization problems.
\newblock {\em Theor. Comput. Sci.}, 904:48--65, 2022.
\newblock \href {https://doi.org/10.1016/j.tcs.2021.10.017} {\path{doi:10.1016/j.tcs.2021.10.017}}.

\bibitem{isaac/ChakrabortyFMT24}
Dipayan Chakraborty, Florent Foucaud, Diptapriyo Majumdar, and Prafullkumar Tale.
\newblock Tight (double) exponential bounds for identification problems: Locating-dominating set and test cover.
\newblock In {\em 35th International Symposium on Algorithms and Computation, {ISAAC} 2024}, volume 322 of {\em LIPIcs}, pages 19:1--19:18. Schloss Dagstuhl - Leibniz-Zentrum f{\"{u}}r Informatik, 2024.
\newblock \href {https://doi.org/10.4230/LIPICS.ISAAC.2024.19} {\path{doi:10.4230/LIPICS.ISAAC.2024.19}}.

\bibitem{ChaudharyMP23}
Juhi Chaudhary, Sounaka Mishra, and B.~S. Panda.
\newblock Minimum maximal acyclic matching in proper interval graphs.
\newblock {\em Discret. Appl. Math.}, 360:414--427, 2025.
\newblock \href {https://doi.org/10.1016/j.dam.2024.10.012} {\path{doi:10.1016/j.dam.2024.10.012}}.

\bibitem{ChudnovskyS10}
Maria Chudnovsky and Paul~D. Seymour.
\newblock The three-in-a-tree problem.
\newblock {\em Comb.}, 30(4):387--417, 2010.
\newblock \href {https://doi.org/10.1007/s00493-010-2334-4} {\path{doi:10.1007/s00493-010-2334-4}}.

\bibitem{CyganFKLMPPS15}
Marek Cygan, Fedor~V. Fomin, Lukasz Kowalik, Daniel Lokshtanov, D{\'{a}}niel Marx, Marcin Pilipczuk, Michal Pilipczuk, and Saket Saurabh.
\newblock {\em Parameterized Algorithms}.
\newblock Springer, 2015.
\newblock \href {https://doi.org/10.1007/978-3-319-21275-3} {\path{doi:10.1007/978-3-319-21275-3}}.

\bibitem{CyganNPPRW22}
Marek Cygan, Jesper Nederlof, Marcin Pilipczuk, Michal Pilipczuk, Johan M.~M. van Rooij, and Jakub~Onufry Wojtaszczyk.
\newblock Solving connectivity problems parameterized by treewidth in single exponential time.
\newblock {\em {ACM} Trans. Algorithms}, 18(2):17:1--17:31, 2022.
\newblock \href {https://doi.org/10.1145/3506707} {\path{doi:10.1145/3506707}}.

\bibitem{Diestel17}
Reinhard Diestel.
\newblock {\em Graph Theory}, volume 173 of {\em Graduate texts in mathematics}.
\newblock Springer, 2017.
\newblock \href {https://doi.org/10.1007/978-3-662-53622-3} {\path{doi:10.1007/978-3-662-53622-3}}.

\bibitem{DuarteEHKKLPSS21}
Gabriel~L. Duarte, Hiroshi Eto, Tesshu Hanaka, Yasuaki Kobayashi, Yusuke Kobayashi, Daniel Lokshtanov, Lehilton L.~C. Pedrosa, Rafael C.~S. Schouery, and U{\'{e}}verton~S. Souza.
\newblock Computing the largest bond and the maximum connected cut of a graph.
\newblock {\em Algorithmica}, 83(5):1421--1458, 2021.
\newblock \href {https://doi.org/10.1007/s00453-020-00789-1} {\path{doi:10.1007/s00453-020-00789-1}}.

\bibitem{jcss/DubloisHGLM22}
Louis Dublois, Tesshu Hanaka, Mehdi {Khosravian Ghadikolaei}, Michael Lampis, and Nikolaos Melissinos.
\newblock (in)approximability of maximum minimal {FVS}.
\newblock {\em J. Comput. Syst. Sci.}, 124:26--40, 2022.
\newblock \href {https://doi.org/10.1016/j.jcss.2021.09.001} {\path{doi:10.1016/j.jcss.2021.09.001}}.

\bibitem{DubloisLP22}
Louis Dublois, Michael Lampis, and Vangelis~Th. Paschos.
\newblock Upper dominating set: Tight algorithms for pathwidth and sub-exponential approximation.
\newblock {\em Theor. Comput. Sci.}, 923:271--291, 2022.
\newblock \href {https://doi.org/10.1016/j.tcs.2022.05.013} {\path{doi:10.1016/j.tcs.2022.05.013}}.

\bibitem{icalp/FoucaudGK0IST24}
Florent Foucaud, Esther Galby, Liana Khazaliya, Shaohua Li, Fionn~Mc Inerney, Roohani Sharma, and Prafullkumar Tale.
\newblock Problems in {NP} can admit double-exponential lower bounds when parameterized by treewidth or vertex cover.
\newblock In {\em 51st International Colloquium on Automata, Languages, and Programming, {ICALP} 2024}, volume 297 of {\em LIPIcs}, pages 66:1--66:19. Schloss Dagstuhl - Leibniz-Zentrum f{\"{u}}r Informatik, 2024.
\newblock \href {https://doi.org/10.4230/LIPICS.ICALP.2024.66} {\path{doi:10.4230/LIPICS.ICALP.2024.66}}.

\bibitem{FuriniLS17}
Fabio Furini, Ivana Ljubic, and Markus Sinnl.
\newblock An effective dynamic programming algorithm for the minimum-cost maximal knapsack packing problem.
\newblock {\em Eur. J. Oper. Res.}, 262(2):438--448, 2017.
\newblock \href {https://doi.org/10.1016/j.ejor.2017.03.061} {\path{doi:10.1016/j.ejor.2017.03.061}}.

\bibitem{arxiv/GaikwadKMST22}
Ajinkya Gaikwad, Hitendra Kumar, Soumen Maity, Saket Saurabh, and Shuvam~Kant Tripathi.
\newblock Maximum minimal feedback vertex set: {A} parameterized perspective, 2022.
\newblock \href {https://arxiv.org/abs/2208.01953} {\path{arXiv:2208.01953}}.

\bibitem{GourvesMP13}
Laurent Gourv{\`{e}}s, J{\'{e}}r{\^{o}}me Monnot, and Aris Pagourtzis.
\newblock The lazy bureaucrat problem with common arrivals and deadlines: Approximation and mechanism design.
\newblock In {\em Fundamentals of Computation Theory - 19th International Symposium, {FCT} 2013}, volume 8070 of {\em Lecture Notes in Computer Science}, pages 171--182. Springer, 2013.
\newblock \href {https://doi.org/10.1007/978-3-642-40164-0\_18} {\path{doi:10.1007/978-3-642-40164-0\_18}}.

\bibitem{books/aw/GKP1994}
Ronald~L. Graham, Donald~E. Knuth, and Oren Patashnik.
\newblock {\em Concrete Mathematics: {A} Foundation for Computer Science, 2nd Ed}.
\newblock Addison-Wesley, 1994.

\bibitem{HanakaKKY21}
Tesshu Hanaka, Yasuaki Kobayashi, Yusuke Kobayashi, and Tsuyoshi Yagita.
\newblock Finding a maximum minimal separator: Graph classes and fixed-parameter tractability.
\newblock {\em Theor. Comput. Sci.}, 865:131--140, 2021.
\newblock \href {https://doi.org/10.1016/j.tcs.2021.03.006} {\path{doi:10.1016/j.tcs.2021.03.006}}.

\bibitem{HarutyunyanLM21}
Ararat Harutyunyan, Michael Lampis, and Nikolaos Melissinos.
\newblock Digraph coloring and distance to acyclicity.
\newblock {\em Theory Comput. Syst.}, 68(4):986--1013, 2024.
\newblock \href {https://doi.org/10.1007/S00224-022-10103-X} {\path{doi:10.1007/S00224-022-10103-X}}.

\bibitem{jcss/ImpagliazzoPZ01}
Russell Impagliazzo, Ramamohan Paturi, and Francis Zane.
\newblock Which problems have strongly exponential complexity?
\newblock {\em J. Comput. Syst. Sci.}, 63(4):512--530, 2001.
\newblock \href {https://doi.org/10.1006/jcss.2001.1774} {\path{doi:10.1006/jcss.2001.1774}}.

\bibitem{LaiLT20}
Kai{-}Yuan Lai, Hsueh{-}I Lu, and Mikkel Thorup.
\newblock Three-in-a-tree in near linear time.
\newblock In {\em Proceedings of the 52nd Annual {ACM} {SIGACT} Symposium on Theory of Computing, {STOC} 2020}, pages 1279--1292. {ACM}, 2020.
\newblock \href {https://doi.org/10.1145/3357713.3384235} {\path{doi:10.1145/3357713.3384235}}.

\bibitem{Lampis21}
Michael Lampis.
\newblock Minimum stable cut and treewidth.
\newblock In {\em 48th International Colloquium on Automata, Languages, and Programming, {ICALP} 2021}, volume 198 of {\em LIPIcs}, pages 92:1--92:16. Schloss Dagstuhl - Leibniz-Zentrum f{\"{u}}r Informatik, 2021.
\newblock \href {https://doi.org/10.4230/LIPIcs.ICALP.2021.92} {\path{doi:10.4230/LIPIcs.ICALP.2021.92}}.

\bibitem{mfcs/LampisMV23}
Michael Lampis, Nikolaos Melissinos, and Manolis Vasilakis.
\newblock Parameterized max min feedback vertex set.
\newblock In {\em 48th International Symposium on Mathematical Foundations of Computer Science, {MFCS} 2023}, volume 272 of {\em LIPIcs}, pages 62:1--62:15. Schloss Dagstuhl - Leibniz-Zentrum f{\"{u}}r Informatik, 2023.
\newblock \href {https://doi.org/10.4230/LIPIcs.MFCS.2023.62} {\path{doi:10.4230/LIPIcs.MFCS.2023.62}}.

\bibitem{LokshtanovMS18}
Daniel Lokshtanov, D{\'{a}}niel Marx, and Saket Saurabh.
\newblock Slightly superexponential parameterized problems.
\newblock {\em {SIAM} J. Comput.}, 47(3):675--702, 2018.
\newblock \href {https://doi.org/10.1137/16M1104834} {\path{doi:10.1137/16M1104834}}.

\bibitem{MishraS01}
Sounaka Mishra and Kripasindhu Sikdar.
\newblock On the hardness of approximating some np-optimization problems related to minimum linear ordering problem.
\newblock {\em {RAIRO} Theor. Informatics Appl.}, 35(3):287--309, 2001.
\newblock \href {https://doi.org/10.1051/ita:2001121} {\path{doi:10.1051/ita:2001121}}.

\bibitem{Pilipczuk11}
Michal Pilipczuk.
\newblock Problems parameterized by treewidth tractable in single exponential time: {A} logical approach.
\newblock In {\em Mathematical Foundations of Computer Science 2011 - 36th International Symposium, {MFCS} 2011,}, volume 6907 of {\em Lecture Notes in Computer Science}, pages 520--531. Springer, 2011.
\newblock \href {https://doi.org/10.1007/978-3-642-22993-0\_47} {\path{doi:10.1007/978-3-642-22993-0\_47}}.

\bibitem{siamdm/Zehavi17}
Meirav Zehavi.
\newblock Maximum minimal vertex cover parameterized by vertex cover.
\newblock {\em {SIAM} J. Discret. Math.}, 31(4):2440--2456, 2017.
\newblock \href {https://doi.org/10.1137/16M109017X} {\path{doi:10.1137/16M109017X}}.

\end{thebibliography}

\end{document}